\title{A Fundamental Solution to the Schrödinger Equation with Doss Potentials and its Smoothness}
\author{M. Grothaus and F. Riemann}
\date{March 17, 2015}
\theoremstyle{plain}
\newtheorem{theorem}{Theorem}[section]
\newtheorem{proposition}[theorem]{Proposition}
\newtheorem{lemma}[theorem]{Lemma}
\newtheorem{corollary}[theorem]{Corollary}
\theoremstyle{definition}
\newtheorem{example}[theorem]{Example}
\newtheorem{assumption}[theorem]{Assumption}
\newtheorem{remark}[theorem]{Remark}
\newcommand{\eps}{\varepsilon}
\newcommand{\C}{\mathbb{C}}
\newcommand{\R}{\mathbb{R}}
\newcommand{\N}{\mathbb{N}}
\newcommand{\cG}{\mathcal{G}}
\newcommand{\cP}{\mathcal{P}}
\newcommand{\rme}{\mathrm{e}}
\newcommand{\rmi}{\mathrm{i}}
\newcommand{\rmd}{\mathrm{d}}
\newcommand{\la}{\langle}
\newcommand{\ra}{\rangle}
\newcommand{\lla}{\la\!\la}
\newcommand{\rra}{\ra\!\ra}
\newcommand{\lv}{\lvert}
\newcommand{\rv}{\rvert}
\newcommand{\lV}{\lVert}
\newcommand{\rV}{\rVert}
\newcommand{\wh}{\widehat}
\newcommand{\ot}{\otimes}
\newcommand{\otk}{{\ot k}}
\newcommand{\otn}{{\ot n}}
\newcommand{\otm}{{\ot m}}
\newcommand{\wot}{\mathbin{\wh{\ot}}}
\newcommand{\wotn}{{\wot n}}
\newcommand{\wick}[2][n]{\mathopen{:}#2^{\ot #1}\mathclose{:}}
\newcommand{\wicko}[1][n]{\wick[#1]{\omega}}
\newcommand{\wickc}[1][n]{\mathopen{:}\,\cdot\,^{\ot #1}\mathclose{:}}
\newcommand{\indi}{\mathbbm{1}}
\DeclareMathOperator{\Id}{Id}
\DeclareMathOperator{\E}{\mathbb{E}}
\DeclareMathOperator{\spann}{span}
\DeclareMathOperator{\Real}{Re}
\DeclareMathOperator{\Imag}{Im}
\begin{document}

\maketitle

\begin{abstract}
\noindent
We construct a fundamental solution to the Schrödinger equation for a class of potentials of polynomial type by a complex scaling approach as in~\cite{Doss1980}. The solution is given as the generalized expectation of a white noise distribution. Moreover, we obtain an explicit formula as the expectation of a function of Brownian motion. This allows to show its differentiability in the classical sense. The admissible potentials may grow super-quadratically, thus by a result from~\cite{Yajima1996} the solution does not belong to the self-adjoint extension of the Hamiltonian.
\end{abstract}

\begin{section}{Introduction}

We consider the one-dimensional Schrödinger equation
\begin{align}\label{eq:schroedinger-equation}
\rmi\partial_tu(t,x) & = -\frac{1}{2}\Delta u(t,x)+V(x)u(t,x),\quad t>0,\,x\in\R,\\
u(0,x) & = u_0(x),\notag
\end{align}
with normalized mass and reduced Planck's constant, initial condition $u_0$ and a potential $V$ of a certain polynomial type specified below. Here $\Delta$ denotes the Laplace operator with respect to $x$.

For the free Schrödinger equation it is well-known that a fundamental solution is given by the free particle propagator
\begin{equation}\label{eq:free-propagator}
K_0(t,x;y):=\frac{1}{\sqrt{2\pi\rmi t}}\exp\Bigl(-\frac{1}{2\rmi t}(x-y)^2\Bigr)
\end{equation}
for $t>0$ and $x,y\in\C$. When $V\not=0$, various approaches by means of path integrals, see e.g.~\cite{DMMN79,Kleinert2009} and references therein, have been used in the past to give rigorous definitions of solutions for~\eqref{eq:schroedinger-equation}. For example via oscillatory integrals~\cite{AHKM08} or path integration in phase space~\cite{DMMN77,KD82}. The Feynman path integrand was constructed in~\cite{HS83a,HS83b} as a white noise distribution, and this technique was developed further in several ways, see e.g.~\cite{SS04,BG13} and the references therein. In~\cite{Doss1980} a complex scaling of the Feynman--Kac formula was employed.

The complex scaling method naturally involves analyticity conditions, and it was shown in~\cite{Doss1980} that for a certain class of analytic initial conditions $u_0\colon\C\to\C$ and potentials $V\colon\C\to\C$ a solution to~\eqref{eq:schroedinger-equation} is given by
\begin{equation}\label{eq:doss-solution}
(0,\infty)\times\R\ni(t,x)\longmapsto\E\biggl[u_0\bigl(x+\sqrt{\rmi}B_t\bigr)\exp\biggl(-\rmi\int_0^tV\bigl(x+\sqrt{\rmi}B_r\bigr)\,\rmd r\biggr)\biggr]\in\C,
\end{equation}
where $(B_t)_{t\ge 0}$ is a standard Brownian motion.

We are interested in finding a solution to~\eqref{eq:schroedinger-equation} for a more general class of initial conditions which also covers non-analytic states. To this end, we first seek a fundamental solution to~\eqref{eq:schroedinger-equation}, i.e.\ a function $(t,x,y)\mapsto K_V(t,x;y)$ which satisfies~\eqref{eq:schroedinger-equation} for fixed $y\in\R$ and whose initial condition is the Dirac delta distribution $\delta_x$, i.e.\ $\lim_{t\downarrow 0}K_V(t,x;y)=\delta_x(y)$ in the distributional sense. Replacing $u_0$ with $\delta_y$ in~\eqref{eq:doss-solution} we obtain the informal expression
\begin{equation}\label{eq:def-KV}
K_V(t,x;y)=\E\biggl[\delta_y\bigl(x+\sqrt{\rmi}B_t\bigr)\exp\biggl(-\rmi\int_0^tV\bigl(x+\sqrt{\rmi}B_r\bigr)\,\rmd r\biggr)\biggr].
\end{equation}
Such expressions have also been studied in~\cite{Westerkamp1995,Vogel2010,GSV12,GRS14}. We give a mathematically rigorous construction of~\eqref{eq:def-KV} and even of its integrand.

For $t>0$ and $x,y\in\R$ the object $\delta_y(x+\sqrt{\rmi}B_t)$ exists as a well-known white noise distribution and is called Donsker's delta, see e.g.~\cite{PT95}. However, existence of the product of Donsker's delta with the exponential term in~\eqref{eq:def-KV} has to be justified. A priori, products with Donsker's delta are only defined for a special class of testfunctions. Substantial progress for giving sense to this product was made in~\cite{Vogel2010}. There a representation using the Wick product and the translation and projection operator was provided. It can be seen that extensions of the multiplication with Donsker's delta are closely related to extensions of these operators. In Section~\ref{sec:whitenoiseanalysis} we realize unique extensions which allow to multiply Donsker's delta with a class of functions which depend on finitely many time points of Brownian motion. This approach was successfully used in~\cite{GRS14} to construct the path integral for the Edwards model of an electron in a random system of dense and weakly coupled scatterers.

Assuming that $V$ is of a certain polynomial type, it is possible to define
\begin{multline}\label{eq:riemann-approximation-limit}
\delta_y\bigl(x+\sqrt{\rmi}B_t\bigr)\exp\biggl(-\rmi\int_0^tV\bigl(x+\sqrt{\rmi}B_r\bigr)\,\rmd r\biggr)\\
:=\lim_{n\to\infty}\delta_y\bigl(x+\sqrt{\rmi}B_t\bigr)\exp\biggl(-\rmi\sum_{[r,s]\in Q_n}(s-r)V\bigl(x+\sqrt{\rmi}B_r\bigr)\biggr)
\end{multline}
as a white noise distribution, where $(Q_n)_{n\in\N}$ is a sequence of partitions of the interval $[0,t]$ whose mesh converges to zero, i.e.\ we approximate the Riemann-Integral by Riemann sums and~\eqref{eq:riemann-approximation-limit} does not depend on the particular choice of $(Q_n)_{n\in\N}$. Applying the generalized expectation of white noise distributions to this object defines~\eqref{eq:def-KV}. In the end we obtain the explicit probabilistic representation
\begin{equation}\label{eq:KV-representation-intro}
K_V(t,x;y)=K_0(t,x;y)\cdot\E\biggl[\exp\biggl(-\rmi\int_0^tV\Bigl(y+\frac{r}{t}(x-y)+\sqrt{\rmi}B_r-\frac{r}{t}\sqrt{\rmi}B_t\Bigr)\,\rmd r\biggr)\biggr].
\end{equation}
We show that $K_V$ solves the Schrödinger equation for fixed $y\in\R$ with initial condition $\lim_{t\downarrow 0}K_V(t,x;y)=\delta_x(y)$. To consider some initial state $u_0$ we define
\begin{equation}\label{eq:def-solution-via-integral}
u(t,x):=\int_\R K_V(t,x;y)u_0(y)\,\rmd y
\end{equation}
for $t>0$, $x\in\R$, and prove that $u$ solves~\eqref{eq:schroedinger-equation} with $\lim_{t\downarrow 0}u(t,x)=u_0(x)$ whenever $u_0\colon\R\to\C$ is compactly supported and twice differentiable. This enlarges the class of admissible initial conditions from~\cite{Doss1980}, since the only analytic and compactly supported function is constantly zero.

The class of potentials contains super-quadratic polynomials such as $V(x)=x^6$. A general result from~\cite{Yajima1996} states that for such super-quadratic potentials the fundamental solution to~\eqref{eq:schroedinger-equation}, obtained via the unique self-adjoint extension of the Hamiltonian, cannot be classically differentiable. However, we are able to show existence and continuity of the partial derivatives of $K_V$, thus the fundamental solution obtained here does not belong to this self-adjoint extension. This shows that the path integral approach may lead to solutions smoother than those obtained via the semigroup approach.

Let us summarize the results obtained in the present article.
\begin{enumerate}
\item We construct the fundamental solution $K_V$ for a class of polynomial potentials specified in Assumption~\ref{ass:potential} below, see Theorem~\ref{thm:KV-construction}.
\item We rigorously derive the explicit formula~\eqref{eq:KV-representation-intro} for $K_V$ in terms of the expectation of a function of Brownian motion. This formula might be useful for numerical simulations of the fundamental solution.
\item Formula~\eqref{eq:KV-representation-intro} allows to show smoothness of $K_V$, see Corollary~\ref{cor:KV-is-differentiable}. In combination with~\cite{Yajima1996} this proves that $K_V$ does not coincide with the solution constructed via unitary groups of operators.
\item It is verified that $K_V$ is a fundamental solution to the Schrödinger equation, see Theorem~\ref{thm:KV-is-fundamental-solution}.
\item The constructed fundamental solution enables us to treat initial states with compact support by the ansatz in~\eqref{eq:def-solution-via-integral}, see Theorem~\ref{thm:u-solves-schroedinger-equation}. This is of interest from the physical point of view and goes beyond the results of~\cite{Doss1980}.
\end{enumerate}

\end{section}

\begin{section}{Class of Potentials}

We fix some notations used in this article. For topological spaces $X$ and $Y$ we denote the set of continuous functions from $X$ to $Y$ by $C(X;Y)$. For any set $M$ and a bounded function $f\colon M\to\C$ we let $\lV f\rV_\infty:=\sup_{m\in M}\lv f(m)\rv$. For a Brownian motion $(B_t)_{t\ge 0}$ and $T\ge 0$ we define the random variable $\lV B\rV_T:=\sup_{t\in[0,T]}\lv B_t\rv$.

\begin{assumption}\label{ass:potential}
We assume that the potential $V$ is a polynomial of the following type:
\[ \C\ni z\longmapsto V(z)=(-\rmi)^{N+1}a_{2N}z^{2N}+\sum_{k=0}^{2N-1}a_kz^k\in\C \]
for some $N\in\N$, $a_{2N}>0$ and $a_0,\dotsc,a_{2N-1}\in\C$. Such polynomials have also been considered in~\cite{Doss1980,Vogel2010,GSV12}.
\end{assumption}

This class contains the complex polynomials
\begin{align*}
z & \mapsto -z^2, & z & \mapsto \rmi z^4, & z & \mapsto z^6, & z & \mapsto -\rmi z^8,\\
z & \mapsto -z^{10}, & z & \mapsto \rmi z^{12}, & z & \mapsto z^{14}, & z & \mapsto -\rmi z^{16},
\end{align*}
and so on. Note that the polynomials from this class which fulfill $V(\R)\subset\R$ must satisfy $N\in2\N+1$.

\begin{lemma}\label{lem:boundedness-of-exponential-term}
For all $R\in[0,\infty)$ there exists some constant $c\in[0,\infty)$ such that
\[ \sup_{\substack{f\in C([0,t],\C)\\\lV f\rV_\infty\le R}}\,\sup_{g\in C([0,t],\R)}\biggl\lv\exp\biggl(-\rmi\int_0^tV\bigl(f(r)+\sqrt{\rmi}g(r)\bigr)\,\rmd r\biggr)\biggr\rv\le\rme^{ct} \]
for all $t\ge 0$.
\end{lemma}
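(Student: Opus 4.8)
The plan is to reduce the claim to an elementary pointwise estimate on the integrand, using that only the imaginary part of $V$ enters the modulus. Since $\lv\rme^{-\rmi w}\rv=\rme^{\Imag w}$ for every $w\in\C$, and since $r\mapsto V\bigl(f(r)+\sqrt{\rmi}g(r)\bigr)$ is continuous (a polynomial composed with continuous functions), one has
\[
\biggl\lv\exp\biggl(-\rmi\int_0^tV\bigl(f(r)+\sqrt{\rmi}g(r)\bigr)\,\rmd r\biggr)\biggr\rv=\exp\biggl(\int_0^t\Imag V\bigl(f(r)+\sqrt{\rmi}g(r)\bigr)\,\rmd r\biggr).
\]
Hence it suffices to produce, for each $R\in[0,\infty)$, a constant $c_0=c_0(R)\in\R$ with $\Imag V\bigl(a+\sqrt{\rmi}b\bigr)\le c_0$ for all $a\in\C$ with $\lv a\rv\le R$ and all $b\in\R$; the lemma then follows with $c:=\max\{c_0,0\}$ after integrating over $[0,t]$.

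For the pointwise estimate I would exploit the particular normalization of the leading coefficient in Assumption~\ref{ass:potential}. Expanding $V\bigl(a+\sqrt{\rmi}b\bigr)$ by the binomial theorem, the top-degree-in-$b$ contribution comes solely from the leading monomial and equals $(-\rmi)^{N+1}a_{2N}(\sqrt{\rmi})^{2N}b^{2N}$. Since $(\sqrt{\rmi})^{2N}=\rmi^{N}$ and $(-\rmi)^{N+1}\rmi^{N}=(-1)^{N+1}\rmi^{2N+1}=-\rmi$, this contribution is $-\rmi a_{2N}b^{2N}$, which is purely imaginary with $\Imag\bigl(-\rmi a_{2N}b^{2N}\bigr)=-a_{2N}b^{2N}\le0$ because $a_{2N}>0$ and $b\in\R$. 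All remaining terms — the lower-order terms of the binomial expansion of the leading monomial together with $\sum_{k=0}^{2N-1}a_k(a+\sqrt{\rmi}b)^{k}$ — form a polynomial in $b$ of degree at most $2N-1$ whose coefficients are bounded in modulus by some $\Gamma=\Gamma(R)$ depending only on $R$, $N$ and $a_0,\dotsc,a_{2N}$ (here one uses $\lv a\rv\le R$). Therefore
\[
\Imag V\bigl(a+\sqrt{\rmi}b\bigr)\le-a_{2N}b^{2N}+\Gamma\sum_{j=0}^{2N-1}\lv b\rv^{j}\qquad\text{for all }\lv a\rv\le R,\ b\in\R.
\]

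It then remains to note that the right-hand side, as a function of $b\in\R$, is continuous and tends to $-\infty$ as $\lv b\rv\to\infty$ (the negative $b^{2N}$ term dominates), hence attains a finite maximum; calling it $c_0(R)$ yields the desired uniform bound, and integrating over $[0,t]$ gives $\rme^{c_0(R)t}\le\rme^{ct}$ with $c:=\max\{c_0(R),0\}$. There is no genuine obstacle here: the whole content of the lemma is the sign identity $(-\rmi)^{N+1}(\sqrt{\rmi})^{2N}=-\rmi$, which is precisely why the leading coefficient is chosen as $(-\rmi)^{N+1}a_{2N}$ with $a_{2N}>0$ — it forces the dominant term of $\Imag V$ along the line $\{a+\sqrt{\rmi}b:b\in\R\}$ to be non-positive, so that the oscillatory factor is controlled by $\rme^{ct}$ uniformly in $g$ and, on bounded sets, in $f$. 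The only mild care required is tracking the $R$-dependence of the subleading coefficients, which is routine since $f$ ranges over a ball of radius $R$.
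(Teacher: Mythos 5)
Your proposal is correct and follows essentially the same route as the paper: reduce the modulus to $\exp\bigl(\int_0^t\Imag V(f(r)+\sqrt{\rmi}g(r))\,\rmd r\bigr)$, isolate via binomial expansion the leading term $-\rmi a_{2N}b^{2N}$ (whose imaginary part $-a_{2N}b^{2N}$ is non-positive), bound the subleading coefficients uniformly over $\lv a\rv\le R$, and take the finite supremum over $b\in\R$ of the resulting dominated expression. The only difference is cosmetic: you spell out the sign identity $(-\rmi)^{N+1}(\sqrt{\rmi})^{2N}=-\rmi$ that the paper leaves as ``obvious.''
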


\begin{proof}
By binomial expansion it is obvious that
\[ V\bigl(z+\sqrt{\rmi}w\bigr)=-\rmi a_{2N}w^{2N}+\sum_{k=0}^{2N-1}p_k(z)w^k \]
for some polynomials $p_0,\dotsc,p_{2N-1}$ and all $z,w\in\C$. Define
\[ c':=\max_{k=0,\dotsc,2N-1}\max_{\lv z\rv\le R}\lv p_k(z)\rv\in[0,\infty). \]
By domination of the leading power we have
\[ c:=\sup_{w\in\R}\biggl(-a_{2N}w^{2N}+c'\sum_{k=0}^{2N-1}\lv w\rv^k\biggr)\in[0,\infty). \]
Hence for all $z\in\C$ with $\lv z\rv\le R$ and $w\in\R$ it holds
\[ \Imag V\bigl(z+\sqrt{\rmi}w\bigr)=-a_{2N}w^{2N}+\Imag\sum_{k=0}^{2N-1}p_k(z)w^k\le-a_{2N}w^{2N}+c'\sum_{k=0}^{2N-1}\lv w\rv^k\le c. \]
It follows
\[ \biggl\lv\exp\biggl(-\rmi\int_0^tV\bigl(f(r)+\sqrt{\rmi}g(r)\bigr)\,\rmd r\biggr)\biggr\rv=\exp\biggl(\int_0^t\Imag V\bigl(f(r)+\sqrt{\rmi}g(r)\bigr)\,\rmd r\biggr)\le\rme^{ct} \]
for all $t\ge 0$, $f\in C([0,t],\C)$ with $\lV f\rV_\infty\le R$ and $g\in C([0,t],\R)$.
\end{proof}

The following is easy to show, using binomial expansion.

\begin{lemma}\label{lem:uniform-polynomial-bound}
Let $A\subset\C$ be bounded and $P\colon\C\to\C$ be an arbitrary polynomial. Then there exists $C\in\R$ such that for all $b\in\C$ it holds
\[ \sup_{a\in A}\lv P(a+b)\rv\le C\sum_{k=0}^{\deg P}\lv b\rv^k. \]
\end{lemma}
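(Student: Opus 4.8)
The plan is to reduce the claim to the single-variable case by writing $a + b$ and expanding $P(a+b)$ as a polynomial in the pair $(a, b)$. Concretely, let $d := \deg P$ and write $P(z) = \sum_{j=0}^{d} c_j z^j$ with $c_j \in \C$. Then for any $a, b \in \C$ the binomial theorem gives
\[
P(a+b) = \sum_{j=0}^{d} c_j (a+b)^j = \sum_{j=0}^{d} c_j \sum_{\ell=0}^{j} \binom{j}{\ell} a^{j-\ell} b^\ell = \sum_{\ell=0}^{d} b^\ell \underbrace{\Biggl( \sum_{j=\ell}^{d} c_j \binom{j}{\ell} a^{j-\ell} \Biggr)}_{=:Q_\ell(a)},
\]
where each $Q_\ell$ is itself a polynomial in $a$. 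The interchange of the two finite sums is just a reindexing, so no convergence issue arises.

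Next I would estimate each coefficient uniformly over $a \in A$. Since $A$ is bounded, set $M := \sup_{a \in A} \lv a\rv < \infty$. Each $Q_\ell$ is a polynomial, hence continuous, so $C_\ell := \sup_{a \in A} \lv Q_\ell(a)\rv \le \sup_{\lv a\rv \le M} \lv Q_\ell(a)\rv < \infty$ (the supremum of a continuous function over a bounded set is finite; one may pass to the closed ball $\{\lv a\rv \le M\}$, which is compact, to be safe). Then by the triangle inequality, for every $b \in \C$,
\[
\sup_{a \in A} \lv P(a+b)\rv \le \sum_{\ell=0}^{d} \Bigl( \sup_{a \in A} \lv Q_\ell(a)\rv \Bigr) \lv b\rv^\ell \le \Bigl( \max_{0 \le \ell \le d} C_\ell \Bigr) \sum_{\ell=0}^{d} \lv b\rv^\ell.
\]
Taking $C := \max_{0 \le \ell \le d} C_\ell$ (which is finite and independent of $b$) yields exactly the asserted bound with $\deg P = d$.

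There is essentially no obstacle here: the only mild points are (i) making sure the reindexing of the double sum is carried out correctly so that the $b$-powers are collected as coefficients of genuine polynomials in $a$, and (ii) noting that finiteness of the $C_\ell$ uses only that $A$ is bounded and polynomials are continuous (passing to the closure of $A$, or its containing closed ball, makes the supremum a max). Everything else is the triangle inequality and pulling the constant $\max_\ell C_\ell$ out of the finite sum. This is why the paper flags it as "easy to show, using binomial expansion."
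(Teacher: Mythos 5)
Your proof is correct and follows exactly the binomial-expansion argument the paper has in mind (the paper omits the proof, stating only that it is ``easy to show, using binomial expansion''). Collecting the powers of $b$ with polynomial coefficients $Q_\ell(a)$, bounding these uniformly on the closed ball containing $A$, and taking the maximum constant is precisely the intended route; no gaps.
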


For a proof of the following fact see for example~\cite[Lem.~1]{Doss1980}.

\begin{lemma}\label{lem:doss-expectation-estimate}
For a polynomial $P\colon\C\to\C$ and $0\le T<\infty$ it holds $P(\lV B\rV_T)\in L^1(\mu)$.
\end{lemma}

From the free case $V\equiv 0$ of~\cite[Thm.~3]{Doss1980} one obtains

\begin{theorem}\label{thm:freedoss}
Let $0<T<\infty$ and $u_0\colon\C\to\C$ be analytic such that for all $z\in\C$ the random variables
\[ \sup_{t\in[0,T]}\bigl\lv u_0\bigl(z+\sqrt{\rmi}B_t\bigr)\bigr\rv\quad\text{and}\quad\sup_{t\in[0,T]}\bigl\lv\Delta u_0\bigl(z+\sqrt{\rmi}B_t\bigr)\bigr\rv \]
are integrable. Furthermore, assume that $\C\ni z\mapsto\E\bigl[u_0\bigl(z+\sqrt{\rmi}B_t\bigr)\bigr]\in\C$ is analytic for all $t\in[0,T]$. Then \[ [0,T]\times\C\ni(t,z)\longmapsto u(t,z):=\E\bigl[u_0\bigl(z+\sqrt{\rmi}B_t\bigr)\bigr]\in\C \]
solves the Schrödinger equation~\eqref{eq:schroedinger-equation} in $(t,x)\in[0,T]\times\R$ for $V\equiv 0$ with initial condition $u(0,\cdot)=u_0$.
\end{theorem}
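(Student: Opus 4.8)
The plan is to turn the defining expectation into an integral equation in $t$ via Itô's formula, and then to identify the resulting time derivative with $\tfrac{\rmi}{2}\Delta u$ by differentiating under the expectation, which is licensed by the analyticity hypotheses. Fix $z\in\C$ and $t\in[0,T]$. Since $u_0$ is entire, the map $x\mapsto u_0(z+\sqrt{\rmi}x)$ is smooth, so applying the classical Itô formula to its real and imaginary parts (and using $(\sqrt{\rmi})^2=\rmi$) gives
\[ u_0\bigl(z+\sqrt{\rmi}B_t\bigr)=u_0(z)+\sqrt{\rmi}\int_0^tu_0'\bigl(z+\sqrt{\rmi}B_r\bigr)\,\rmd B_r+\frac{\rmi}{2}\int_0^tu_0''\bigl(z+\sqrt{\rmi}B_r\bigr)\,\rmd r. \]
To handle the stochastic integral I would stop at the exit time $\tau_K:=\inf\{r\ge 0:\lv B_r\rv\ge K\}$; on $[0,t\wedge\tau_K]$ the integrand is bounded, hence the stopped stochastic integral is a true martingale with vanishing expectation. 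Taking expectations of the stopped identity and letting $K\to\infty$, the boundary term $\E[u_0(z+\sqrt{\rmi}B_{t\wedge\tau_K})]$ and the drift term $\E[\int_0^{t\wedge\tau_K}u_0''(z+\sqrt{\rmi}B_r)\,\rmd r]$ converge to $\E[u_0(z+\sqrt{\rmi}B_t)]$ and $\E[\int_0^tu_0''(z+\sqrt{\rmi}B_r)\,\rmd r]$ by dominated convergence, the dominating functions being $\sup_{r\in[0,T]}\lv u_0(z+\sqrt{\rmi}B_r)\rv$ and $T\sup_{r\in[0,T]}\lv\Delta u_0(z+\sqrt{\rmi}B_r)\rv$, which are integrable by assumption; here one uses $\tau_K\to\infty$ and the a.s.\ continuity of $B$. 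A further Fubini step (again dominated by $\sup_{r\in[0,T]}\lv\Delta u_0(z+\sqrt{\rmi}B_r)\rv$) yields
\[ u(t,z)=u_0(z)+\frac{\rmi}{2}\int_0^t\E\bigl[u_0''\bigl(z+\sqrt{\rmi}B_r\bigr)\bigr]\,\rmd r. \]

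From this representation the time regularity follows: $r\mapsto\E[u_0''(z+\sqrt{\rmi}B_r)]$ is continuous on $[0,T]$ (dominated convergence together with path continuity of $B$), so $t\mapsto u(t,z)$ is $C^1$ with $\partial_tu(t,z)=\tfrac{\rmi}{2}\E[u_0''(z+\sqrt{\rmi}B_t)]$. It then remains to show $\E[u_0''(z+\sqrt{\rmi}B_t)]=\Delta u(t,z)$; since $\Delta$ here is the one-dimensional Laplacian in the real variable and both $u_0$ and $z\mapsto u(t,z)$ are analytic, $\Delta$ agrees with $\partial_z^2$, so this is the statement that one may differentiate twice under the expectation. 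For this I would invoke the Cauchy integral formula: for a small circle around $z$,
\[ u_0''(w)=\frac{2!}{2\pi\rmi}\oint_{\lv\zeta-w\rv=\rho}\frac{u_0(\zeta)}{(\zeta-w)^3}\,\rmd\zeta,\qquad \partial_z^2u(t,z)=\frac{2!}{2\pi\rmi}\oint_{\lv\zeta-z\rv=\rho}\frac{u(t,\zeta)}{(\zeta-z)^3}\,\rmd\zeta, \]
and both $\E[u_0''(z+\sqrt{\rmi}B_t)]$ and $\partial_z^2u(t,z)$ are seen, after interchanging expectation with the contour integral, to equal $\E\bigl[\tfrac{2!}{2\pi\rmi}\oint_{\lv\zeta-z\rv=\rho}\frac{u_0(\zeta+\sqrt{\rmi}B_t)}{(\zeta-z)^3}\,\rmd\zeta\bigr]$. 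Combining the identities gives $\rmi\partial_tu(t,z)=-\tfrac12\Delta u(t,z)$ for every $z\in\C$, in particular for $z=x\in\R$, which is~\eqref{eq:schroedinger-equation} with $V\equiv 0$. The initial condition is immediate, since $B_0=0$ forces $u(0,z)=\E[u_0(z)]=u_0(z)$, and continuity of $t\mapsto u(t,z)$ at $t=0$ follows by dominated convergence as above.

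The step I expect to be the main obstacle is the Fubini interchange in the Cauchy representation of $\partial_z^2u$: it requires the integrability of $\sup_{r\in[0,T]}\lv u_0(\zeta+\sqrt{\rmi}B_r)\rv$ \emph{uniformly} for $\zeta$ on a small circle around $z$, not merely for the single point $z$. This should be recovered by enclosing that circle in a disc and using the maximum modulus principle for the entire function $u_0$ to dominate the supremum over the disc by suprema over finitely many fixed shifts of the Brownian path, each of which is covered by the hypothesis applied at the corresponding point. The localisation in the Itô step is routine, but should be arranged so that only the assumed integrability of the two suprema is used for the boundary and drift terms while the martingale term is killed purely by the stopping.
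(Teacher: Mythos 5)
The paper itself does not prove this theorem; it imports it from the free case of \cite[Thm.~3]{Doss1980}, so your argument has to stand on its own. Most of it does: applying Itô's formula to $x\mapsto u_0(z+\sqrt{\rmi}x)$, killing the stochastic integral by stopping at the exit time of $\{\lv B\rv<K\}$, and letting $K\to\infty$ under the assumed dominants $\sup_{r\le T}\lv u_0(z+\sqrt{\rmi}B_r)\rv$ and $T\sup_{r\le T}\lv\Delta u_0(z+\sqrt{\rmi}B_r)\rv$ legitimately yields $u(t,z)=u_0(z)+\tfrac{\rmi}{2}\int_0^t\E[u_0''(z+\sqrt{\rmi}B_r)]\,\rmd r$, hence $\partial_tu(t,z)=\tfrac{\rmi}{2}\E[u_0''(z+\sqrt{\rmi}B_t)]$, and the initial condition; the reading $\Delta u_0=u_0''$ is indeed the intended one, and these steps use the integrability hypotheses only at the single point $z$, as they must.

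The gap is exactly where you flagged it, and your proposed repair does not close it. To get $\E[u_0''(z+\sqrt{\rmi}B_t)]=\partial_z^2u(t,z)$ you must interchange the Cauchy contour integral with the expectation, which requires $\int_{\lv\zeta-z\rv=\rho}\E\bigl[\lv u_0(\zeta+\sqrt{\rmi}B_t)\rv\bigr]\,\lv\rmd\zeta\rv<\infty$; the hypothesis only gives finiteness of $\E[\sup_{r\le T}\lv u_0(\zeta+\sqrt{\rmi}B_r)\rv]$ separately at each $\zeta$, and pointwise finiteness of a (merely lower semicontinuous) function on the circle gives neither its integrability nor local boundedness. The maximum modulus principle cannot rescue this: it places the maximum of $\lv u_0(\cdot+w)\rv$ over a disc on the boundary circle, but it does not dominate that maximum by the values of $\lv u_0\rv$ at finitely many \emph{fixed} points $z_1+w,\dotsc,z_m+w$ uniformly in the shift $w=\sqrt{\rmi}B_r(\omega)$, which ranges over an unbounded set; the location of the maximizing boundary point moves with $w$, and no such finite-point domination holds for a general entire function. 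Hence the claimed integrable dominant for $\sup_{\lv\zeta-z\rv\le\rho}\sup_{r\le T}\lv u_0(\zeta+\sqrt{\rmi}B_r)\rv$ is unsubstantiated, and with it the key identification. Softer fixes (subharmonic mean-value estimates, or a Baire-category argument exploiting lower semicontinuity of $\zeta\mapsto\E[\sup_{r\le T}\lv u_0(\zeta+\sqrt{\rmi}B_r)\rv]$) only yield the interchange for $z$ in some dense open set, not at every real $x$ as the statement requires, so this is a genuine missing idea rather than a routine verification — which is presumably why the theorem both assumes analyticity of $z\mapsto\E[u_0(z+\sqrt{\rmi}B_t)]$ and is quoted from \cite{Doss1980} rather than reproved.
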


\end{section}

\begin{section}{White Noise Analysis}\label{sec:whitenoiseanalysis}

To give a mathematically rigorous meaning to the expression in~\eqref{eq:def-KV} we use the framework of white noise analysis, which has been applied to construct path integrals, in particular Feynman integrals, in several articles, see~\cite{HS83a,HS83b,Kuo83} and also~\cite{SS04} and the references therein. Rich background to the theory can be found in the monographs~\cite{HKPS93,Kuo96,Obata94}.

In this section we briefly recap the construction of the white noise measure $\mu$ and the famous decomposition of the space $L^2(\mu)$ into orthogonal polynomials, known as Wiener-It\={o} decomposition or chaos decomposition. Afterwards we present the definition of the spaces $\cG$ and $\cG'$ of regular testfunctions and distributions, which were introduced in~\cite{PT95}. We repeat the definitions of the translation and projection operators and provide extensions which allow to define~\eqref{eq:riemann-approximation-limit} as an element of $\cG'$. Then~\eqref{eq:def-KV} is defined as the generalized expectation of~\eqref{eq:riemann-approximation-limit}.

\begin{subsection}{The White Noise Measure}

Let $L^2(\R)$ denote the space of (equivalence classes of) real-valued square-integrable functions on the real line with respect to the Lebesgue measure, equipped with its usual inner product $(\cdot,\cdot)$ and corresponding norm $\lv\cdot\rv$. The Schwartz space of rapidly decreasing functions is denoted by $S(\R)$ and equipped with its usual nuclear topology, see e.g.~\cite{Kuo96} or~\cite[Ch.~V,~Sec.~3]{RS72}. Its dual space of tempered distributions we denote by $S'(\R)$. By identifying $L^2(\R)$ with its dual via the Riesz isomorphism, the canonical dual pairing $\la\cdot,\cdot\ra$ on $S'(\R)\times S(\R)$ is realized as an extension of the inner product in $L^2(\R)$, so $\la\xi,\zeta\ra=(\xi,\zeta)$ for $\xi\in L^2(\R)$ and $\zeta\in S(\R)$. In particular $S(\R)\subset L^2(\R)\subset S'(\R)$.

The standard Gaussian measure $\mu$ on $S'(\R)$, equipped with its cylindrical $\sigma$-algebra, arises from its characteristic function
\[ \int_{S'(\R)}\exp(\rmi\la\omega,\xi\ra)\,\rmd\mu(\omega)=\exp\Bigl(-\frac{1}{2}\la\xi,\xi\ra\Bigr),\quad\xi\in S(\R), \]
via the Bochner--Minlos theorem. For $1\le p\le\infty$ we abbreviate $L^p(\mu):=L^p(S'(\R),\C;\mu)$ together with its usual norm $\lV\cdot\rV_{L^p(\mu)}$. A fundamental property of the measure $\mu$ is that for fixed $\xi_1,\dotsc,\xi_d\in S(\R)$, $d\in\N$, the random vector $(\la\cdot,\xi_1\ra,\dotsc,\la\cdot,\xi_d\ra)$ has a centered Gaussian distribution on $\R^d$ with covariance matrix $(\la\xi_k,\xi_l\ra)_{k,l=1,\dotsc,d}$. Thus, if $\la\cdot,\cdot\ra$ is extended in a bilinear way to elements from the complexified spaces (which we tag with a subscript $\C$), one obtains that the space of smooth polynomials
\[ \cP:=\spann\{\la\cdot,\xi\ra^n:\xi\in S_\C(\R),n\in\N\} \]
is a subspace of $L^2(\mu)$.

We use the notation $L_\C^2(\R)^\wotn$ for the $n$-fold symmetric Hilbert space tensor product of $L_\C^2(\R)$ and keep the symbol $\la\cdot,\cdot\ra$ for its bilinear dual pairing and $\lv\cdot\rv$ for its norm. Similarly as before, the dual pairing between $S_\C(\R)^\wotn$ and its dual space is realized as a bilinear extension of $\la\cdot,\cdot\ra$ and denoted by the same symbol. Here $S(\R)^\wotn$ denotes the subspace of  $S(\R^n)$ consisting of symmetric Schwartz functions on $\R^n$. With this notation, each $\varphi\in\cP$ of degree $N\in\N$ can uniquely be represented as a Wick polynomial
\begin{equation}\label{eq:smooth-polynomial-chaos-decomposition}
\varphi=\sum_{n=0}^N\la\wickc,\varphi^{(n)}\ra,\quad\varphi^{(n)}\in\spann\{\xi^\otn:\xi\in S_\C(\R)\},\,n=0,\dotsc,N,
\end{equation}
where $\wicko$ denotes the $n$-th Wick power of $\omega\in S'(\R)$ and $\varphi^{(n)}$ is called the $n$-th kernel of $\varphi$, see e.g.~\cite{HKPS93,Kuo96,Obata94} for details. The advantage of this representation is the orthogonality relation
\[ \int_{S'(\R)}\bigl\la\wicko,\varphi^{(n)}\bigr\ra\bigl\la\wicko[m],\psi^{(m)}\bigr\ra\,\rmd\mu(\omega)=\delta_{nm}n!\bigl\la\varphi^{(n)},\psi^{(m)}\bigr\ra, \]
where $\delta_{nm}$ denotes the Kronecker delta. This implies $\lV\la\wickc,\varphi^{(n)}\ra\rV_{L^2(\mu)}^2=n!\lv\varphi^{(n)}\rv^2$ for $\varphi^{(n)}$ as in~\eqref{eq:smooth-polynomial-chaos-decomposition} and thus $\la\wickc,f^{(n)}\ra$ can be defined as a limit in $L^2(\mu)$ for general $f^{(n)}\in L_\C^2(\R)^\wotn$ by density of such $\varphi^{(n)}$ in $L_\C^2(\R)^\wotn$. In particular $\la\cdot,\xi\ra\in L^2(\mu)$ for $\xi\in L_\C^2(\R)$. One can show that the space of smooth polynomials is dense in $L^2(\mu)$, which implies that for every $F\in L^2(\mu)$ there exist unique $f^{(n)}\in L_\C^2(\R)^\wotn$, $n\in\N$, such that $F=\sum_{n=0}^\infty\la\wickc,f^{(n)}\ra$ holds in $L^2(\mu)$. This expansion is called Wiener-It\={o}-decomposition and $f^{(n)}$ is called the $n$-th kernel of $F$. Its norm in $L^2(\mu)$ can be computed as $\lV F\rV_{L^2(\mu)}^2=\sum_{n=0}^\infty n!\lv f^{(n)}\rv^2$.

It is still valid that the random vector $(\la\cdot,\xi_1\ra,\dotsc,\la\cdot,\xi_d\ra)$ has a centered Gaussian distribution with covariance matrix $(\la\xi_k,\xi_l\ra)_{k,l=1,\dotsc,d}$ when $\xi_1,\dotsc,\xi_d\in L^2(\R)$. In particular the process $(\la\cdot,\indi_{[0,t)}\ra)_{t\ge 0}$ in $L^2(\mu)$ has the law of Brownian motion and the Kolmogorov-\u{C}ensov-Lo\`eve theorem ensures that this process has a modification with continuous paths, which we denote by $(B_t)_{t\ge 0}$.

\end{subsection}

\begin{subsection}{Regular Testfunctions and Distributions}

The space of regular testfunctions $\cG$ and the space of regular distributions $\cG'$ were introduced in~\cite{PT95}. We use these spaces to define the object in~\eqref{eq:riemann-approximation-limit}. For convenience we repeat the definition and some important properties of these spaces. For $q\in\R$ let the norm $\lV\cdot\rV_q$ on $\cP$ be defined by
\[ \lV\varphi\rV_q^2:=\sum_{n=0}^N n!2^{q n}\bigl\lv\varphi^{(n)}\bigr\rv^2, \]
where $\varphi\in\cP$ is as in~\eqref{eq:smooth-polynomial-chaos-decomposition}. The space $\cG_q$ is defined to be the completion of $\cP$ with respect to $\lV\cdot\rV_q$. One easily sees $\cG_q\subset\cG_r$ continuously for $q\ge r$ and $\cG_0=L^2(\mu)$. Every $\varphi\in\cG_q$ admits a unique representation
\begin{equation}\label{eq:regular-testfunction-chaos-decomposition}
\varphi=\sum_{n=0}^\infty\la\wickc,\varphi^{(n)}\ra,\quad\varphi^{(n)}\in L_\C^2(\R)^\wotn, n\in\N,
\end{equation}
where the series converges in $\cG_q$. Then $\varphi^{(n)}$ is called the $n$-th kernel of $\varphi$. Note that in case $q$ is negative, this expression may not be considered a pointwisely defined object despite the notation. The space $\cG$ is defined by
\[ \cG:=\bigcap_{q\in\R}\cG_q \]
and equipped with the projective limit topology, i.e.\ the coarsest topology such that all inclusions $\cG\subset\cG_q$ are continuous. There exists a complete metric which induces this topology, and a sequence converges in $\cG$ if and only if it converges in each of the spaces $\cG_q$. In particular $\cG\subset L^2(\mu)$ densely and continuously and each $\varphi\in\cG$ can be represented uniquely as in~\eqref{eq:regular-testfunction-chaos-decomposition} where the series converges in $\cG$, that is in each $\cG_q$.

For $q\ge 0$ the space $\cG_{-q}$ is actually the dual space of $\cG_q$ with respect to the dual pairing $\lla\cdot,\cdot\rra$ on $\cG_{-q}\times\cG_q$ given by $\lla\Phi,\varphi\rra:=\sum_{n=0}^\infty n!\la\Phi^{(n)},\varphi^{(n)}\ra$ for $\Phi\in\cG_{-q}$ and $\varphi\in\cG_q$. This dual pairing extends the inner product on the real part of $L^2(\mu)$ bilinearly in the sense that $\lla\Phi,\varphi\rra=\int_{S'(\R)}\Phi\varphi\,\rmd\mu$ if $\Phi\in L^2(\mu)$. It follows from general duality theory that
\[ \cG'=\bigcup_{q\in\R}\cG_q. \]
In~\cite{PT95} no topology was considered on $\cG'$, however this representation allows to choose the inductive limit topology on $\cG'$, that is the finest locally convex topology such that all inclusions $\cG_q\subset\cG'$, $q\in\R$, are continuous. Since $\cG_0=L^2(\mu)$ it is obvious that $\cG\subset L^2(\mu)\subset\cG'$ continuously. More generally it was shown in~\cite{PT95} that $\cG\subset L^p(\mu)$ continuously for all $1<p<\infty$, which by duality implies that even $\cG\subset L^p(\mu)\subset\cG'$ continuously. As before $\lla\Phi,\varphi\rra=\int\Phi\varphi\,\rmd\mu$ for $\Phi\in L^p(\mu)$, $\varphi\in\cG$.

The \emph{$S$-transform} of $\Phi\in\cG'$ is defined as the mapping
\[ L_\C^2(\R)\ni\xi\longmapsto S\Phi(\xi):=\sum_{n=0}^\infty\la\Phi^{(n)},\xi^\otn\ra\in\C. \]
It is a holomorphic function and completely characterizes $\Phi$. The indicator function $\indi_{S'(\R)}$ on $S'(\R)$ is a smooth polynomial, thus the generalized expectation of $\Phi\in\cG'$ can be defined as $\E[\Phi]:=\lla\Phi,\indi_{S'(\R)}\rra=S\Phi(0)\in\C$, which extends the usual expectation on $L^2(\mu)$. An important operation is the so-called Wick product, which is a continuous bilinear mapping from $\cG'\times\cG'$ to $\cG'$ characterized by $S(\Phi\diamond\Psi)(\xi)=S\Phi(\xi)\cdot S\Psi(\xi)$ for $\Phi,\Psi\in\cG'$ and $\xi\in L_\C^2(\R)$. In particular $\E[\Phi\diamond\Psi]=\E[\Phi]\cdot\E[\Psi]$.

Another property of $\cG$ is that it is closed under pointwise multiplication which is a continuous bilinear operator from $\cG\times\cG$ to $\cG$. One can extend this multiplication, allowing one factor to be in $\cG'$ by defining $\lla\Phi\cdot\varphi,\psi\rra:=\lla\Phi,\varphi\cdot\psi\rra$ for $\Phi\in\cG'$ and $\varphi,\psi\in\cG$ and this multiplication is separately continuous from $\cG'\times\cG$ to $\cG'$.

\begin{example}
A well-established regular distribution is \emph{Donsker's delta} $\delta(z\la\cdot,\eta\ra-a)$, defined for $a,z\in\C$ with $\Real z>0$ and $\eta\in L^2(\R)\setminus\{0\}$ and characterized via its $S$-transform
\[ S(\delta(z\la\cdot,\eta\ra-a))(\xi)=\frac{1}{\sqrt{2\pi z^2\la\eta,\eta\ra}}\exp\biggl(-\frac{1}{2z^2\la\eta,\eta\ra}\bigl(a-z\la\xi,\eta\ra\bigr)^2\biggr),\quad\xi\in L_\C^2(\R). \]
First definitions only covered the real case, i.e.\ $a,z\in\R$, see e.g.~\cite{HKPS93,Kuo96,PT95}. Generalizing via complex scaling yields the definition for complex parameters, see~\cite{LLSW94}. It can be considered as the formal composition of the Dirac delta distribution with $z\la\cdot,\eta\ra$. From uniqueness of the $S$-transform one easily sees that Donsker's delta is homogeneous of degree $-1$, in the sense that
\[ \delta(\la\cdot,\eta\ra-a)=z\cdot\delta(z\la\cdot,\eta\ra-za) \]
for $\eta\in L^2(\R)\setminus\{0\}$ and $a,z\in\C$ with $\Real z>0$. To realize~\eqref{eq:def-KV} we choose $\eta=\indi_{[0,t)}$ and $z=\sqrt{\rmi}=(1+\rmi)/\sqrt{2}$. Note that $\E[\delta(\sqrt{\rmi}\la\cdot,\indi_{[0,t)}\ra-(y-x))]=K_0(t,x;y)$.
\end{example}

\end{subsection}

\begin{subsection}{Translation, Projection, and Donsker's Delta}

In this section we show that there is a unique continuous extension of the pointwise product with Donsker's delta to subspaces of $L^2(\mu)$ whose members only depend on finitely many monomials. This extension involves the operators of translation and projection, which themselves have to be extended in an appropriate manner. Furthermore, we show a convenient method to find representations of these extensions, which is necessary for applications. This extends results from~\cite{GRS14}, where the case $a=0$, i.e.\ multiplication with $\delta(z\la\cdot,\eta\ra)$ was considered.

For $\xi=(\xi_1,\dotsc,\xi_d)\in L^2(\R)^d$ let
\[ \cP_\xi:=\{P(\la\cdot,\xi_1\ra,\dotsc,\la\cdot,\xi_d\ra),P\colon\C^d\to\C\text{ is a polynomial}\} \]
be the space of polynomials which only depend on the finitely many monomials $\la\cdot,\xi_i\ra$, $i=1,\dotsc,d$. Since $\la\cdot,\xi_i\ra\in\cG$ for each $i=1,\dotsc,d$ and $\cG$ is closed with respect to pointwise multiplication, it follows that $\cP_\xi\subset\cG$. It can be shown using the well-known polarization formula that for the kernels $\varphi^{(n)}$ of $\varphi\in\cP_\xi$ it holds that each $\varphi^{(n)}$ is a linear combination of elements of the form $\zeta^\otn$, where $\zeta\in\spann_\C\{\xi_1,\dotsc,\xi_d\}$.

For $p\in[1,\infty)$ let $\overline{\cP_\xi}^{L^p(\mu)}$ be the closure of $\cP_\xi$ in $L^p(\mu)$. Since $(\la\cdot,\xi_1\ra,\dotsc,\la\cdot,\xi_d\ra)$ has a centered Gaussian distribution with covariance matrix $(\la\xi_k,\xi_l\ra)_{k,l=1,\dotsc,d}$, we obtain the following characterization.

\begin{lemma}\label{lem:finite-dimensional-Lp-isometry}
Let $\xi_1,\dotsc,\xi_d\in L^2(\R)$ be linearly independent and consider the centered Gaussian measure $\mu_M=\mathcal{N}(0,M)$ on $\R^d$ with covariance matrix $M=(\la\xi_k,\xi_l\ra)_{k,l=1,\dotsc,d}$. Then
\[ L^p(\R^d,\mu_M;\C)\ni f\longmapsto f(\la\cdot,\xi_1\ra,\dotsc,\la\cdot,\xi_d\ra)\in L^p(\mu) \]
is an isometry. Its range is given by
\[ \overline{\cP_\xi}^{L^p(\mu)}=\bigl\{f(\la\cdot,\xi_1\ra,\dotsc,\la\cdot,\xi_d\ra):f\in L^p(\R^d,\mu_M;\C)\bigr\} \]
due to density of the polynomials in $L^p(\R^d,\mu_M;\C)$.
\end{lemma}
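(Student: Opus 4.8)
The plan is to reduce everything to the transfer theorem for push-forward measures together with the density of polynomials in $L^p$ of a finite-dimensional Gaussian. Write $X:=(\la\cdot,\xi_1\ra,\dotsc,\la\cdot,\xi_d\ra)\colon S'(\R)\to\R^d$. As recalled above, since $\xi_1,\dotsc,\xi_d\in L^2(\R)$ the random vector $X$ has law $\mu_M$ under $\mu$, that is $\mu_M=X_*\mu$ is the image measure. First I would record that $f\mapsto f\circ X$ is well defined on equivalence classes: if $f=g$ holds $\mu_M$-a.e., then $\{f\circ X\ne g\circ X\}\subset X^{-1}(\{f\ne g\})$ is $\mu$-null because $\mu_M=X_*\mu$; measurability of $f\circ X$ is immediate since $f$ is Borel and $X$ is measurable.

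Next, the isometry is just the change-of-variables formula for image measures: for every Borel $f\colon\R^d\to\C$,
\[ \int_{S'(\R)}\lv f(X(\omega))\rv^p\,\rmd\mu(\omega)=\int_{\R^d}\lv f(x)\rv^p\,\rmd\mu_M(x), \]
which simultaneously shows that $f\circ X\in L^p(\mu)$ whenever $f\in L^p(\R^d,\mu_M;\C)$ and that $\lV f\circ X\rV_{L^p(\mu)}=\lV f\rV_{L^p(\R^d,\mu_M;\C)}$. Hence the map in question is a linear isometry; in particular it is injective with closed range $R:=\{f\circ X:f\in L^p(\R^d,\mu_M;\C)\}$, being the isometric image of the Banach space $L^p(\R^d,\mu_M;\C)$, hence complete, hence closed in $L^p(\mu)$.

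It remains to identify $R$ with $\overline{\cP_\xi}^{L^p(\mu)}$. Since $\mu_M$ is Gaussian it has finite moments of all orders, so every polynomial $P\colon\C^d\to\C$ lies in $L^p(\R^d,\mu_M;\C)$ and $P\circ X\in R$; as $\cP_\xi=\{P\circ X:P\text{ polynomial}\}$ this gives $\cP_\xi\subset R$, and closedness of $R$ yields $\overline{\cP_\xi}^{L^p(\mu)}\subset R$. For the reverse inclusion I would use the density of the polynomials in $L^p(\R^d,\mu_M;\C)$: given $f\in L^p(\R^d,\mu_M;\C)$, choose polynomials $P_n\to f$ in $L^p(\R^d,\mu_M;\C)$; then $P_n\circ X\to f\circ X$ in $L^p(\mu)$ by the isometry, and each $P_n\circ X\in\cP_\xi$, so $f\circ X\in\overline{\cP_\xi}^{L^p(\mu)}$. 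Combining the two inclusions gives $R=\overline{\cP_\xi}^{L^p(\mu)}$.

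The only genuinely non-formal input is the density of the polynomials in $L^p(\R^d,\mu_M;\C)$ for the non-degenerate Gaussian $\mu_M$ (non-degeneracy coming from linear independence of $\xi_1,\dotsc,\xi_d$); this is classical, e.g.\ because the Hermite polynomials form a complete orthonormal system in $L^2(\mu_M)$, while for $p\ne2$ one first approximates a general element of $L^p(\mu_M)$ by finite linear combinations of exponentials $x\mapsto\rme^{\la s,x\ra}$, $s\in\R^d$ (whose span is dense thanks to the exponential integrability of $\mu_M$), and then approximates each such exponential in $L^p(\mu_M)$ by its Taylor polynomials via dominated convergence. I expect this density statement to be the main — and only mild — obstacle; everything else is bookkeeping with the push-forward measure $\mu_M=X_*\mu$.
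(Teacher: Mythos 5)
Your proposal is correct and follows essentially the same route the paper takes (the paper only sketches it): identify the law of $(\la\cdot,\xi_1\ra,\dotsc,\la\cdot,\xi_d\ra)$ as $\mu_M$, obtain the isometry from the change-of-variables formula for the image measure, and identify the range with $\overline{\cP_\xi}^{L^p(\mu)}$ using closedness of the isometric image and density of the polynomials in $L^p(\R^d,\mu_M;\C)$. Your additional details (well-definedness on equivalence classes and the sketch of polynomial density via exponentials) are sound and merely fill in what the paper leaves implicit.
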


A notation used below is the \emph{symmetric contraction of tensor products}. For $k,n,m\in\N$ with $k\le n\wedge m$ it is the continuous bilinear operator $\wot_k\colon L_\C^2(\R)^\wotn\times L_\C^2(\R)^{\wot m}\to L_\C^2(\R)^{\wot n+m-2k}$ determined by the property
\[ \xi^\otn\wot_k\zeta^\otm=\la\xi,\zeta\ra^k\xi^{\ot n-k}\wot\zeta^{\ot m-k} \]
for $\xi,\zeta\in L^2(\R)$. Here $\xi^{\ot n-k}\wot\zeta^{\ot m-k}$ denotes the symmetrization of $\xi^{\ot n-k}\ot\zeta^{\ot m-k}$. The contraction of tensor products is in fact a contraction, thus
\[ \bigl\lv\varphi^{(n)}\wot_k\psi^{(m)}\bigr\rv\le\bigl\lv\varphi^{(n)}\bigr\rv\cdot\bigl\lv\psi^{(m)}\bigr\rv \]
for $\varphi^{(n)}\in L_\C^2(\R)^\wotn$, $\psi^{(m)}\in L_\C^2(\R)^{\wot m}$.

\begin{subsubsection}{Translation Operator}

The translation operator $T_\eta$, $\eta\in S'(\R)$, is well-known in Gaussian analysis, see for example~\cite{Kuo96,PT95}. Intuitively it assigns to a function $\varphi\colon S'(\R)\to\C$ the shifted function $\varphi(\cdot+\eta)$. It has been shown in~\cite{PT95} that $T_\eta$ acts continuously on $\cG$ for $\eta\in L^2(\R)$. We are interested in a definiton of this operator for \emph{complex} $\eta\in L_\C^2(\R)$. The problem one encounters in a pointwise definition in this situation is that the expression $\varphi(\cdot+\eta)$ may not be well-defined for $\varphi\in\cG$, since by definition such $\varphi$ is an equivalence class of functions defined only on the real space $S'(\R)$. In~\cite{Westerkamp1995} a definition of $T_\eta$ for complex $\eta$ was given in terms of the chaos decomposition and it was shown that this operator acts continuously on $\cG$. We choose the following approach: Every $\varphi\in\cP$ is an analytic function on $S_\C'(\R)$, so the definition $T_\eta\varphi:=\varphi(\cdot+\eta)$ is possible in that case. Furthermore, it is known that $\varphi(\cdot+\eta)$ is again a smooth polynomial with chaos expansion
\begin{equation}\label{eq:chaos-decomposition-of-translation}
T_\eta\varphi=\varphi(\cdot+\eta)=\sum_{n=0}^N\sum_{k=0}^{N-n}\binom{n+k}{k}\bigl\la\wickc,\eta^\otk\wot_k\varphi^{(n+k)}\bigr\ra
\end{equation}
for $\varphi\in\cP$ as in~\eqref{eq:smooth-polynomial-chaos-decomposition}. Below we prove suitable norm-estimates which show that this operator extends continuously to a linear operator on $\cG'$, so in particular it is defined on $L^2(\mu)$. This extension coincides with the operator defined in~\cite{Westerkamp1995} on $\cG$. It is not clear under which conditions on $\varphi$ the identity $T_\eta\varphi=\varphi(\cdot+\eta)$ is valid for this extension, even when $\varphi\colon S_\C'(\R)\to\C$ is pointwisely defined. However, often one has an intuitive understanding of the object $T_\eta\varphi$. We present a method which can prove correctness of an intuitively obtained representation.

A statement similar to the one below was already proven in~\cite[Thm.~67]{Westerkamp1995}. We drop the assumption $q,r>0$ set there and improve the upper bound for the operator norm.

\begin{proposition}\label{prop:translationoperator-continuous}
Let $\eta\in L_\C^2(\R)$ and $q,r\in\R$ with $r<q$. Then there exists $C_1=C_1(q,r,\lv\eta\rv)\in\R$ such that for all $\varphi\in\cP$ it holds $\lV T_\eta\varphi\rV_r\le C_1\,\lV\varphi\rV_q$. In particular, $T_\eta$ extends uniquely to a bounded linear operator from $\cG_q$ to $\cG_r$.
\end{proposition}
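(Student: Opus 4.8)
The plan is to prove the norm estimate $\lV T_\eta\varphi\rV_r\le C_1\lV\varphi\rV_q$ directly for $\varphi\in\cP$, starting from the explicit chaos expansion~\eqref{eq:chaos-decomposition-of-translation}, and then to pass to the unique bounded extension by density of $\cP$ in $\cG_q$ together with completeness of $\cG_r$.

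First I would read off from~\eqref{eq:chaos-decomposition-of-translation} that, for $\varphi\in\cP$ of degree $N$ with kernels $\varphi^{(0)},\dots,\varphi^{(N)}$, the $n$-th kernel of $T_\eta\varphi$ is the finite sum $(T_\eta\varphi)^{(n)}=\sum_{k\ge0}\binom{n+k}{k}\,\eta^{\ot k}\wot_k\varphi^{(n+k)}$. Since the contraction $\wot_k$ is indeed a contraction, $\lv\eta^{\ot k}\wot_k\varphi^{(n+k)}\rv\le\lv\eta\rv^k\lv\varphi^{(n+k)}\rv$, so the triangle inequality gives $\lv(T_\eta\varphi)^{(n)}\rv\le\sum_{k\ge0}\binom{n+k}{k}\lv\eta\rv^k\lv\varphi^{(n+k)}\rv$.

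Next, writing $b_m:=\sqrt{m!\,2^{qm}}\,\lv\varphi^{(m)}\rv$ so that $\lV\varphi\rV_q^2=\sum_m b_m^2$, the previous bound rearranges to $\sqrt{n!\,2^{rn}}\,\lv(T_\eta\varphi)^{(n)}\rv\le\sum_{k\ge0}c_{n,k}\,b_{n+k}$, where, using the elementary identity $\binom{n+k}{k}\sqrt{n!/(n+k)!}=\sqrt{\binom{n+k}{k}/k!}$, one has $c_{n,k}=\sqrt{\binom{n+k}{k}/k!}\;\lv\eta\rv^k\,2^{((r-q)n-qk)/2}$. The point I expect to be the main obstacle is controlling the $c_{n,k}$ uniformly in $n$: the binomial coefficients grow (only polynomially) in $n$, and one must see that this is exactly absorbed by the geometric weight $2^{(r-q)n}$ coming from the hypothesis $r<q$. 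Setting $\theta:=2^{r-q}\in(0,1)$, the generating-function identity $\sum_{n\ge0}\binom{n+k}{k}\theta^n=(1-\theta)^{-(k+1)}$ yields $\binom{n+k}{k}\theta^n\le(1-\theta)^{-(k+1)}$ for every $n$, hence $c_{n,k}\le(1-\theta)^{-(k+1)/2}\lv\eta\rv^k 2^{-qk/2}/\sqrt{k!}=:d_k$, a bound free of $n$; and $D:=\sum_{k\ge0}d_k<\infty$ because the $\sqrt{k!}$ in the denominator beats the geometric growth in $k$. Note $D$ depends only on $q$, $r$, $\lv\eta\rv$, which is precisely the claimed dependence of $C_1$.

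Finally I would apply the Cauchy--Schwarz inequality in the form $\bigl(\sum_k c_{n,k}b_{n+k}\bigr)^2\le\bigl(\sum_k c_{n,k}\bigr)\bigl(\sum_k c_{n,k}b_{n+k}^2\bigr)\le D\sum_k c_{n,k}b_{n+k}^2$, sum over $n$, and reindex by $m=n+k$, obtaining $\lV T_\eta\varphi\rV_r^2=\sum_n n!\,2^{rn}\lv(T_\eta\varphi)^{(n)}\rv^2\le D\sum_m b_m^2\sum_{k=0}^{m}c_{m-k,k}\le D^2\sum_m b_m^2=D^2\lV\varphi\rV_q^2$. This gives the estimate with $C_1=D$, and since $\cP$ is dense in $\cG_q$ and $\cG_r$ is complete, $T_\eta$ extends uniquely to a bounded operator $\cG_q\to\cG_r$ of norm at most $C_1$. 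Because the argument only uses $\theta=2^{r-q}<1$ and never the positivity of $q$ or $r$, it removes the hypothesis $q,r>0$ imposed in~\cite[Thm.~67]{Westerkamp1995} and simultaneously sharpens the constant.
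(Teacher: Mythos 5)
Your proof is correct and follows essentially the same route as the paper: the chaos expansion~\eqref{eq:chaos-decomposition-of-translation}, the contraction bound $\lv\eta^{\ot k}\wot_k\varphi^{(n+k)}\rv\le\lv\eta\rv^k\lv\varphi^{(n+k)}\rv$, a Cauchy--Schwarz step, and the binomial generating function absorbing $\binom{n+k}{k}$ into the geometric factor $2^{(r-q)n}$, with the factorial decay in $k$ giving a constant depending only on $q$, $r$, $\lv\eta\rv$. The only difference is bookkeeping (you bound the coefficients $c_{n,k}$ uniformly in $n$ and reindex, while the paper splits Cauchy--Schwarz against the weight $(n+k)!2^{q(n+k)}$ and then swaps the two sums), and both arguments dispense with the hypothesis $q,r>0$ in the same way.
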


\begin{proof}
Let $\varphi\in\cP$ as in~\eqref{eq:smooth-polynomial-chaos-decomposition} and denote the $n$-th kernel of $T_\eta\varphi$ by $\tilde{\varphi}^{(n)}$, $n=0,\dotsc,N$, which are given in~\eqref{eq:chaos-decomposition-of-translation}. We estimate
\begin{align*}
\bigl\lv\tilde{\varphi}^{(n)}\bigr\rv^2
 & \le \biggl(\sum_{k=0}^{N-n}\frac{(n+k)!}{n!k!}\lv\eta\rv^k\bigl\lv\varphi^{(n+k)}\bigr\rv\biggr)^2\\
 & \le \biggl(\sum_{k=0}^{N-n}(n+k)!2^{q(n+k)}\bigl\lv\varphi^{(n+k)}\bigr\rv^2\biggr)\cdot\biggl(\sum_{k=0}^{N-n}\frac{(n+k)!}{n!n!k!k!}2^{-q(n+k)}\lv\eta\rv^{2k}\biggr)\\
 & \le \lV\varphi\rV_q^2\cdot\frac{1}{n!}2^{-q n}\sum_{k=0}^{N-n}\binom{n+k}{k}\frac{1}{k!}2^{-q k}\lv\eta\rv^{2k}.
\end{align*}
Thus a rearrangement of summation yields
\begin{align*}
\lV T_\eta\varphi\rV_r^2 = \sum_{n=0}^Nn!2^{nr}\bigl\lv\tilde{\varphi}^{(n)}\bigr\rv^2
 & \le \lV\varphi\rV_q^2\sum_{n=0}^N2^{(r-q)n}\sum_{k=0}^{N-n}\binom{n+k}{k}\frac{1}{k!}2^{-q k}\lv\eta\rv^{2k}\\
 & = \lV\varphi\rV_q^2\sum_{k=0}^N\frac{1}{k!}2^{-q k}\lv\eta\rv^{2k}\sum_{n=0}^{N-k}\binom{n+k}{k}2^{(r-q)n}\\
 & \le \lV\varphi\rV_q^2\sum_{k=0}^N\frac{1}{k!}2^{-q k}\lv\eta\rv^{2k}\bigl(1-2^{r-q}\bigr)^{-(k+1)}\\
 & \le \lV\varphi\rV_q^2\cdot\bigl(1-2^{r-q}\bigr)^{-1}\exp\biggl(\frac{\lv\eta\rv^2}{2^q-2^r}\biggr),
\end{align*}
where we used the generating function
\[ \sum_{n=0}^\infty\binom{n+k}{k}x^n=(1-x)^{-(k+1)} \]
of the binomial sequence, convergent for all $k\in\N$ and $\lv x\rv<1$.
\end{proof}

Evidently, the extensions of $T_\eta$ provided by this theorem mutually coincide on their common domain, hence we denote each of them by $T_\eta$ without danger of confusion. Since $\cG'=\bigcup_{q\in\R}\cG_q$ the theorem implies that $T_\eta$ is a linear operator from $\cG'$ to $\cG'$. By~\cite[Thm.~6.1]{Schaefer71} it holds that $T_\eta\cG'\to\cG'$ is continuous if and only if every restriction $T_\eta\colon\cG_q\to\cG'$, $q\in\R$, is continuous. Then the following is immediate.

\begin{corollary}\label{cor:translationoperator-continuous}
The linear operator $T_\eta$ is continuous from $\cG'$ to $\cG'$ and from $\cG$ to $\cG$.
\end{corollary}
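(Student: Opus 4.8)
The plan is to obtain both continuity assertions as a formal consequence of the norm estimate in Proposition~\ref{prop:translationoperator-continuous} combined with the universal properties of the inductive and projective limit topologies with which $\cG'$ and $\cG$ were equipped in the preceding subsection. No new analytic input is needed.

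First I would settle that $T_\eta$ is a well-defined linear operator on each of the two spaces. If $\varphi\in\cG'=\bigcup_{q\in\R}\cG_q$, then $\varphi\in\cG_q$ for some $q\in\R$, and Proposition~\ref{prop:translationoperator-continuous} applied with any $r<q$ yields $T_\eta\varphi\in\cG_r\subset\cG'$; if $\varphi\in\cG=\bigcap_{q\in\R}\cG_q$, then $\varphi\in\cG_{r+1}$ for every $r\in\R$, hence $T_\eta\varphi\in\cG_r$, so $T_\eta\varphi\in\bigcap_{r\in\R}\cG_r=\cG$. The various bounded extensions furnished by Proposition~\ref{prop:translationoperator-continuous} are mutually consistent on overlapping domains, since each restricts to $\varphi\mapsto\varphi(\cdot+\eta)$ on the dense subspace $\cP$, and two operators continuous into a common Banach space $\cG_s$ that agree on a dense set coincide. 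Thus writing $T_\eta$ for \emph{the} operator on $\cG'$ (and on $\cG$) is unambiguous.

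For continuity from $\cG'$ to $\cG'$: since $\cG'$ carries the locally convex inductive limit topology relative to the system $(\cG_q)_{q\in\R}$, the universal property (\cite[Thm.~6.1]{Schaefer71}) reduces continuity of the linear map $T_\eta\colon\cG'\to\cG'$ to continuity of each restriction $T_\eta\colon\cG_q\to\cG'$, $q\in\R$. Fixing $q$ and choosing any $r<q$, Proposition~\ref{prop:translationoperator-continuous} makes $T_\eta\colon\cG_q\to\cG_r$ bounded, hence continuous, and composing with the continuous inclusion $\cG_r\hookrightarrow\cG'$ gives continuity of $T_\eta\colon\cG_q\to\cG'$. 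For continuity from $\cG$ to $\cG$: dually, $\cG$ carries the projective limit topology, i.e.\ the initial topology with respect to the inclusions $\cG\hookrightarrow\cG_q$, so it suffices to verify that $T_\eta\colon\cG\to\cG_r$ is continuous for every $r\in\R$; choosing $q>r$, this map factors as $\cG\hookrightarrow\cG_q\xrightarrow{\,T_\eta\,}\cG_r$, a composition of a continuous inclusion with a bounded operator, hence continuous.

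I do not anticipate a genuine obstacle: the whole analytic substance sits in Proposition~\ref{prop:translationoperator-continuous}, and what remains is routine manipulation of limit topologies on locally convex spaces. The single point deserving a moment of care is to invoke the universal property of the inductive limit $\cG'$ in precisely the stated form — that continuity of a linear map out of $\cG'$ may be tested on each step $\cG_q$ — which is legitimate exactly because the preceding subsection endowed $\cG'$ with the inductive limit topology; neither strictness nor completeness of that limit is required for the bare continuity claim.
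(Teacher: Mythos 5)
Your proposal is correct and follows essentially the same route as the paper: the paper likewise deduces the $\cG'\to\cG'$ continuity from the bounds of Proposition~\ref{prop:translationoperator-continuous} together with \cite[Thm.~6.1]{Schaefer71} (testing continuity on each step $\cG_q$ of the inductive limit), after noting that the extensions coincide on common domains, and treats the $\cG\to\cG$ case as immediate via the projective limit topology exactly as you factor it through $\cG\hookrightarrow\cG_q\to\cG_r$.
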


The translation operator has some expected properties.

\begin{lemma}
For $\eta,\xi\in L_\C^2(\R)$ and $\Phi\in\cG'$, $\varphi\in\cG$ the following hold: $T_\eta T_\xi\Phi=T_{\eta+\xi}\Phi$, $S(T_\eta\Phi)(\xi)=S\Phi(\eta+\xi)$ and $T_\eta(\Phi\cdot\varphi)=T_\eta\Phi\cdot T_\eta\varphi$.
\end{lemma}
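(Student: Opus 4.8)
The plan is to establish each identity first on the dense subspace $\cP$ of smooth polynomials---where $T_\eta$ is the genuine translation $\varphi\mapsto\varphi(\cdot+\eta)$ of analytic functions on $S_\C'(\R)$---and then to propagate it to $\cG'$ by density and continuity. On $\cP$ the relations $T_\eta T_\xi\varphi=T_{\eta+\xi}\varphi$ and $T_\eta(\varphi\psi)=(T_\eta\varphi)(T_\eta\psi)$ are immediate: since $T_\xi$ maps $\cP$ into $\cP$, one has $T_\eta T_\xi\varphi=(T_\xi\varphi)(\cdot+\eta)=\varphi(\cdot+\xi+\eta)$, and $(\varphi\psi)(\cdot+\eta)=\varphi(\cdot+\eta)\,\psi(\cdot+\eta)$ is just multiplicativity of composition. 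For the $S$-transform identity on $\cP$ I would insert the chaos expansion~\eqref{eq:chaos-decomposition-of-translation} of $T_\eta\varphi$ into the defining series $S(T_\eta\varphi)(\xi)=\sum_n\la\tilde\varphi^{(n)},\xi^\otn\ra$; since by~\eqref{eq:smooth-polynomial-chaos-decomposition} each kernel of an element of $\cP$ is a linear combination of tensor powers $\zeta^\otn$, $\zeta\in S_\C(\R)$, linearity reduces the check to $\varphi^{(n+k)}=\zeta^{\ot(n+k)}$. Then $\eta^\otk\wot_k\zeta^{\ot(n+k)}=\la\eta,\zeta\ra^k\zeta^\otn$, so the double sum over $n,k$ collapses, by the binomial theorem, to $\sum_m\la\zeta,\eta+\xi\ra^m=S\varphi(\eta+\xi)$.

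To pass to $\cG'$, recall that $\cG_q$ is by definition the $\lV\cdot\rV_q$-completion of $\cP$, so $\cP$ is dense in every $\cG_q$, and every $\Phi\in\cG'$ lies in some $\cG_q$. For the first identity, fix $s<r<q$: by Proposition~\ref{prop:translationoperator-continuous} the operators $T_\xi\colon\cG_q\to\cG_r$, $T_\eta\colon\cG_r\to\cG_s$, and $T_{\eta+\xi}\colon\cG_q\to\cG_s$ are bounded, so approximating $\Phi$ in $\cG_q$ by polynomials and taking limits in $\cG_s$ gives $T_\eta T_\xi\Phi=T_{\eta+\xi}\Phi$. For the $S$-transform identity, note $S\Psi(\xi)=\lla\Psi,\wickexp{\xi}\rra$ with $\wickexp{\xi}\in\cG$ (indeed $\lV\wickexp{\xi}\rV_p^2=\exp(2^p\lv\xi\rv^2)<\infty$ for all $p$), so $S(\cdot)(\xi)$ is continuous on each $\cG_q$; approximating $\Phi$ in $\cG_q$ by $\varphi_j\in\cP$ one has $T_\eta\varphi_j\to T_\eta\Phi$ in $\cG_r$ for some $r<q$ by Proposition~\ref{prop:translationoperator-continuous}, whence $S(T_\eta\varphi_j)(\xi)\to S(T_\eta\Phi)(\xi)$ and $S\varphi_j(\eta+\xi)\to S\Phi(\eta+\xi)$, and the identity follows in the limit.

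For the product identity I would argue in two steps, using the continuity of $T_\eta$ on $\cG'$ and on $\cG$ from Corollary~\ref{cor:translationoperator-continuous} together with separate continuity of the pointwise product $\cG'\times\cG\to\cG'$. First fix $\varphi\in\cP$ and approximate $\Phi\in\cG'$ by $\Phi_j\in\cP$ in the $\cG_q$ containing $\Phi$: then $\Phi_j\cdot\varphi\to\Phi\cdot\varphi$ in $\cG'$, hence $T_\eta(\Phi_j\cdot\varphi)\to T_\eta(\Phi\cdot\varphi)$ in $\cG'$, while $T_\eta\Phi_j\to T_\eta\Phi$ and therefore $T_\eta\Phi_j\cdot T_\eta\varphi\to T_\eta\Phi\cdot T_\eta\varphi$ in $\cG'$; since $T_\eta(\Phi_j\cdot\varphi)=T_\eta\Phi_j\cdot T_\eta\varphi$ for all $j$ (all factors lie in $\cP$, where translation is multiplicative), the identity holds for $\varphi\in\cP$. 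Next fix $\Phi\in\cG'$ and approximate an arbitrary $\varphi\in\cG$ by $\varphi_j\in\cP$ in $\cG$; then $T_\eta\varphi_j\to T_\eta\varphi$ in $\cG$ by Corollary~\ref{cor:translationoperator-continuous}, and the same limiting argument, again via separate continuity of the product, yields $T_\eta(\Phi\cdot\varphi)=T_\eta\Phi\cdot T_\eta\varphi$.

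Most of the work is the continuity-and-density bookkeeping, which is routine once one is careful about which norm the approximating polynomials converge in; the only genuinely computational point is the $S$-transform identity on $\cP$, where one must evaluate the symmetric contractions $\eta^\otk\wot_k\varphi^{(n+k)}$ and resum the resulting binomial double series. I expect that to be the main---though modest---obstacle.
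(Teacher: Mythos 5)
Your proof is correct, and its overall strategy---establish all identities on $\cP$, where $T_\eta$ is genuine translation of analytic polynomials, then extend to $\cG'$ and $\cG$ via the bounds of Proposition~\ref{prop:translationoperator-continuous}, Corollary~\ref{cor:translationoperator-continuous} and separate continuity of the product---is the same as the paper's; your density-and-continuity bookkeeping is in fact spelled out more carefully than the paper's one-line ``these identities instantly generalize by approximation.'' The one place where you diverge is the $S$-transform identity, which you single out as the main computational obstacle: you verify $S(T_\eta\varphi)(\xi)=S\varphi(\eta+\xi)$ directly on $\cP$ by evaluating the contractions $\eta^{\otk}\wot_k\zeta^{\ot(n+k)}=\la\eta,\zeta\ra^k\zeta^{\otn}$ and resumming the binomial double series, and then transport it to $\cG'$ using continuity of $\Psi\mapsto S\Psi(\xi)=\lla\Psi,\wickexp{\xi}\rra$ on each $\cG_q$. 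The paper avoids this computation entirely: from~\eqref{eq:chaos-decomposition-of-translation} it only reads off the $\xi=0$ case $S(T_\eta\Phi)(0)=S\Phi(\eta)$ (which requires just the zeroth kernel of $T_\eta\Phi$, no resummation), and then obtains the general identity from the already-established composition law via $S(T_\eta\Phi)(\xi)=S(T_\xi T_\eta\Phi)(0)=S(T_{\eta+\xi}\Phi)(0)=S\Phi(\eta+\xi)$. Your route costs one explicit kernel calculation (which you carried out correctly, including the norm $\lV\wickexp{\xi}\rV_p^2=\exp(2^p\lv\xi\rv^2)$ justifying continuity of the $S$-transform), while the paper's trick trades that calculation for a clever reuse of the semigroup property; both are perfectly sound.
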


\begin{proof}
If $\Phi,\varphi\in\cP$ then $T_\eta T_\xi\Phi=T_{\eta+\xi}\Phi$ and $T_\eta(\Phi\cdot\varphi)=T_\eta\Phi\cdot T_\eta\varphi$ by definition. Furthermore~\eqref{eq:chaos-decomposition-of-translation} implies $S(T_\eta\Phi)(0)=S\Phi(\eta)$. These identities instantly generalize to $\Phi\in\cG'$ and $\varphi\in\cG$ by approximation. Finally
\[ S(T_\eta\Phi)(\xi)=S(T_\xi T_\eta\Phi)(0)=S(T_{\eta+\xi}\Phi)(0)=S\Phi(\eta+\xi) \]
and the statement is proven.
\end{proof}

Since $L^p(\mu)\subset\cG'$ continuously for $p>1$ we have that $T_\eta$ is continuous from $L^p(\mu)$ to $\cG'$ and this definition is uniquely determined by the relation $T_\eta\varphi=\varphi(\cdot+\eta)$ for $\varphi\in\cP$. However, as mentioned before, it is not clear whether this relation extends to more general $\varphi$. In fact, $\varphi(\cdot+\eta)$ is not even well-defined for general $\eta\in L_\C^2(\R)$ and $\varphi\in L^p(\mu)$. Furthermore, one does not know whether $T_\eta\varphi\in L^q(\mu)$ for some $q\ge 1$.

The real case $\eta\in L^2(\R)$ is rather simple. The next statement is proven using that
\[ \frac{\rmd\mu(\cdot-\eta)}{\rmd\mu}=\exp\bigl(\la\cdot,\eta\ra-\la\eta,\eta\ra/2\bigr) \]
and Hölder's inequality.

\begin{lemma}\label{lem:realtranslationpointwise}
Let $\eta\in L^2(\R)$ and $1\le q<p<\infty$. Then $T_\eta$ maps $L^p(\mu)$ continuously to $L^q(\mu)$ and $T_\eta\varphi=\varphi(\cdot+\eta)$ for all $\varphi\in L^p(\mu)$.
\end{lemma}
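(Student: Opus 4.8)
The plan is to deduce the statement from the Cameron--Martin quasi-invariance of $\mu$ under the shift $\omega\mapsto\omega+\eta$ recalled above, together with Hölder's inequality, and then to identify the resulting shift operator on $L^p(\mu)$ with the extension of $T_\eta$ already constructed via $\cG'$. Note that $p>1$, since $p>q\ge 1$. First I would record the change-of-variables identity coming from $\rmd\mu(\cdot-\eta)/\rmd\mu=\exp(\la\cdot,\eta\ra-\la\eta,\eta\ra/2)$: for every nonnegative measurable function $F$ on $S'(\R)$,
\[ \int_{S'(\R)}F(\omega+\eta)\,\rmd\mu(\omega)=\int_{S'(\R)}F(\omega)\exp\bigl(\la\omega,\eta\ra-\tfrac12\la\eta,\eta\ra\bigr)\,\rmd\mu(\omega). \]
Since $\eta$ is real, the density on the right is $\mu$-a.s.\ strictly positive, so $\omega\mapsto\omega+\eta$ maps $\mu$-null sets to $\mu$-null sets, and hence $\varphi(\cdot+\eta)$ is a well-defined equivalence class of measurable functions whenever $\varphi\in L^p(\mu)$. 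Applying the identity with $F=\lv\varphi\rv^q$ and then Hölder's inequality with the conjugate exponents $p/q$ and $p/(p-q)$ gives
\[ \lV\varphi(\cdot+\eta)\rV_{L^q(\mu)}^q\le\lV\varphi\rV_{L^p(\mu)}^q\biggl(\int_{S'(\R)}\exp\Bigl(\tfrac{p}{p-q}\bigl(\la\omega,\eta\ra-\tfrac12\la\eta,\eta\ra\bigr)\Bigr)\,\rmd\mu(\omega)\biggr)^{(p-q)/p}. \]
Under $\mu$ the variable $\la\cdot,\eta\ra$ is centered Gaussian with variance $\la\eta,\eta\ra$, so $\int\exp(t\la\cdot,\eta\ra)\,\rmd\mu=\exp(t^2\la\eta,\eta\ra/2)$ for every $t\in\R$ and the bracketed integral is finite. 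Hence there is a constant $C=C(p,q,\lv\eta\rv)$ with $\lV\varphi(\cdot+\eta)\rV_{L^q(\mu)}\le C\,\lV\varphi\rV_{L^p(\mu)}$ for all $\varphi\in L^p(\mu)$, i.e.\ $\varphi\mapsto\varphi(\cdot+\eta)$ is a bounded linear operator from $L^p(\mu)$ to $L^q(\mu)$.

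It then remains to check that this operator is the restriction of $T_\eta$. On $\cP$ the identity $T_\eta\varphi=\varphi(\cdot+\eta)$ holds by the very definition of $T_\eta$ on smooth polynomials, and $\cP$ is dense in $L^p(\mu)$ (standard for Gaussian measures). Assume first $q>1$. Then $L^q(\mu)\subset\cG'$ continuously, so $\varphi\mapsto\varphi(\cdot+\eta)$ is continuous from $L^p(\mu)$ into $\cG'$; and $T_\eta\colon L^p(\mu)\to\cG'$ is continuous because $L^p(\mu)\subset\cG'$ continuously and $T_\eta$ is continuous on $\cG'$ by Corollary~\ref{cor:translationoperator-continuous}. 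Two continuous maps into $\cG'$ that agree on the dense subspace $\cP$ coincide, so $T_\eta\varphi=\varphi(\cdot+\eta)$ for all $\varphi\in L^p(\mu)$, which in particular lies in $L^q(\mu)$. For the borderline case $q=1$ one fixes any $q'\in(1,p)$ and applies the case just proven: then $T_\eta\varphi=\varphi(\cdot+\eta)\in L^{q'}(\mu)\subset L^1(\mu)$, while continuity $L^p(\mu)\to L^1(\mu)$ has already been obtained directly from the estimate above.

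The analytic core --- Cameron--Martin together with Hölder and the finiteness of Gaussian exponential moments --- is entirely routine. The only point requiring a little care is this last identification of the abstractly defined operator $T_\eta$ (given on $\cG'$ through chaos expansions) with the concrete shift $\varphi(\cdot+\eta)$, and in particular the case $q=1$, where $L^1(\mu)$ is not contained in $\cG'$ and one must argue through an intermediate space $L^{q'}(\mu)$ with $1<q'<p$.
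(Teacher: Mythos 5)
Your proof is correct and follows essentially the same route the paper indicates: the Cameron--Martin density $\rmd\mu(\cdot-\eta)/\rmd\mu=\exp(\la\cdot,\eta\ra-\la\eta,\eta\ra/2)$ combined with Hölder's inequality and Gaussian exponential integrability gives the bound $L^p(\mu)\to L^q(\mu)$. Your additional care in identifying the concrete shift with the abstract operator $T_\eta$ (via density of $\cP$, continuity into $\cG'$, and the detour through $q'\in(1,p)$ when $q=1$) is a sound way of making explicit what the paper leaves implicit, and does not change the approach.
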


For the complex case $\eta\in L_\C^2(\R)$ at least we get a closability result which is due to continuity of $L^p(\mu)\subset\cG'$ and Corollary~\ref{cor:translationoperator-continuous}.

\begin{corollary}
For $\eta\in L_\C^2(\R)$ and any $p,q\in(1,\infty)$ the operator $L^p(\mu)\supset\cP\ni\varphi\mapsto T_\eta\varphi\in L^q(\mu)$ is closable.
\end{corollary}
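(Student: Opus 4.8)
The plan is to reduce closability to the continuity of the translation operator on $\cG'$ together with the fact that both $L^p(\mu)$ and $L^q(\mu)$ sit continuously inside $\cG'$. Recall that a linear operator $A$ with domain $D(A)\subset X$ mapping into $Y$ is closable precisely when the following holds: for every sequence $(\varphi_n)_{n\in\N}$ in $D(A)$ with $\varphi_n\to 0$ in $X$ and $A\varphi_n\to\psi$ in $Y$, one necessarily has $\psi=0$. I would apply this with $X=L^p(\mu)$, $Y=L^q(\mu)$, $D(A)=\cP$ and $A=T_\eta|_{\cP}$; note that $T_\eta$ indeed maps $\cP$ into $\cP\subset\cG\subset L^q(\mu)$ by~\eqref{eq:chaos-decomposition-of-translation}, so that $A$ is genuinely an operator into $L^q(\mu)$.

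First I would record that the $L^p$--$L^q$ version of $T_\eta$ considered here is, by construction, the restriction to $\cP$ of the continuous linear operator $T_\eta\colon\cG'\to\cG'$ from Corollary~\ref{cor:translationoperator-continuous}. Then, given a sequence $(\varphi_n)$ in $\cP$ with $\varphi_n\to 0$ in $L^p(\mu)$, continuity of the inclusion $L^p(\mu)\subset\cG'$ yields $\varphi_n\to 0$ in $\cG'$; applying the continuous operator $T_\eta$ on $\cG'$ gives $T_\eta\varphi_n\to T_\eta\,0=0$ in $\cG'$. If moreover $T_\eta\varphi_n\to\psi$ in $L^q(\mu)$, then continuity of the inclusion $L^q(\mu)\subset\cG'$ gives $T_\eta\varphi_n\to\psi$ in $\cG'$ as well.

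The final step is uniqueness of limits in $\cG'$: since $\cG'=\bigcup_{q\in\R}\cG_q$ carries the inductive limit topology of the Hilbert spaces $\cG_q$, it is a Hausdorff locally convex space, so the two $\cG'$-limits must coincide, i.e.\ $\psi=0$ in $\cG'$; because the inclusion $L^q(\mu)\subset\cG'$ is injective this forces $\psi=0$ in $L^q(\mu)$, which is exactly closability.

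The one point that deserves a word of justification — and essentially the only potential obstacle — is that limits in $\cG'$ are unique, i.e.\ that the inductive limit topology is Hausdorff. This is a standard property of such countable inductive limits of Hilbert spaces; alternatively it can be read off the nondegenerate dual pairing $\lla\cdot,\cdot\rra$ between $\cG'$ and $\cG$, since $\lla\Phi,\varphi\rra=0$ for all $\varphi\in\cG$ implies $\Phi=0$. Everything else is a formal chase through the two continuous embeddings and Corollary~\ref{cor:translationoperator-continuous}, requiring no computation.
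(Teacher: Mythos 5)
Your argument is exactly the one the paper intends: the corollary is stated there as an immediate consequence of the continuity of the embeddings $L^p(\mu)\subset\cG'$, $L^q(\mu)\subset\cG'$ and of $T_\eta\colon\cG'\to\cG'$ (Corollary~\ref{cor:translationoperator-continuous}), which is precisely your chase through the embeddings plus uniqueness of limits in the Hausdorff space $\cG'$. Your extra remark justifying Hausdorffness via the nondegenerate pairing $\lla\cdot,\cdot\rra$ is correct and fills in the only detail the paper leaves implicit.
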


In the following proposition we illustrate that the identity $T_\nu\varphi=\varphi(\cdot+\eta)$ may fail to hold for complex $\eta$. Here we say that a measurable $\varphi\colon S'(\R)\to\C$ is an element of $\cG'$ if the mapping
\[ \cG\ni\psi\longmapsto\int_{S'(\R)}\varphi\cdot\psi\,\rmd\mu\in\C \]
is well-defined and continuous. Note that this implies that $\varphi$ is integrable, as $\indi_{S'(\R)}\in\cG$.

\begin{proposition}
For all $\eta\in L_\C^2(\R)$ with $\Imag\eta\not=0$ there exists some $\varphi\in L^\infty(\mu)$ which has an analytic extension to $S_\C'(\R)$ such that $\varphi(\cdot+\eta)\not\in\cG'$. In particular $T_\eta\varphi\not=\varphi(\cdot+\eta)$.
\end{proposition}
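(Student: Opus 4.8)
The plan is to produce a single explicit bounded analytic function whose naive translate by $\eta$ grows double-exponentially along the real directions of $S'(\R)$, so violently that $\mu$-integrability breaks down; by the remark preceding the statement, failure of integrability already forces the translate out of $\cG'$, whereas $T_\eta\varphi$ stays in $\cG'$, so the two cannot agree.

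First I would fix the test function. Since $\Imag\eta\neq 0$ in $L^2(\R)$ and $S(\R)$ is dense in $L^2(\R)$, there is $\zeta_0\in S(\R)$ with $\la\Imag\eta,\zeta_0\ra\neq 0$; put $\zeta:=\frac{\pi}{\la\Imag\eta,\zeta_0\ra}\zeta_0\in S(\R)$, so that $\la\Imag\eta,\zeta\ra=\pi$ and $\zeta\neq 0$, hence $\lv\zeta\rv>0$. I would then take
\[ \varphi:=\exp\bigl(-\exp(\la\cdot,\zeta\ra)\bigr). \]
On $S'(\R)$ the functional $\la\cdot,\zeta\ra$ is real, so $0<\varphi(\omega)\le 1$ for all $\omega\in S'(\R)$, whence $\varphi\in L^\infty(\mu)\subset\cG'$. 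Since $\zeta\in S(\R)$, the map $\omega\mapsto\la\omega,\zeta\ra$ defines a continuous linear functional on $S_\C'(\R)$ (extending the pairing bilinearly as in the paper's convention), and composing it with the entire function $z\mapsto\exp(-\exp z)$ gives an analytic extension of $\varphi$ to $S_\C'(\R)$, which I keep calling $\varphi$.

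Next I would compute $\varphi(\cdot+\eta)$ explicitly. With $a_1:=\la\Real\eta,\zeta\ra\in\R$ one has $\la\eta,\zeta\ra=a_1+\rmi\pi$, so for $\omega\in S'(\R)$
\[ \varphi(\omega+\eta)=\exp\bigl(-\exp(\la\omega,\zeta\ra+a_1+\rmi\pi)\bigr)=\exp\bigl(\rme^{a_1}\rme^{\la\omega,\zeta\ra}\bigr), \]
using $\rme^{\rmi\pi}=-1$; in particular $\varphi(\cdot+\eta)$ is a strictly positive measurable function on $S'(\R)$, depending only on the measurable variable $\la\cdot,\zeta\ra$. Under $\mu$ the variable $\la\cdot,\zeta\ra$ is centered Gaussian with variance $\lv\zeta\rv^2>0$, and
\[ \int_{S'(\R)}\varphi(\omega+\eta)\,\rmd\mu(\omega)=\frac{1}{\sqrt{2\pi}\,\lv\zeta\rv}\int_\R\exp\bigl(\rme^{a_1}\rme^{g}-\tfrac{g^2}{2\lv\zeta\rv^2}\bigr)\,\rmd g=\infty, \]
since the integrand tends to $\infty$ as $g\to\infty$ (the double exponential dominates the Gaussian weight). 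Thus $\varphi(\cdot+\eta)$ is not $\mu$-integrable, and because $\indi_{S'(\R)}\in\cG$ this excludes $\varphi(\cdot+\eta)\in\cG'$. On the other hand $T_\eta\varphi\in\cG'$, as $\varphi\in L^\infty(\mu)\subset\cG'$ and $T_\eta$ maps $\cG'$ into $\cG'$ by Corollary~\ref{cor:translationoperator-continuous}; hence $T_\eta\varphi\neq\varphi(\cdot+\eta)$.

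I do not expect a genuine obstacle here: the only point needing a careful word is that the analytic extension and the pointwise translate $\varphi(\cdot+\eta)$ are legitimately defined and measurable, but this is routine, everything being assembled from the single continuous linear functional $\la\cdot,\zeta\ra$ and the entire map $\exp(-\exp(\cdot))$, with the shift $\omega\mapsto\omega+\eta$ leaving $\omega\mapsto\la\omega,\zeta\ra$ measurable. The one design choice that matters is selecting $\zeta$ so that the imaginary part of $\la\eta,\zeta\ra$ is an odd multiple of $\pi$ (here $\pi$), which turns the otherwise harmless bound $\lv\varphi\rv\le 1$ into the runaway growth $\exp(\rme^{a_1}\rme^{\la\cdot,\zeta\ra})$ after translation; once that is arranged, the divergence of the Gaussian integral is immediate.
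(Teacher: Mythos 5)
Your proof is correct and follows essentially the same route as the paper: a bounded analytic cylinder function of the single variable $\la\cdot,\zeta\ra$ whose complex translate acquires super-Gaussian growth, so that failure of $\mu$-integrability (together with $\indi_{S'(\R)}\in\cG$) rules out membership in $\cG'$, while $T_\eta\varphi\in\cG'$ by continuity of $T_\eta$. The only difference is the explicit witness: you use $\exp\bigl(-\exp(\la\cdot,\zeta\ra)\bigr)$ with the phase normalized so that $\la\Imag\eta,\zeta\ra=\pi$, whereas the paper uses $\exp\bigl(\rmi\la\cdot,\xi\ra^3\bigr)$ with the quantitative condition $-3\lv\xi\rv^2\la\Imag\eta,\xi\ra>1/2$; both choices serve equally well.
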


\begin{proof}
Pick $\xi\in S(\R)$ with $-3\lv\xi\rv^2\la\Imag\eta,\xi\ra>1/2$ and set $\varphi:=\exp\bigl(\rmi\la\cdot,\xi\ra^3\bigr)\in L^\infty(\mu)$. Obviously $\varphi$ has an analytic extension to $S_\C'(\R)$. Since $\lv\varphi(\cdot+\eta)\rv=\exp\bigl(\Real\rmi(\la\cdot,\xi\ra+\la\eta,\xi\ra)^3\bigr)$ and $\la\cdot,\xi\ra$ has a centered Gaussian distribution with variance $\lv\xi\rv^2$, it follows that
\[ \int_{S'(\R)}\lv\varphi(\omega+\eta)\rv\,\rmd\mu(\omega)=\int_\R\exp\Bigl(-3\lv\xi\rv^2\la\Imag\eta,\xi\ra x^2+c_1x+c_0\Bigr)\,\rmd\mu_\R(x)=\infty, \]
where $c_0,c_1\in\R$ depend on $\eta$ and $\xi$. Thus $\varphi(\cdot+\eta)$ is not integrable and hence not an element of $\cG'$.
\end{proof}

Below we show a method of how to obtain a representation for $T_\eta\varphi$ via the concept of infinite-dimensional holomorphy on the space $L_\C^2(\R)$, see~\cite{Mujica85} for extensive background on this topic.

The well-known identity theorem states that a holomorphic function defined on a connected open subset of $L_\C^2(\R)$ which vanishes on a non-empty open subset of $U$ already vanishes identically, see~\cite[Prop.~5.7]{Mujica85}. We prove a similar statement, namely that the same is true if the function vanishes on a non-empty open set of $L^2(\R)$.

\begin{proposition}\label{prop:identity-theorem-version}
Let $U$ be a connected open subset of $L_\C^2(\R)$ and $h\colon U\to\C$ be holomorphic. If $h$ vanishes on a non-empty open subset $V$ of $L^2(\R)$ which is completely contained in $U$, then $h$ vanishes identically.
\end{proposition}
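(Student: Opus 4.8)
The plan is to reduce the statement to the ordinary identity theorem (\cite[Prop.~5.7]{Mujica85}) by producing a non-empty open subset of $L_\C^2(\R)$, contained in $U$, on which $h$ vanishes. The obstacle is purely topological: $L^2(\R)$ is a \emph{real} subspace of $L_\C^2(\R)$ and therefore has empty interior in the complex space, so we cannot directly feed $V$ into the complex identity theorem. We must instead use complex-analytic rigidity in the ``imaginary directions'' to promote vanishing on the real slice $V$ to vanishing on a genuine complex neighbourhood.

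First I would fix a point $\xi_0\in V$ and choose $\rho>0$ so small that the open ball $B_\C(\xi_0,\rho)\subset L_\C^2(\R)$ is contained in $U$ (possible since $U$ is open and $V\subset L^2(\R)\cap U$) and $B_\R(\xi_0,\rho):=B_\C(\xi_0,\rho)\cap L^2(\R)\subset V$ (possible since $V$ is open in $L^2(\R)$). Every $\zeta\in B_\C(\xi_0,\rho)$ can be written as $\zeta=\xi_0+(a+\rmi b)$ with $a,b\in L^2(\R)$, $\lv a+\rmi b\rv<\rho$. The key step is then: for fixed real $a,b$ with $\lv a\rv^2+\lv b\rv^2<\rho^2$, consider the map
\[
D\ni w\longmapsto g(w):=h\bigl(\xi_0+a+w\,b\bigr),
\]
where $D:=\{w\in\C:\lv w\rv\lv b\rv<\sqrt{\rho^2-\lv a\rv^2}\}$ is an open disc in $\C$ containing $[-1,1]$ in its interior (using $\lv a\rv^2+\lv b\rv^2<\rho^2$). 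This $g$ is holomorphic on $D$ as a composition of $h$ with an affine holomorphic map into $U$. For every real $w$ in this disc, $\xi_0+a+wb\in L^2(\R)$ and lies within distance $\rho$ of $\xi_0$, hence in $B_\R(\xi_0,\rho)\subset V$, so $g(w)=0$ there. A holomorphic function on a disc in $\C$ vanishing on a real interval of positive length vanishes identically on $D$ by the classical one-variable identity theorem; in particular $g(\rmi)=0$, i.e.\ $h(\xi_0+a+\rmi b)=h(\zeta)=0$.

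Since $a,b$ were arbitrary subject to $\lv a\rv^2+\lv b\rv^2<\rho^2$, this shows $h\equiv 0$ on all of $B_\C(\xi_0,\rho)$, which is a non-empty open subset of $L_\C^2(\R)$ contained in $U$. Now the hypothesis of the ordinary identity theorem \cite[Prop.~5.7]{Mujica85} is met: $h$ is holomorphic on the connected open set $U$ and vanishes on a non-empty open subset, hence $h$ vanishes identically on $U$. I expect the main obstacle to be nothing conceptual but rather the bookkeeping in the previous paragraph: one must verify that the affine slice $w\mapsto \xi_0+a+wb$ stays inside the ball $B_\C(\xi_0,\rho)$ (hence inside $U$) for all $w\in D$, and that the real points of $D$ suffice to force $g\equiv 0$; both follow from the triangle inequality and the choice of $\rho$, so no estimate is delicate. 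A minor degenerate case to mention is $b=0$, in which the slice is constant and $g\equiv 0$ is immediate, so one may assume $b\ne 0$ when forming $D$.
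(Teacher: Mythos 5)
Your slicing strategy (reduce to the one--variable identity theorem along complex lines through real points, then invoke \cite[Prop.~5.7]{Mujica85}) is a viable alternative to the paper's argument, which instead expands $h$ at $\xi_0$ into continuous $m$-homogeneous polynomials $P_m$, deduces $P_m\equiv 0$ on $L^2(\R)$ from \cite[Prop.~4.4]{Mujica85}, and concludes $P_m=0$ because such polynomials are determined by their restriction to the real subspace. However, the estimate you dismissed as routine is wrong as stated. With $D=\{w\in\C:\lv w\rv\lv b\rv<\sqrt{\rho^2-\lv a\rv^2}\}$ the slice $w\mapsto\xi_0+a+wb$ need not stay inside $B_\C(\xi_0,\rho)$, and its real points need not lie in $B_\R(\xi_0,\rho)\subset V$. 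Concretely, take $a=\tfrac{9}{10}\rho e$ and $b=\tfrac{2}{5}\rho e$ for a unit vector $e\in L^2(\R)$: then $\lv a\rv^2+\lv b\rv^2=\tfrac{97}{100}\rho^2<\rho^2$, so $\zeta=\xi_0+a+\rmi b$ is a point of $B_\C(\xi_0,\rho)$ to which your argument is supposed to apply, and $1\in D$; but $\lv a+1\cdot b\rv=\tfrac{13}{10}\rho>\rho$, so the real slice point $\xi_0+a+b$ leaves the ball, may leave $U$ altogether (so $g$ need not even be defined on all of $D$), and certainly need not lie in $V$. The underlying issue is that on your $D$ the triangle-inequality bound $\lv a+wb\rv\le\lv a\rv+\lv w\rv\lv b\rv$ can reach $\lv a\rv+\sqrt{\rho^2-\lv a\rv^2}>\rho$ whenever $0<\lv a\rv<\rho$, so ``within distance $\rho$ of $\xi_0$'' fails.

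The gap is easy to close, because the complex identity theorem only needs vanishing of $h$ on \emph{some} non-empty open subset of $U$, not on all of $B_\C(\xi_0,\rho)$. Restrict to $\zeta=\xi_0+a+\rmi b$ with $\lv a\rv+\lv b\rv<\rho$ and, for $b\neq 0$, use the disc $D:=\{w\in\C:\lv w\rv<(\rho-\lv a\rv)/\lv b\rv\}$: then $\lv a+wb\rv\le\lv a\rv+\lv w\rv\lv b\rv<\rho$ for every $w\in D$, so the slice stays in $B_\C(\xi_0,\rho)\subset U$ and $g$ is holomorphic on $D$; its real points lie in $B_\R(\xi_0,\rho)\subset V$, so $g$ vanishes on a real interval; and $\rmi\in D$ because $(\rho-\lv a\rv)/\lv b\rv>1$. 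Hence $h(\zeta)=g(\rmi)=0$ for all such $\zeta$, and the set of these $\zeta$ contains $B_\C(\xi_0,\rho/\sqrt{2})$ since $\lv a\rv+\lv b\rv\le\sqrt{2}\,\lv a+\rmi b\rv$; now \cite[Prop.~5.7]{Mujica85} finishes the proof exactly as you intended. With this repair your route is correct and genuinely different from the paper's: it replaces the Taylor-coefficient argument by a one-dimensional restriction argument, at the price of exactly the small amount of ball geometry that went astray above.
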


\begin{proof}
By the identity theorem it suffices to show that $h$ vanishes on a non-empty open subset of $U$. Consider the Taylor expansion of $h$ at a fixed point $\xi_0\in V$: For each $m\in\N$ there exists a continuous $m$-homogeneous polynomial $P_m\colon L_\C^2(\R)\to\C$ such that $h(\xi)=\sum_{m=0}^\infty P_m(\xi-\xi_0)$ for all $\xi$ from a neighborhood of $\xi_0$ in $L_\C^2(\R)$. Knowing that $h\equiv 0$ on $V$ we get $P_m\equiv 0$ on $L^2(\R)$ for every $m\in\N$ by~\cite[Prop.~4.4]{Mujica85}. Since $P_m$ is uniquely determined by its values on $L^2(\R)$, it follows $P_m=0$ and thus $h=0$.
\end{proof}

\begin{theorem}\label{thm:translationrepresentation}
Let $\Phi\in\cG'$ and let $U$ be a connected open subset of $L_\C^2(\R)$, $\eta\in U$ and $(\Phi_\zeta)_{\zeta\in U}$ be a family in $\cG'$ such that
\begin{enumerate}
\item $U\cap L^2(\R)\not=\emptyset$.
\item $\E[\Phi_\zeta]=\E[T_\zeta\Phi]$ for all $\zeta\in U\cap L^2(\R)$.
\item $\E[\Phi_{\eta+\zeta}]=\E[T_\zeta\Phi_\eta]$ for all $\zeta\in(U-\eta)\cap L^2(\R)$.
\item The mapping $U\ni\zeta\mapsto\E[\Phi_\zeta]\in\C$ is holomorphic.
\end{enumerate}
Then $T_\eta\Phi=\Phi_\eta$.
\end{theorem}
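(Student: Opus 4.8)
The plan is to reduce the claim to injectivity of the $S$-transform on $\cG'$, producing the equality $S(T_\eta\Phi)=S\Phi_\eta$ on all of $L_\C^2(\R)$ by two applications of Proposition~\ref{prop:identity-theorem-version}. First I would record the elementary identities used throughout: for any $\Psi\in\cG'$ one has $S\Psi(\zeta)=S(T_\zeta\Psi)(0)=\E[T_\zeta\Psi]$, the function $S\Psi$ is entire on $L_\C^2(\R)$, and $S(T_\eta\Psi)(\xi)=S\Psi(\eta+\xi)$. Thus it suffices to prove $S(T_\eta\Phi)(\xi)=S\Phi_\eta(\xi)$ for every $\xi\in L_\C^2(\R)$.

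Next I would pin down $\E[\Phi_\zeta]$ on $U$ by comparing the two functions $\zeta\mapsto\E[\Phi_\zeta]$ and $\zeta\mapsto S\Phi(\zeta)=\E[T_\zeta\Phi]$. The first is holomorphic on $U$ by hypothesis~(iv), the second is the restriction to $U$ of an entire function, and by hypothesis~(ii) they coincide on $U\cap L^2(\R)$, which by~(i) is a non-empty open subset of $L^2(\R)$ contained in the connected open set $U$. Applying Proposition~\ref{prop:identity-theorem-version} to their difference gives $\E[\Phi_\zeta]=S\Phi(\zeta)$ for all $\zeta\in U$.

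Then I would transfer this to the translate. Set $W:=(U-\eta)\cap L^2(\R)$; since $U-\eta$ is open in $L_\C^2(\R)$ and contains $0$, the set $W$ is a non-empty open subset of $L^2(\R)$. For $\zeta\in W$ we have $\eta+\zeta\in U$, so the previous step yields $\E[\Phi_{\eta+\zeta}]=S\Phi(\eta+\zeta)=S(T_\eta\Phi)(\zeta)$, whereas hypothesis~(iii) together with $S\Phi_\eta(\zeta)=\E[T_\zeta\Phi_\eta]$ yields $\E[\Phi_{\eta+\zeta}]=S\Phi_\eta(\zeta)$. Hence the entire functions $S(T_\eta\Phi)$ and $S\Phi_\eta$ agree on the non-empty open subset $W$ of $L^2(\R)$, so a second application of Proposition~\ref{prop:identity-theorem-version} --- now with $L_\C^2(\R)$ in the role of $U$ --- forces $S(T_\eta\Phi)=S\Phi_\eta$ identically, and injectivity of the $S$-transform gives $T_\eta\Phi=\Phi_\eta$.

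The argument involves no serious computation; the content is organizational, and the main point to get right is that hypothesis~(iv) is exactly what lets one propagate~(ii) off the real subspace in the second step --- without holomorphy of $\zeta\mapsto\E[\Phi_\zeta]$, agreement with $S\Phi$ on $U\cap L^2(\R)$ alone would be useless. The remaining care is in checking that $U\cap L^2(\R)$ and $(U-\eta)\cap L^2(\R)$ are non-empty and open in $L^2(\R)$, for which one uses~(i) and $\eta\in U$ respectively; everything else is bookkeeping with the $S$-transform identities recalled at the outset.
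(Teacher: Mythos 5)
Your proposal is correct and follows essentially the same route as the paper: the identities $S\Phi(\zeta)=\E[T_\zeta\Phi]$ and $S(T_\eta\Phi)(\zeta)=S\Phi(\eta+\zeta)$, a first application of Proposition~\ref{prop:identity-theorem-version} to upgrade hypothesis~(ii) to $\E[\Phi_\zeta]=S\Phi(\zeta)$ on all of $U$, then hypothesis~(iii) on $(U-\eta)\cap L^2(\R)$ and a second application of the proposition to conclude $S(T_\eta\Phi)=S\Phi_\eta$, hence $T_\eta\Phi=\Phi_\eta$ by injectivity of the $S$-transform. Your added checks (openness and non-emptiness of the real trace sets, $0\in(U-\eta)\cap L^2(\R)$) are fine and only make explicit what the paper leaves implicit.
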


\begin{proof}
We know $S\Phi(\zeta)=\E[T_\zeta\Phi]=\E[\Phi_\zeta]$ for all $\zeta\in U\cap L^2(\R)$. Since both sides of the equation are holomorphic in $\zeta\in U$, Proposition~\ref{prop:identity-theorem-version} gives equality for all $\zeta\in U$. Then
\[ S(T_\eta\Phi)(\zeta)=S\Phi(\zeta+\eta)=\E[\Phi_{\eta+\zeta}]=\E[T_\zeta\Phi_\eta]=S\Phi_\eta(\zeta). \]
For all $\zeta\in(U-\eta)\cap L^2(\R)$, which is a non-empty open set in $L^2(\R)$. Another application of Proposition~\ref{prop:identity-theorem-version} yields $S(T_\eta\Phi)=S\Phi_\eta$ and thus $T_\eta\Phi=\Phi_\eta$.
\end{proof}

\begin{corollary}\label{cor:translationrepresentation}
Let $(\Phi_\zeta)_{\zeta\in L_\C^2(\R)}$ be a family in $\cG'$ such that
\begin{enumerate}
\item $\E[\Phi_{\eta+\zeta}]=\E[T_\zeta\Phi_\eta]$ for all $\eta\in L_\C^2(\R)$ and $\zeta\in L^2(\R)$.
\item The mapping $L_\C^2(\R)\ni\zeta\mapsto\E[\Phi_\zeta]\in\C$ is holomorphic.
\end{enumerate}
Then $T_\eta\Phi_\zeta=\Phi_{\zeta+\eta}$ for all $\eta,\zeta\in L_\C^2(\R)$.
\end{corollary}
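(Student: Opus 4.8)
The plan is to derive Corollary~\ref{cor:translationrepresentation} directly from Theorem~\ref{thm:translationrepresentation} by making suitable choices for the connected open set $U$ and the distribution $\Phi$. Fix $\eta,\zeta\in L_\C^2(\R)$; we want to show $T_\eta\Phi_\zeta=\Phi_{\zeta+\eta}$. First I would apply the theorem with $U:=L_\C^2(\R)$, which is connected and open and satisfies $U\cap L^2(\R)=L^2(\R)\not=\emptyset$, so condition~(i) holds. The natural candidate for the distribution in the theorem is $\Phi:=\Phi_\zeta$, and for the family we reparametrize: set $\Psi_\nu:=\Phi_{\zeta+\nu}$ for $\nu\in U$. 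With this choice the point $\eta$ plays the role of the ``$\eta$'' in Theorem~\ref{thm:translationrepresentation}, and we must verify conditions~(ii)--(iv) for $(\Psi_\nu)_{\nu\in U}$ relative to $\Phi=\Phi_\zeta$ and the point $\eta$.

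The verification is then a matter of translating the two hypotheses of the corollary. Condition~(iv) of the theorem, holomorphy of $\nu\mapsto\E[\Psi_\nu]=\E[\Phi_{\zeta+\nu}]$, follows from hypothesis~(ii) of the corollary since $\nu\mapsto\zeta+\nu$ is affine (hence holomorphic) and holomorphy is preserved under composition with it. Condition~(ii) of the theorem reads $\E[\Psi_\nu]=\E[T_\nu\Phi_\zeta]$ for $\nu\in L^2(\R)$; this is exactly hypothesis~(i) of the corollary applied with ``$\eta$'' $=\zeta$ and ``$\zeta$'' $=\nu$, giving $\E[\Phi_{\zeta+\nu}]=\E[T_\nu\Phi_\zeta]$. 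Condition~(iii) of the theorem reads $\E[\Psi_{\eta+\nu}]=\E[T_\nu\Psi_\eta]$ for $\nu\in(U-\eta)\cap L^2(\R)=L^2(\R)$; unwinding the definitions this says $\E[\Phi_{\zeta+\eta+\nu}]=\E[T_\nu\Phi_{\zeta+\eta}]$, which is again hypothesis~(i) of the corollary, now applied with ``$\eta$'' $=\zeta+\eta$ and ``$\zeta$'' $=\nu$. All four conditions of Theorem~\ref{thm:translationrepresentation} are thus met, and its conclusion yields $T_\eta\Phi_\zeta=T_\eta\Phi=\Psi_\eta=\Phi_{\zeta+\eta}$, which is the assertion. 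Since $\eta,\zeta\in L_\C^2(\R)$ were arbitrary, the proof is complete.

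There is essentially no analytic obstacle here; the only thing requiring care is bookkeeping, namely keeping the two meanings of the symbols $\eta$ and $\zeta$ apart — the fixed pair in the corollary versus the running variables in the theorem — and checking that the affine shift by $\zeta$ sends $L^2(\R)$ into $L^2(\R)$ and $L_\C^2(\R)$ into $L_\C^2(\R)$ so that all the intersections appearing in the hypotheses come out to be the full real space. The main point to highlight is that hypothesis~(i) of the corollary is strong enough to supply \emph{both} of the two cocycle-type identities~(ii) and~(iii) demanded by the theorem, simply by specializing its free parameter; once that is observed the rest is immediate.
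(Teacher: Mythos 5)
Your proof is correct and takes essentially the same route as the paper: both reduce the corollary to Theorem~\ref{thm:translationrepresentation} with $U=L_\C^2(\R)$, observing that hypothesis~(i) of the corollary supplies both conditions~(ii) and~(iii) of the theorem and that hypothesis~(ii) gives the holomorphy condition. The only cosmetic difference is that the paper applies the theorem once with $\Phi=\Phi_0$ and then concludes via the composition law $T_\eta T_\zeta\Phi_0=T_{\eta+\zeta}\Phi_0$, whereas you apply it for each fixed $\zeta$ to the shifted family $\nu\mapsto\Phi_{\zeta+\nu}$, which avoids invoking that law but is the same argument in substance.
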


\begin{proof}
The conditions of Theorem~\ref{thm:translationrepresentation} are trivially fulfilled for $\Phi=\Phi_0$, $U=L_\C^2(\R)$, the family $(\Phi_\zeta)_{\zeta\in L_\C^2(\R)}$ and an arbitrary $\eta\in L_\C^2(\R)$. Hence $T_\eta\Phi_0=\Phi_\eta$ for all $\eta\in L_\C^2(\R)$. It follows $T_\eta\Phi_\zeta=T_\eta T_\zeta\Phi_0=T_{\eta+\zeta}\Phi_0=\Phi_{\eta+\zeta}$ for all $\eta,\zeta\in L_\C^2(\R)$.
\end{proof}

\begin{example}\label{ex:translation-pointwise}
We present some applications of Corollary~\ref{cor:translationrepresentation}. The first two examples can also be verified by computing the chaos decomposition directly.
\begin{itemize}
\item If $\Phi=P(\la\cdot,\xi_1\ra,\dotsc,\la\cdot,\xi_d\ra)\in\cP_\xi$ for some polynomial $P\colon\C^d\to\C$ and $\xi\in L^2(\R)^d$, then $T_\eta\Phi=P(\la\cdot,\xi_1\ra+\la\eta,\xi_1\ra,\dotsc,\la\cdot,\xi_d\ra+\la\eta,\xi_d\ra)$ for all $\eta\in L_\C^2(\R)$. In particular $T_\eta\Phi\in\cP_\xi$, thus $\cP_\xi$ is left invariant by $T_\eta$.
\item Fix $\xi\in L_\C^2(\R)$ and set $\Phi_\zeta:=\exp(\la\cdot,\xi\ra+\la\zeta,\xi\ra)\in\cG\subset\cG'$ for $\zeta\in L_\C^2(\R)$. From Lemma~\ref{lem:realtranslationpointwise} we know $T_\zeta\Phi_\eta=\Phi_{\eta+\zeta}$ for all $\zeta\in L^2(\R)$. The mapping
\[ L_\C^2(\R)\ni\zeta\longmapsto\E[\Phi_\zeta]=\exp\Bigl(\frac{1}{2}\la\xi,\xi\ra+\la\zeta,\xi\ra\Bigr)\in\C \]
is obviously holomorphic. Hence $T_\eta\Phi_\zeta=\Phi_{\zeta+\eta}$ for all $\eta,\zeta\in L_\C^2(\R)$.
\item
Let $V$ be a polynomial from the class of potentials under consideration and fix $x\in\C$, $d\in\N$, $c_1,\dotsc,c_d>0$ and $\xi_1,\dotsc,\xi_d\in L^2(\R)$. Consider
\[ \Phi_\zeta:=\exp\biggl(-\rmi\sum_{k=1}^dc_kV\bigl(x+\sqrt{\rmi}\la\cdot,\xi_k\ra+\sqrt{\rmi}\la\zeta,\xi_k\ra\bigr)\biggr) \]
for $\zeta\in L_\C^2(\R)$. As in the proof of Lemma~\ref{lem:boundedness-of-exponential-term} it is easy to see that $\Phi_\zeta\in L^2(\mu)$. Furthermore $T_\zeta\Phi_\eta=\Phi_{\zeta+\eta}$ for $\eta\in L_\C^2(\R)$ and $\zeta\in L^2(\R)$ by Lemma~\ref{lem:realtranslationpointwise}. Note that $\E[\Phi_\zeta]$ is locally bounded in $\zeta\in L_\C^2(\R)$, which can be seen as in the proof of Lemma~\ref{lem:boundedness-of-exponential-term} as well. To apply Corollary~\ref{cor:translationrepresentation} it is left to show holomorphy of $\E[\Phi_\zeta]$, for which by~\cite[Prop.~8.6,~Thm.~8.7]{Mujica85} it suffices to show that this mapping is G-holomorphic, i.e.\ for all $\zeta_0,\zeta_1\in L_\C^2(\R)$ the mapping $\C\ni z\mapsto\E[\Phi_{\zeta_0+z\zeta_1}]\in\C$ is holomorphic. This can be proven by applying the theorems of Fubini and Morera: Let $\gamma$ be a closed curve in $\C$. Note that $\C\ni z\mapsto\Phi_{\zeta_0+z\zeta_1}\in\C$ is holomorphic almost surely, since $V$ is a polynomial. Hence Morera's theorem applied to this function yields
\[ \int_\gamma\E[\Phi_{\zeta_0+z\zeta_1}]\,\rmd z=\E\biggl[\int_\gamma\Phi_{\zeta_0+z\zeta_1}\,\rmd z\biggr]=\E[0]=0, \]
where exchange of expectation and integration is possible due to locally boundedness of $z\mapsto\E[\Phi_{\zeta_0+z\zeta_1}]$ and compactness of the range of the path in $\C$. Then another application of Morera's theorem shows that $\E[\Phi_{\zeta_0+z\zeta_1}]$ is in fact holomorphic in $z\in\C$, so Corollary~\ref{cor:translationrepresentation} shows $T_\eta\Phi_\zeta=\Phi_{\zeta+\eta}$ for all $\eta,\zeta\in L_\C^2(\R)$.
\end{itemize}
\end{example}

\end{subsubsection}

\begin{subsubsection}{Projection Operator}

Let $\eta\in L^2(\R)$ with $\lv\eta\rv=1$. The projection operator $P_\eta$, introduced in~\cite{Westerkamp1995}, is a continuous linear operator $P_\eta\colon\cG\to\cG$, which aims to remove the dependency on the monomial $\la\cdot,\eta\ra$ from a random variable. Again, the domain of definition $\cG$ is too small for our purposes. In~\cite{GRS14} an extension of $P_\eta$ to $\overline{\cP_\xi}^{L^p(\mu)}$ was presented, where $\xi\in L^2(\R)^d$ is such that $\eta\not\in\spann\{\xi_1,\dotsc,\xi_d\}$, see Proposition~\ref{prop:GRS14}. This was sufficient for realizing the product with Donsker's delta in the special case $a=0$, i.e.\ multiplication with $\delta(\la\cdot,\eta\ra)$. To generalize to the case $a\not=0$, we need a stronger result, namely a continuous extension of $P_\eta$ to the closure of $\cP_\xi$ in $\cG_q$, where $q$ is negative. A sufficient estimate is proven in Proposition~\ref{prop:projcontonnegative}.

We repeat the definition of $P_\eta$ on $\cG$: Let $P_{\bot,\eta}\colon L^2(\R)\to L^2(\R)$ denote the orthogonal projection onto $\{\eta\}^{\bot}$, so $P_{\bot,\eta}\xi=\xi-\la\xi,\eta\ra\eta$ for $\xi\in L^2(\R)$. We also consider the complexification of this operator and denote it by the same symbol. Then $P_\eta\colon\cP\to\cG$ is defined by
\begin{equation}\label{eq:polyprojection}
P_\eta\varphi:=\sum_{n=0}^N\biggl\la\wickc,P_{\bot,\eta}^\otn\sum_{k=0}^{\lfloor (N-n)/2\rfloor}\frac{(n+2k)!(-1)^k}{n!k!2^k}\,\eta^{\ot 2k}\wot_{2k}\varphi^{(n+2k)}\biggr\ra
\end{equation}
for $\varphi\in\cP$ as in~\eqref{eq:smooth-polynomial-chaos-decomposition}. It was shown in~\cite[Thm.~71]{Westerkamp1995} that $P_\eta$ extends uniquely to a continuous linear operator $P_\eta\colon\cG\to\cG$.

In the special case $\eta\in S(\R)$ the operator $P_{\bot,\eta}$ can be extended to $P_{\bot,\eta}\colon S'(\R)\to S'(\R)$ via $P_{\bot,\eta}\omega:=\omega-\la\omega,\eta\ra\eta$ for $\omega\in S'(\R)$. In that case, for $\varphi\in\cP$ it holds $P_\eta\varphi=\varphi\circ P_{\bot,\eta}=\varphi(\cdot-\la\cdot,\eta\ra\eta)$, i.e.\ the projection operator projects the argument $\omega$ of the random variable $\varphi$ onto the space $\{\omega\in S'(\R):\la\omega,\eta\ra=0\}$. This is the original motivation for this definition of the projection operator. Note however that $\la\cdot,\eta\ra$ is a standard Gaussian random variable, so $\{\la\cdot,\eta\ra=0\}$ is a $\mu$-nullset and it is the special property of a smooth polynomial to have a unique pointwisely defined continuous version with respect to the Gaussian measure $\mu$ which allows evaluation on that nullset.

The following is a result from~\cite{GRS14}.

\begin{proposition}\label{prop:GRS14}
Let $\eta,\xi_1,\dotsc,\xi_d\in L^2(\R)$ with $\lv\eta\rv=1$ such that $\eta\not\in\spann\{\xi_1,\dotsc,\xi_d\}$. Then for each $p\in[1,\infty)$ the operator $P_\eta$ extends uniquely to a bounded linear operator from $\overline{\cP_\xi}^{L^p(\mu)}$ to $L^p(\mu)$, where $\xi=(\xi_1,\dotsc,\xi_d)$. For $f(\la\cdot,\xi_1\ra,\dotsc,\la\cdot,\xi_d\ra)\in\overline{\cP_\xi}^{L^p(\mu)}$ as in Lemma~\ref{lem:finite-dimensional-Lp-isometry} this extension is given by
\[  P_{\eta}f(\la\cdot,\xi_1\ra,\dotsc,\la\cdot,\xi_d\ra)=f(\la\cdot,P_{\bot,\eta}\xi_1\ra,\dotsc,\la\cdot,P_{\bot,\eta}\xi_d\ra).\]
\end{proposition}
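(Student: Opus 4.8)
The plan is to reduce everything to the polynomial core $\cP_\xi$: on it I verify the asserted formula directly, then establish a uniform $L^p(\mu)$-bound, and finally let the bounded linear transformation theorem produce the unique bounded extension, the closed formula for which follows by density. Throughout I write $M:=(\la\xi_k,\xi_l\ra)_{k,l=1,\dotsc,d}$ and $M':=(\la P_{\bot,\eta}\xi_k,P_{\bot,\eta}\xi_l\ra)_{k,l=1,\dotsc,d}$; since $P_{\bot,\eta}$ is self-adjoint, $M'=M-vv^{\mathsf T}$ with $v:=(\la\xi_1,\eta\ra,\dotsc,\la\xi_d,\eta\ra)^{\mathsf T}$. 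Replacing $\xi_1,\dotsc,\xi_d$ by a basis of their span — which changes neither $\cP_\xi$ nor the hypothesis $\eta\notin\spann\{\xi_k\}$ — I may assume $\xi_1,\dotsc,\xi_d$ linearly independent, so $M\succ 0$.

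\emph{The formula on $\cP_\xi$.} For $\eta\in S(\R)$ and $\xi_k\in S(\R)$ the identity $P_\eta\varphi=\varphi\circ P_{\bot,\eta}$ recalled above, together with self-adjointness of $P_{\bot,\eta}$, gives for $\varphi=P(\la\cdot,\xi_1\ra,\dotsc,\la\cdot,\xi_d\ra)$ the formula $P_\eta\varphi=P(\la\cdot,P_{\bot,\eta}\xi_1\ra,\dotsc,\la\cdot,P_{\bot,\eta}\xi_d\ra)$. This persists for arbitrary $\xi_k\in L^2(\R)$ by continuity of $P_\eta$ on $\cG$ and of $\zeta\mapsto\la\cdot,\zeta\ra$ into $\cG$, and then for arbitrary $\eta\in L^2(\R)$ with $\lv\eta\rv=1$ because, for $\varphi$ fixed, the finitely many kernels of $P_\eta\varphi$ displayed in~\eqref{eq:polyprojection} depend continuously on $\eta$ (they are assembled from $P_{\bot,\eta}^\otn$, $\eta^{\ot 2k}$ and symmetric contractions), as does $P(\la\cdot,P_{\bot,\eta}\xi_1\ra,\dotsc,\la\cdot,P_{\bot,\eta}\xi_d\ra)$, while $\{\eta\in S(\R):\lv\eta\rv=1\}$ is dense in $\{\eta\in L^2(\R):\lv\eta\rv=1\}$.

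\emph{The $L^p$-bound, which is the crux.} Exactly here the hypothesis $\eta\notin\spann\{\xi_k\}$ is used: it forces $\spann\{\xi_k\}\cap\R\eta=\{0\}$, hence injectivity of $P_{\bot,\eta}$ on $\spann\{\xi_k\}$, hence linear independence of $P_{\bot,\eta}\xi_1,\dotsc,P_{\bot,\eta}\xi_d$, so that $M'\succ 0$. Writing $\varphi=f(\la\cdot,\xi_1\ra,\dotsc,\la\cdot,\xi_d\ra)$ with $f$ a polynomial, Lemma~\ref{lem:finite-dimensional-Lp-isometry} applied once to $\xi$ and once to $P_{\bot,\eta}\xi$, together with the formula just proved, gives $\lV\varphi\rV_{L^p(\mu)}=\lV f\rV_{L^p(\R^d,\mu_M)}$ and $\lV P_\eta\varphi\rV_{L^p(\mu)}=\lV f\rV_{L^p(\R^d,\mu_{M'})}$. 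Since $M-M'=vv^{\mathsf T}\succeq 0$ and inversion is order-reversing on positive definite matrices, $M'^{-1}-M^{-1}\succeq 0$, so the Radon--Nikodym density
\[ \frac{\rmd\mu_{M'}}{\rmd\mu_M}(x)=\sqrt{\frac{\det M}{\det M'}}\,\exp\Bigl(-\tfrac12\,x^{\mathsf T}(M'^{-1}-M^{-1})x\Bigr)\le\sqrt{\frac{\det M}{\det M'}} \]
is bounded, whence $\lV P_\eta\varphi\rV_{L^p(\mu)}\le(\det M/\det M')^{1/(2p)}\,\lV\varphi\rV_{L^p(\mu)}$, uniformly over $\cP_\xi$.

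\emph{Conclusion.} By density of $\cP_\xi$ in $\overline{\cP_\xi}^{L^p(\mu)}$ and completeness of $L^p(\mu)$, the bounded linear transformation theorem yields a unique bounded extension $P_\eta\colon\overline{\cP_\xi}^{L^p(\mu)}\to L^p(\mu)$. For $f\in L^p(\R^d,\mu_M)$ I pick polynomials $P_m\to f$ in $L^p(\R^d,\mu_M)$; the density bound above shows $P_m\to f$ in $L^p(\R^d,\mu_{M'})$ as well, so by Lemma~\ref{lem:finite-dimensional-Lp-isometry} both $P_m(\la\cdot,\xi_k\ra)\to f(\la\cdot,\xi_k\ra)$ and $P_\eta P_m(\la\cdot,\xi_k\ra)=P_m(\la\cdot,P_{\bot,\eta}\xi_k\ra)\to f(\la\cdot,P_{\bot,\eta}\xi_k\ra)$ in $L^p(\mu)$, which is the claimed identity. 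I expect the only genuine difficulty to be the $L^p$-bound: its whole content is the boundedness of the Radon--Nikodym derivative between the two centred Gaussians, i.e.\ the operator-antimonotonicity of matrix inversion, and the non-containment hypothesis is precisely what keeps $\mu_{M'}$ non-degenerate — were it to fail, $P_\eta$ would amount to evaluation on a $\mu_M$-null affine subspace, for which no such bound can hold.
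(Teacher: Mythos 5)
Your proof is correct. Note that the paper itself does not prove this proposition; it is imported verbatim from~\cite{GRS14}, so there is no internal argument to compare against. Your route is the natural one in this framework and presumably the intended one: verify the projection formula on $\cP_\xi$, use Lemma~\ref{lem:finite-dimensional-Lp-isometry} twice to transport the question to $\R^d$, and observe that the hypothesis $\eta\notin\spann\{\xi_1,\dotsc,\xi_d\}$ is exactly what makes $M'=M-vv^{\mathsf T}$ nondegenerate, so that the Radon--Nikodym density $\rmd\mu_{M'}/\rmd\mu_M$ is bounded (operator antitonicity of inversion) and the $L^p$-bound, the BLT extension and the identification of the extension by density all go through; the final identification also tacitly uses that $\mu_M$ and $\mu_{M'}$ are equivalent, which your bound together with nondegeneracy provides. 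The one step I would spell out more carefully is the passage from Schwartz $\eta$ to general unit $\eta\in L^2(\R)$ in your first paragraph: the kernel formula~\eqref{eq:polyprojection} is stated for $\varphi\in\cP$, i.e.\ for Schwartz kernels, whereas your $\varphi\in\cP_\xi$ has merely $L^2$ kernels, so before speaking of ``the kernels of $P_\eta\varphi$ displayed in~\eqref{eq:polyprojection}'' you should note that this representation persists on $\cP_\xi$ (approximate $\xi_k$ by Schwartz functions and use $\cG$-continuity of $P_\eta$ together with continuity of $P_{\bot,\eta}^{\otn}$ and of the contractions $\wot_{2k}$ on the finitely many kernels); with that remark the continuity in $\eta$ you invoke is indeed available and the argument is complete. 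This is a presentational gap, not a mathematical one.
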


It is clear that the condition $\eta\not\in\spann\{\xi_1,\dotsc,\xi_d\}$ is necessary: Consider an arbitrary $f\in L^2(\R,\mu_1;\C)$, where $\mu_1$ denotes the standard Gaussian measure on $\R$. Then $f(\la\cdot,\eta\ra)\in\overline{\cP_{(\eta)}}^{L^2(\mu)}$ and $f(\la\cdot,P_{\bot,\eta}\eta\ra)=f(0)$ corresponds to evaluation of $f$ at $0$, which is not well-defined.

This proposition is useful to obtain a representation of $P_\eta\varphi$ when $\varphi\in\overline{\cP_\xi}^{L^p(\mu)}$. It also suffices to prove Theorem~\ref{thm:wickformula-extended} below in the special case $a=0$, see~\cite{GRS14}. As mentioned before, we need continuity of $P_\eta$ with respect $\cG_q$ for some $q<0$ to cover the case $a\not=0$. The following estimate is used in the corresponding proof.

\begin{lemma}\label{lem:restricted-kernel-estimate}
Let $\eta,\xi_1,\dotsc,\xi_d\in L^2(\R)$ with $\lv\eta\rv=1$ and $\eta\not\in E:=\spann_\C\{\xi_1,\dotsc,\xi_d\}$. Then there exists $c<1$ with
\[ \bigl\lv\eta^{\ot 2k}\ot_{2k}\varphi^{(n+2k)}\bigr\rv\le c^{2k}\bigl\lv\varphi^{(n+2k)}\bigr\rv \]
for all $n,k\in\N$ and $\varphi^{(n+2k)}\in E^{\ot n+2k}$, where $E^{\ot n+2k}$ is the $(n+2k)$-fold Hilbert space tensor product of $E$.
\end{lemma}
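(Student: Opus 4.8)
The plan is to exploit the fact that $\eta\notin E$, so the orthogonal projection $P_{\bot,\eta}$ restricted to the finite-dimensional complex space $E$ is injective. Concretely, first I would observe that for $\varphi^{(n+2k)}\in E^{\ot n+2k}$ the contraction $\eta^{\ot 2k}\ot_{2k}\varphi^{(n+2k)}$ only depends on $\eta$ through its orthogonal projection onto $E$; more precisely, writing $\eta = \eta_E + \eta_\perp$ with $\eta_E\in E$ and $\eta_\perp\perp E$, one has $\la\eta,\zeta\ra = \la\eta_E,\zeta\ra$ for every $\zeta\in E$, hence $\eta^{\ot 2k}\ot_{2k}\varphi^{(n+2k)} = \eta_E^{\ot 2k}\ot_{2k}\varphi^{(n+2k)}$ by the defining property of the contraction on simple tensors and bilinearity. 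Because $\eta\notin E$ and $\lv\eta\rv = 1$, we have $\lv\eta_E\rv < 1$; set $c:=\lv\eta_E\rv$ (or, if $\eta_E=0$, any fixed $c\in(0,1)$, in which case the contraction vanishes for $k\ge 1$ and the bound is trivial).

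Next I would estimate the contraction norm. By the same reduction, it suffices to bound $\lv\eta_E^{\ot 2k}\ot_{2k}\varphi^{(n+2k)}\rv$. One can estimate the $2k$-fold contraction by iterating the elementary bound stated in the excerpt, $\lv\psi^{(n)}\ot_j\theta^{(m)}\rv\le\lv\psi^{(n)}\rv\cdot\lv\theta^{(m)}\rv$ (applied to single contractions against the unit vectors $\eta_E/c$), but that only yields the factor $1$, not $c^{2k}$. To extract the factor $c^{2k}$ I would instead argue directly on simple tensors: every $\varphi^{(n+2k)}\in E^{\ot n+2k}$ is a linear combination of tensors $\zeta_1\ot\dotsb\ot\zeta_{n+2k}$ with $\zeta_i\in E$, and $\eta_E^{\ot 2k}\ot_{2k}(\zeta_1\ot\dotsb\ot\zeta_{n+2k})$ is (a symmetrization of) $\prod_{i\in I}\la\eta_E,\zeta_i\ra$ times $\bigotimes_{i\notin I}\zeta_i$ summed over index sets $I$ of size $2k$; each factor $\la\eta_E,\zeta_i\ra$ has absolute value at most $c\lv\zeta_i\rv$ by Cauchy--Schwarz. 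The cleanest way to package this is to choose an orthonormal basis $e_1,\dotsc,e_m$ of $E$ with $e_1 = \eta_E/c$ (assuming $\eta_E\neq 0$; this is possible since $\eta_E\in E$), expand $\varphi^{(n+2k)}$ in the induced orthonormal basis of $E^{\ot n+2k}$, and compute the contraction coefficient-wise: the contraction $\eta_E^{\ot 2k}\ot_{2k}$ against a basis tensor $e_{j_1}\ot\dotsb\ot e_{j_{n+2k}}$ produces $c^{2k}$ times a combinatorial sum over ways of pairing $2k$ of the slots with the index $1$, and a direct Parseval-type computation then gives $\lv\eta_E^{\ot 2k}\ot_{2k}\varphi^{(n+2k)}\rv\le c^{2k}\lv\varphi^{(n+2k)}\rv$, the combinatorial coefficients being absorbed because contraction is a contraction on each factor.

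The main obstacle is the bookkeeping in that last step: keeping track of the symmetrization and the multinomial coefficients that appear when $2k$ slots of a symmetric tensor are contracted, and verifying that after the obvious Cauchy--Schwarz estimate in each slot the surviving coefficient is exactly $c^{2k}$ (and not, say, $c^{2k}$ times a factor growing in $k$). The way I would control this is to note that it is enough to prove the inequality for $\varphi^{(n+2k)}$ of the special form $\zeta^{\ot n+2k}$, $\zeta\in E$ with $\lv\zeta\rv = 1$, since such tensors span $E^{\ot n+2k}$ and the operator $\eta_E^{\ot 2k}\ot_{2k}(\cdot)$ together with the norm are continuous; for $\zeta^{\ot n+2k}$ one has $\eta_E^{\ot 2k}\ot_{2k}\zeta^{\ot n+2k} = \la\eta_E,\zeta\ra^{2k}\zeta^{\ot n}$ directly from the defining relation, with norm $\lv\la\eta_E,\zeta\ra\rv^{2k}\le c^{2k}$, and then a polarization / density argument upgrades this to arbitrary $\varphi^{(n+2k)}\in E^{\ot n+2k}$. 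This reduces the whole lemma to the single clean identity on simple symmetric tensors plus Cauchy--Schwarz, and the constant $c = \lv\eta_E\rv < 1$ depending only on $\eta$ and $E$ is exactly what is claimed.
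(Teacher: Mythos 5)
Your core idea is the same as the paper's: the contraction only sees the component of $\eta$ in $E$, and the relevant constant is $c=\lv\eta_E\rv<1$, which is exactly the operator norm $\lV Q\rV_{L(E,\C)}$ of the functional $Q\xi=\la\eta,\xi\ra$ used in the paper's proof (the paper then writes $\eta^{\ot 2k}\ot_{2k}\varphi^{(n+2k)}=(Q^{\ot 2k}\ot\Id^{\otn})\varphi^{(n+2k)}$ and uses that operator norms are submultiplicative under tensor products). However, the step you ultimately rely on to finish — proving the inequality only for $\varphi^{(n+2k)}=\zeta^{\ot n+2k}$, $\zeta\in E$, and then upgrading ``by polarization / density'' — is a genuine gap: a norm \emph{inequality} verified on a spanning (or total) family of vectors does not extend to their linear combinations; linearity propagates identities, not bounds, so no polarization argument can turn the estimate on powers $\zeta^{\ot n+2k}$ into an operator-norm bound on all of $E^{\ot n+2k}$. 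Your orthonormal-basis/Parseval sketch could in principle be completed, but as written you explicitly replace it by the invalid reduction, so the quantitative step of the proof is not established.

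Ironically, the route you dismiss is the one that closes the argument immediately: since the contraction is linear in its first argument, $\eta_E^{\ot 2k}\ot_{2k}\varphi^{(n+2k)}=c^{2k}\,\bigl(\eta_E/c\bigr)^{\ot 2k}\ot_{2k}\varphi^{(n+2k)}$, and the contraction estimate $\lv u\ot_{2k}v\rv\le\lv u\rv\,\lv v\rv$ applied with the unit tensor $(\eta_E/c)^{\ot 2k}$ yields precisely the factor $c^{2k}$, not $1$ as you claim; equivalently, $\lv\eta_E^{\ot 2k}\rv=\lv\eta_E\rv^{2k}=c^{2k}$ and the bound follows in one line. Alternatively, use the paper's formulation with $Q^{\ot 2k}\ot\Id^{\otn}$ and $\lV Q\rV_{L(E,\C)}=c<1$. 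With either of these replacing your final step (and keeping your correct observations that $\la\eta,\zeta\ra=\la\eta_E,\zeta\ra$ for $\zeta\in E$ — which uses that $\eta,\xi_1,\dotsc,\xi_d$ are real — and that $\eta\not\in E$ forces $\lv\eta_E\rv<1$, with the degenerate case $\eta_E=0$ handled as you do), the proof is correct and essentially identical to the paper's.
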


\begin{proof}
Define the linear operator $Q\colon E\to\C$ by $Q\xi:=\la\eta,\xi\ra$ for $\xi\in E$. Then
\[ \eta^{\ot 2k}\ot_{2k}\varphi^{(n+2k)}=(Q^{\ot 2k}\ot\Id^\otn)\varphi^{(n+2k)} \]
for $\varphi^{(n+2k)}\in E^{\ot n+2k}$.
Since $\eta\not\in E$ we have $c:=\lV Q\rV_{L(E,\C)}<1$, where $\lV Q\rV_{L(E,\C)}$ denotes the operator norm of $Q$. Thus for the operator norm of $Q^{\ot 2k}\ot\Id^\otn$ it holds $\lV Q^{\ot 2k}\ot\Id^\otn\rV_{L(E^{\ot n+2k},\C)}\le c^{2k}$ which proves the assertion.
\end{proof}

\begin{proposition}\label{prop:projcontonnegative}
Let $\eta,\xi_1,\dotsc,\xi_d\in L^2(\R)$ with $\lv\eta\rv=1$ such that $\eta\not\in\spann\{\xi_1,\dotsc,\xi_d\}$ and set $\xi=(\xi_1,\dotsc,\xi_d)$. Then there exist $q,r<0$ and $C_2=C_2(q,r,\eta,\xi)\in\R$ such that $\lV P_\eta\varphi\rV_r\le C_2\lV\varphi\rV_q$ for all $\varphi\in\cP_\xi$. In particular, $P_\eta$ extends uniquely to a bounded linear operator from the closure of $\cP_\xi$ in $\cG_q$ to $\cG_r$.
\end{proposition}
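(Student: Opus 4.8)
The plan is to run the argument behind Proposition~\ref{prop:translationoperator-continuous} once more, now with the extra damping supplied by Lemma~\ref{lem:restricted-kernel-estimate}. For $\varphi\in\cP_\xi$ as in~\eqref{eq:smooth-polynomial-chaos-decomposition} every kernel $\varphi^{(n)}$ lies in the Hilbert space tensor product $E^{\ot n}$ with $E=\spann_\C\{\xi_1,\dotsc,\xi_d\}$. Writing $P_\eta\varphi=\sum_{n=0}^N\la\wickc,\psi^{(n)}\ra$ with
\[ \psi^{(n)}=P_{\bot,\eta}^\otn\sum_{k=0}^{\lfloor(N-n)/2\rfloor}\frac{(n+2k)!(-1)^k}{n!\,k!\,2^k}\,\eta^{\ot 2k}\wot_{2k}\varphi^{(n+2k)} \]
as in~\eqref{eq:polyprojection}, and using that $P_{\bot,\eta}^\otn$ and the symmetrization are contractions together with Lemma~\ref{lem:restricted-kernel-estimate}, one obtains $\lv\psi^{(n)}\rv\le\sum_{k=0}^{\lfloor(N-n)/2\rfloor}\frac{(n+2k)!}{n!\,k!\,2^k}c^{2k}\lv\varphi^{(n+2k)}\rv$ with the constant $c<1$ from that lemma.

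To this finite sum I would apply the Cauchy--Schwarz inequality exactly as in Proposition~\ref{prop:translationoperator-continuous}, splitting off the weight $\sqrt{(n+2k)!\,2^{q(n+2k)}}$ which, once summed over $k$, is controlled by $\lV\varphi\rV_q^2$. Using $\frac{(n+2k)!}{(n!)^2(k!)^2\,2^{2k}}=\binom{n+2k}{2k}\frac{(2k)!}{n!\,(k!)^2\,2^{2k}}\le\frac{1}{n!}\binom{n+2k}{2k}$, where the estimate $\binom{2k}{k}\le 4^k$ for the central binomial coefficient enters, and then summing $\lV P_\eta\varphi\rV_r^2=\sum_n n!\,2^{rn}\lv\psi^{(n)}\rv^2$ with a rearrangement of summation, one arrives at
\[ \lV P_\eta\varphi\rV_r^2\le\lV\varphi\rV_q^2\sum_{k=0}^\infty c^{4k}\,2^{-2qk}\sum_{n=0}^\infty\binom{n+2k}{2k}2^{(r-q)n}. \]

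The only real choice in the proof is that of $q,r<0$. Invoking the generating function $\sum_{n\ge 0}\binom{n+2k}{2k}x^n=(1-x)^{-(2k+1)}$, valid for $\lv x\rv<1$, the inner sum equals $(1-2^{r-q})^{-(2k+1)}$ as soon as $r<q$, and the remaining $k$-series is geometric with ratio $c^4\,2^{-2q}(1-2^{r-q})^{-2}$; it therefore suffices to arrange $c^4\,2^{-2q}(1-2^{r-q})^{-2}<1$. Since $c<1$ one first fixes $q<0$ close enough to $0$ that $c^4\,2^{-2q}<1$ (possible because $2^{-2q}\downarrow 1$ as $q\uparrow 0$), and then takes $r<q$ negative enough that $(1-2^{r-q})^{-2}$, which tends to $1$ as $r\to-\infty$, does not spoil the inequality. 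With such $q,r$ the series converges, $C_2$ can be read off explicitly, and the extension to the closure of $\cP_\xi$ in $\cG_q$ follows by density and completeness of $\cG_r$. I expect the balancing of $q$ and $r$ against the damping constant $c$ to be the crux: it is precisely why, in contrast to the translation operator of Proposition~\ref{prop:translationoperator-continuous}, one obtains continuity only into a \emph{specific} negative-index space rather than into every $\cG_r$. The remaining estimates are a verbatim repetition of the computations there.
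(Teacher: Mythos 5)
Your argument is correct and essentially the paper's own proof: the same kernel formula~\eqref{eq:polyprojection}, the same Cauchy--Schwarz splitting with the weight $2^{q(n+2k)}$, the damping from Lemma~\ref{lem:restricted-kernel-estimate}, and your smallness condition $c^4\,2^{-2q}(1-2^{r-q})^{-2}<1$ is exactly equivalent to the paper's choice $-\eps<q<0$, $2^{-(q+\eps)}+2^{r-q}<1$ upon writing $c^2=2^{-\eps}$. The only difference is bookkeeping: you resum the double series with the generating function $(1-x)^{-(2k+1)}$ and a geometric series in $k$ (keeping the accurate coefficient $\binom{n+2k}{2k}$), whereas the paper reindexes and invokes the binomial theorem in $n$.
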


\begin{proof}
By Lemma~\ref{lem:restricted-kernel-estimate} there exists $\eps>0$ such that
\[ \bigl\lv\eta^{\ot 2k}\wot_{2k}\varphi^{(n+2k)}\bigr\rv\le 2^{-k\eps}\cdot\bigl\lv\varphi^{(n+2k)}\bigr\rv \]
for $\varphi\in\cP_\xi$ as in~\eqref{eq:smooth-polynomial-chaos-decomposition} and all $n,k\in\N$. Choose $q,r<0$ such that $-\eps<q<0$ and $2^{-(q+\eps)}+2^{r-q}<1$. Using $(2k)!\le k!k!2^{2k}$ we compute that for the $n$-th kernel $\tilde{\varphi}^{(n)}$ of $P_\eta\varphi$ as in~\eqref{eq:polyprojection} it holds
\begin{align*}
\bigl\lv\tilde{\varphi}^{(n)}\bigr\rv^2
 & \le \biggl(\sum_{k=0}^{\lfloor (N-n)/2\rfloor}\frac{(n+2k)!}{n!k!2^k}\bigl\lv\eta^{\ot 2k}\wot_{2k}\varphi^{(n+2k)}\bigr\rv\biggr)^2\\
 & \le \biggl(\sum_{k=0}^{\lfloor (N-n)/2\rfloor}(n+2k)!2^{q(n+2k)}\bigl\lv\varphi^{(n+2k)}\bigr\rv^2\biggr)\cdot\biggl(\sum_{k=0}^{\lfloor (N-n)/2\rfloor}\frac{(n+2k)!}{n!n!k!k!2^{2k}}2^{-q(n+2k)}2^{-2k\eps}\biggr)\\
 & \le \lV\varphi\rV_q^2\cdot\frac{1}{n!}2^{-qn}\sum_{k=0}^{N-n}\binom{n+k}{k}2^{-(q+\eps)k}
\end{align*}
which yields
\begin{align*}
\lV P_\eta\varphi\rV_r^2=\sum_{n=0}^Nn!2^{rn}\bigl\lv\tilde{\varphi}^{(n)}\bigr\rv^2
 & \le \lV\varphi\rV_q^2\cdot\sum_{n=0}^N\sum_{k=0}^{N-n}\binom{n+k}{k}2^{(r-q)n}2^{-(q+\eps)k}\\
 & = \lV\varphi\rV_q^2\cdot\sum_{n=0}^N\sum_{k=0}^n\binom{n}{k}2^{(r-q)(n-k)}2^{-(q+\eps)k}\\
 & = \lV\varphi\rV_q^2\cdot\sum_{n=0}^N\bigl(2^{r-q}+2^{-(q+\eps)}\bigr)^n\\
 & \le\bigl(1-2^{r-q}-2^{-(q+\eps)}\bigr)^{-1}\cdot\lV\varphi\rV_q^2
\end{align*}
and the statement is proven.
\end{proof}

\end{subsubsection}

\begin{subsubsection}{Pointwise Multiplication with Donsker's Delta}

It was proven in~\cite[Thm.~4.24]{Vogel2010} that for $\varphi\in\cG$, $a\in\C$ and $\eta\in L^2(\R)\setminus\{0\}$ it holds
\[ \delta(\la\cdot,\eta\ra-a)\cdot\varphi=\delta(\la\cdot,\eta\ra-a)\diamond P_{\frac{\eta}{\lv\eta\rv}}T_{\frac{a\eta}{\lv\eta\rv^2}}\varphi. \]
Using homogeneity of Donskers's delta of degree $-1$ it directly follows that
\begin{equation}\label{eq:scaledwickformula}
\delta(z\la\cdot,\eta\ra-a)\cdot\varphi=\delta(z\la\cdot,\eta\ra-a)\diamond P_{\frac{\eta}{\lv\eta\rv}}T_{\frac{a\eta}{z\lv\eta\rv^2}}\varphi
\end{equation}
for $z\in\C$ with $\Real z>0$. This formula implies that extensions of the composition $P_{\frac{\eta}{\lv\eta\rv}}T_{\frac{a\eta}{\lv\eta\rv^2}}$ give extensions of the pointwise multiplication with Donsker's delta. Combining our previous continuity results gives the following extension for the composition.

\begin{proposition}\label{prop:compositioncont}
Let $\eta,\xi_1,\dotsc,\xi_d\in L^2(\R)$ with $\lv\eta\rv=1$ such that $\eta\not\in\spann\{\xi_1,\dotsc,\xi_d\}$ and let $\nu\in L_\C^2(\R)$ be arbitrary. Then $P_\eta T_\nu$ is continuous from $\overline{\cP_\xi}^{L^2(\mu)}$ to $\cG'$, where $\xi=(\xi_1,\dotsc,\xi_d)$.
\end{proposition}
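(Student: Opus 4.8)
The plan is to realize $P_\eta T_\nu$ as a composition of two maps, each already shown to be continuous between appropriate spaces, and then to verify that the codomain of the first matches the domain of the second. Fix the numbers $q,r<0$ and the constant $C_2$ furnished by Proposition~\ref{prop:projcontonnegative}, so that $P_\eta$ extends to a bounded linear operator from the closure $\overline{\cP_\xi}^{\cG_q}$ of $\cP_\xi$ in $\cG_q$ into $\cG_r$. By Proposition~\ref{prop:translationoperator-continuous}, applied with the roles of its parameters taken to be $0$ and $q$ (legitimate since $q<0$), the operator $T_\nu$ extends to a bounded linear operator from $\cG_0=L^2(\mu)$ into $\cG_q$.

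The single point requiring care is that $T_\nu$ actually maps $\overline{\cP_\xi}^{L^2(\mu)}$ into $\overline{\cP_\xi}^{\cG_q}$, which is the domain of the extended $P_\eta$. First, since the inclusion $\cG_0\subset\cG_q$ is continuous, every $L^2(\mu)$-convergent sequence in $\cP_\xi$ converges in $\cG_q$ as well; hence $\overline{\cP_\xi}^{L^2(\mu)}\subseteq\overline{\cP_\xi}^{\cG_q}$, and for $\varphi\in\overline{\cP_\xi}^{L^2(\mu)}$ there is a sequence $(\varphi_m)_{m\in\N}$ in $\cP_\xi$ converging to $\varphi$ in $L^2(\mu)$, hence also in $\cG_q$. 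By the first item of Example~\ref{ex:translation-pointwise} we have $T_\nu\varphi_m\in\cP_\xi$ for every $m$, and by continuity of $T_\nu\colon\cG_0\to\cG_q$ we get $T_\nu\varphi_m\to T_\nu\varphi$ in $\cG_q$. Therefore $T_\nu\varphi\in\overline{\cP_\xi}^{\cG_q}$, that is, $T_\nu\bigl(\overline{\cP_\xi}^{L^2(\mu)}\bigr)\subseteq\overline{\cP_\xi}^{\cG_q}$.

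Consequently $P_\eta T_\nu$ is well defined on $\overline{\cP_\xi}^{L^2(\mu)}$ and, being the composition of the continuous maps $T_\nu\colon\overline{\cP_\xi}^{L^2(\mu)}\to\overline{\cP_\xi}^{\cG_q}$ and $P_\eta\colon\overline{\cP_\xi}^{\cG_q}\to\cG_r$, it is continuous from $\overline{\cP_\xi}^{L^2(\mu)}$ into $\cG_r$. Since the inclusion $\cG_r\subset\cG'$ is continuous for the inductive limit topology, this shows that $P_\eta T_\nu$ is continuous from $\overline{\cP_\xi}^{L^2(\mu)}$ into $\cG'$. The only genuinely delicate point is the domain-matching step above — everything else is a direct appeal to the preceding propositions — and even that reduces to the invariance $T_\nu\cP_\xi\subseteq\cP_\xi$ from Example~\ref{ex:translation-pointwise} together with the continuity of the inclusion $\cG_0\hookrightarrow\cG_q$.
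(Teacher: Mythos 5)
Your proof is correct and takes essentially the same route as the paper: the same three ingredients (Proposition~\ref{prop:projcontonnegative} with its $q,r<0$, Proposition~\ref{prop:translationoperator-continuous} applied with parameters $0$ and $q$, and the invariance $T_\nu(\cP_\xi)\subset\cP_\xi$ from Example~\ref{ex:translation-pointwise}) in the same order. The paper merely packages them as a single chained estimate $\lV P_\eta T_\nu\varphi\rV_r\le C_2\lV T_\nu\varphi\rV_q\le C_2C_1\lV\varphi\rV_{L^2(\mu)}$ for $\varphi\in\cP_\xi$ followed by extension by density, whereas you extend the two operators separately and verify the domain matching $T_\nu\bigl(\overline{\cP_\xi}^{L^2(\mu)}\bigr)\subseteq\overline{\cP_\xi}^{\cG_q}$ explicitly; both arguments are sound.
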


\begin{proof}
Let $q,r<0$ be as in Proposition~\ref{prop:projcontonnegative}. By Example~\ref{ex:translation-pointwise} it holds $T_\nu(\cP_\xi)\subset\cP_\xi$, thus
\[ \lV P_\eta T_\nu\varphi\rV_r\le C_2(q,r,\eta,\xi)\lV T_\nu\varphi\rV_q\le C_2(q,r,\eta,\xi)C_1(0,q,\lv\nu\rv)\lV\varphi\rV_{L^2(\mu)} \]
for all $\varphi\in\cP_\xi$, where $C_1$ and $C_2$ are as in Propositions~\ref{prop:translationoperator-continuous} and~\ref{prop:projcontonnegative}, respectively.
\end{proof}

The following theorem shows that the pointwise multiplication with Donsker's delta can be defined for a subclass of functions from $L^2(\mu)$ by extending~\eqref{eq:scaledwickformula} abstractly. An explicit representation of the pointwise product can be obtained in combination with Theorem~\ref{thm:translationrepresentation} and Proposition~\ref{prop:GRS14}.

\begin{theorem}\label{thm:wickformula-extended}
Let $a,z\in\C$ with $\Real z>0$ and $\eta,\xi_1,\dotsc,\xi_d\in L^2(\R)$ such that $\eta\not\in\spann\{\xi_1,\dotsc,\xi_d\}$. Then the linear operator
\[ \cP_\xi\ni\varphi\mapsto\delta(z\la\cdot,\eta\ra-a)\cdot\varphi\in\cG' \]
has a unique continuous extension to $\overline{\cP_\xi}^{L^2(\mu)}$ which is given as in~\eqref{eq:scaledwickformula} for $\varphi\in\overline{\cP_\xi}^{L^2(\mu)}$.
\end{theorem}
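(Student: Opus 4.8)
The plan is to combine formula~\eqref{eq:scaledwickformula}, which is known to hold on the dense subspace $\cP_\xi\subset\overline{\cP_\xi}^{L^2(\mu)}$, with the continuity result of Proposition~\ref{prop:compositioncont}. First I would observe that for $\varphi\in\cP_\xi$ the right-hand side of~\eqref{eq:scaledwickformula} can be written as $\delta(z\la\cdot,\eta\ra-a)\diamond\bigl(P_\eta T_{a\eta/(z\lv\eta\rv^2)}\varphi\bigr)$, since $\lv\eta\rv=1$ allows us to drop the normalization $\eta/\lv\eta\rv=\eta$. The map $\varphi\mapsto P_\eta T_{a\eta/(z\lv\eta\rv^2)}\varphi$ is continuous from $\overline{\cP_\xi}^{L^2(\mu)}$ to $\cG'$ by Proposition~\ref{prop:compositioncont} (applied with $\nu=a\eta/(z\lv\eta\rv^2)\in L_\C^2(\R)$), and the Wick product $\Psi\mapsto\delta(z\la\cdot,\eta\ra-a)\diamond\Psi$ is continuous from $\cG'$ to $\cG'$ because the Wick product is a continuous bilinear operation on $\cG'\times\cG'$. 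Hence the composition
\[ \overline{\cP_\xi}^{L^2(\mu)}\ni\varphi\longmapsto\delta(z\la\cdot,\eta\ra-a)\diamond P_\eta T_{\frac{a\eta}{z\lv\eta\rv^2}}\varphi\in\cG' \]
is continuous.

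Next I would invoke the standard density-and-continuity argument: the operator $\varphi\mapsto\delta(z\la\cdot,\eta\ra-a)\cdot\varphi$ defined on $\cP_\xi$ agrees, by~\eqref{eq:scaledwickformula} (which is~\cite[Thm.~4.24]{Vogel2010} transported through homogeneity of Donsker's delta), with the above continuous operator on the dense subspace $\cP_\xi$. Since $\cG'$ carries the inductive limit topology from the Hilbert spaces $\cG_q$ and the composed map actually lands in a fixed $\cG_r$ with the estimate $\lV P_\eta T_\nu\varphi\rV_r\le C\lV\varphi\rV_{L^2(\mu)}$ from the proof of Proposition~\ref{prop:compositioncont}, the target is effectively the Banach space $\cG_r$, so one has a bounded linear operator on a dense subspace of $\overline{\cP_\xi}^{L^2(\mu)}$ with values in a complete space; it therefore extends uniquely to a bounded linear operator on all of $\overline{\cP_\xi}^{L^2(\mu)}$. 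Uniqueness of the extension is automatic because any two continuous extensions agree on the dense set $\cP_\xi$ and $\cG_r$ (hence $\cG'$) is Hausdorff.

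Finally I would record that the extension is given by the same formula~\eqref{eq:scaledwickformula} for every $\varphi\in\overline{\cP_\xi}^{L^2(\mu)}$: this is immediate, since the right-hand side of~\eqref{eq:scaledwickformula} already defines a continuous operator on $\overline{\cP_\xi}^{L^2(\mu)}$ that coincides with $\varphi\mapsto\delta(z\la\cdot,\eta\ra-a)\cdot\varphi$ on $\cP_\xi$, and the extension of a continuous operator from a dense set is unique, so the extension must equal this operator everywhere.

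I do not expect any real obstacle here; the only point requiring a little care is making sure the target space is treated correctly, namely that the composed operator takes values in a single $\cG_r$ with a uniform bound in the $L^2(\mu)$-norm, so that the abstract extension lemma for bounded operators between Banach (or complete locally convex) spaces genuinely applies rather than merely a formal density argument in the inductive limit $\cG'$. This is exactly what the quantitative estimate in the proof of Proposition~\ref{prop:compositioncont} supplies, so the argument closes cleanly.
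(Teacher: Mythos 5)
Your argument is essentially the paper's proof: formula~\eqref{eq:scaledwickformula} is valid on $\cP_\xi$, and Proposition~\ref{prop:compositioncont} together with the continuity of the Wick product from $\cG'\times\cG'$ to $\cG'$ gives the unique continuous extension by density, your added remark that the composition lands in a fixed $\cG_r$ with an $L^2(\mu)$-bound merely making the abstract extension step explicit. The only cosmetic point is that the theorem does not assume $\lv\eta\rv=1$, so rather than asserting this you should apply Proposition~\ref{prop:compositioncont} to the normalized vector $\eta/\lv\eta\rv$ (which still lies outside $\spann\{\xi_1,\dotsc,\xi_d\}$), in accordance with the operator $P_{\eta/\lv\eta\rv}$ appearing in~\eqref{eq:scaledwickformula}.
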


\begin{proof}
We know that~\eqref{eq:scaledwickformula} is valid for $\varphi\in\cP_\xi$. Proposition~\ref{prop:compositioncont} and continuity of the Wick product from $\cG'\times\cG'$ to $\cG'$ finish the proof.
\end{proof}

The application we have in mind is to define~\eqref{eq:riemann-approximation-limit}, so we consider functions which depend on Brownian motion.

\begin{example}\label{ex:brownianmotion-to-brownianbridge}
Let $0<t_1<\dotsb<t_d<t$ and $x,y\in\C$. Define $\eta:=\indi_{[0,t)}/\sqrt{t}$ and $\nu:=\frac{y-x}{\sqrt{\rmi}t}\indi_{[0,t)}$. Assume $f\colon\C^d\to\C$ is measurable such that
\[ \varphi:=f\bigl(x+\sqrt{\rmi}B_{t_1},\dotsc,x+\sqrt{\rmi}B_{t_d}\bigr)\in L^2(\mu). \]
Suppose furthermore that we have the equality
\[ T_\nu\varphi=f\bigl(x+\sqrt{\rmi}B_{t_1}+\sqrt{\rmi}\la\nu,\indi_{[0,t_1)}\ra,\dotsc,x+\sqrt{\rmi}B_{t_d}+\sqrt{\rmi}\la\nu,\indi_{[0,t_d)}\ra\bigr), \]
which can be verified via Theorem~\ref{thm:translationrepresentation}, and that $T_\nu\varphi\in L^2(\mu)$ holds. Then Proposition~\ref{prop:GRS14} shows
\[ P_\eta T_\nu\varphi=f\biggl(x+\frac{t_1}{t}(y-x)+\sqrt{\rmi}B_{t_1}-\sqrt{\rmi}\frac{t_1}{t}B_t,\dotsc,x+\frac{t_d}{t}(y-x)+\sqrt{\rmi}B_{t_d}-\sqrt{\rmi}\frac{t_d}{t}B_t\biggr), \]
so the operator $P_\eta T_\nu$ transforms the finitely many samples of the scaled Brownian motion $(x+\sqrt{\rmi}B_r)_{r\in[0,t]}$ starting in $x$ into corresponding samples of the scaled Brownian bridge $\bigl(x+\frac{r}{t}(y-x)+\sqrt{\rmi}B_r-\sqrt{\rmi}\frac{r}{t}B_t\bigr)_{r\in[0,t]}$ starting in $x$ and ending in $y$. Thus we have found the representation
\[ \delta(\sqrt{\rmi}B_t-(y-x))\cdot\varphi=\delta(\sqrt{\rmi}B_t-(y-x))\diamond P_\eta T_\nu\varphi \]
for the pointwise product with Donsker's delta in Theorem~\ref{thm:wickformula-extended}.
\end{example}

\end{subsubsection}

\end{subsection}

\end{section}

\begin{section}{The Fundamental Solution to the Schrödinger Equation}

\begin{subsection}{Construction of the Fundamental Solution}

We construct the integrand of $K_V(t,x;y)$ as in~\eqref{eq:def-KV} by the approximation procedure~\eqref{eq:riemann-approximation-limit}, where we interpret $\delta_y\bigl(x+\sqrt{\rmi}B_t\bigr)$ as Donsker's delta $\delta(\sqrt{\rmi}\la\cdot,\indi_{[0,t)}\ra-(y-x))$. More precisely, we use Theorem~\ref{thm:wickformula-extended} to see that
\begin{equation}\label{eq:riemannapprox}
\delta_y\bigl(x+\sqrt{\rmi}B_t\bigr)\cdot\exp\biggl(-\rmi\sum_{[r,s]\in Q}(s-r)V\bigl(x+\sqrt{\rmi}B_r\bigr)\biggr)
\end{equation}
is well-defined in $\cG'$ for $t>0$, $x,y\in\C$ and every partition $Q$ of the interval $[0,t]$, and we show that~\eqref{eq:riemannapprox} converges in $\cG'$ as the mesh of $Q$ approaches zero. A similar method was used in~\cite{Vogel2010,GSV12}, where the definition of~\eqref{eq:riemannapprox} and~\eqref{eq:riemann-approximation-limit} was performed via an integration procedure and relied on the choice of equidistant partitions of the interval $[0,t]$. However, the definition of~\eqref{eq:riemannapprox} is unique in the sense of Theorem~\ref{thm:wickformula-extended}, and~\eqref{eq:riemann-approximation-limit} is independent of the particular choice of the sequence $(Q_n)_{n\in\N}$. With the help of Example~\ref{ex:brownianmotion-to-brownianbridge} we show that $K_V(t,x;y)$ admits the probabilistic representation~\eqref{eq:KV-representation-intro}.

To be precise in what follows, we say that $Q$ is a partition of $[0,t]$ if there are some $d\in\N$ and $0=t_0<t_1<\dotsb<t_{d-1}<t_d=t$ such that $Q=\{[t_{i-1},t_i]:i=1,\dots,d\}$. The mesh of such $Q$ is defined as $\max_{[r,s]\in Q}(s-r)$.

Let $(Q_n)_{n\in\N}$ be an arbitrary sequence of partitions of $[0,t]$ whose mesh converges to zero and define the random variables
\[ \varphi(t,x):=\exp\biggl(-\rmi\int_0^tV\bigl(x+\sqrt{\rmi}B_r\bigr)\,\rmd r\biggr) \]
and
\[ \varphi_n(t,x):=\exp\biggl(-\rmi\sum_{[r,s]\in Q_n}(s-r)V\bigl(x+\sqrt{\rmi}B_r\bigr)\biggr) \]
for $n\in\N$. We summarize the results of this section in the following theorem.

\begin{theorem}\label{thm:KV-construction}
For $t>0$ and $x,y\in\C$ we have that
\begin{equation}\label{eq:riemann-approximation-rewritten}
\delta_y\bigl(x+\sqrt{\rmi}B_t\bigr)\cdot\varphi(t,x):=\lim_{n\to\infty}\delta_y\bigl(x+\sqrt{\rmi}B_t\bigr)\cdot\varphi_n(t,x)
\end{equation}
exists in $\cG'$ independently of the sequence $(Q_n)_{n\in\N}$ of partitions of $[0,t]$. It holds
\begin{multline}\label{eq:wick-product-representation-of-KV-integrand}
\delta_y\bigl(x+\sqrt{\rmi}B_t\bigr)\cdot\varphi(t,x)=\delta_y\bigl(x+\sqrt{\rmi}B_t\bigr)\\
\diamond\exp\biggl(-\rmi\int_0^tV\Bigl(x+\frac{r}{t}(y-x)+\sqrt{\rmi}B_r-\frac{r}{t}\sqrt{\rmi}B_t\Bigr)\,\rmd r\biggr).
\end{multline}
For its expectation $K_V(t,x;y):=\E[\delta_y(x+\sqrt{\rmi}B_t)\cdot\varphi(t,x)]$ we have
\begin{equation}\label{eq:KV-representation}
K_V(t,x;y)=K_0(t,x;y)\cdot\E\biggl[\exp\biggl(-\rmi\int_0^tV\Bigl(y+\frac{r}{t}(x-y)+\sqrt{\rmi}B_r-\frac{r}{t}\sqrt{\rmi}B_t\Bigr)\,\rmd r\biggr)\biggr].
\end{equation}
In particular $K_V(t,x;y)=K_V(t,y;x)$.
\end{theorem}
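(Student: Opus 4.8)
The plan is to establish the three claims of Theorem~\ref{thm:KV-construction} in sequence, using the machinery of Section~\ref{sec:whitenoiseanalysis} and the auxiliary estimates from Section~2.

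First I would fix $t>0$, $x,y\in\C$ and set $\eta:=\indi_{[0,t)}/\sqrt t$ and $\nu:=\tfrac{y-x}{\sqrt{\rmi}t}\indi_{[0,t)}$. For a partition $Q$ with nodes $0=t_0<\dots<t_d=t$, the random variable $\varphi_n(t,x)$ depends only on $\la\cdot,\indi_{[0,t_i)}\ra$, $i=1,\dots,d$, and, as noted in the proof of Lemma~\ref{lem:boundedness-of-exponential-term} (applied with $f\equiv x$ constant and $g=B_\cdot$), it lies in $L^2(\mu)$ — indeed in $L^p(\mu)$ for all $p$ — with an $L^2(\mu)$-norm bounded uniformly in $n$ by $\rme^{ct}$ for a constant $c$ depending only on $\lv x\rv$, via Lemma~\ref{lem:doss-expectation-estimate} to handle the polynomial prefactors. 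Since $\eta\notin\spann\{\indi_{[0,t_1)},\dots,\indi_{[0,t_{d-1})}\}$ (the function $\indi_{[0,t)}$ is not supported in $[0,t_{d-1}]$), Theorem~\ref{thm:wickformula-extended} applies: the product $\delta_y(x+\sqrt{\rmi}B_t)\cdot\varphi_n(t,x)=\delta(\sqrt{\rmi}\la\cdot,\indi_{[0,t)}\ra-(y-x))\cdot\varphi_n(t,x)$ is well-defined in $\cG'$ and equals $\delta(\sqrt{\rmi}\la\cdot,\indi_{[0,t)}\ra-(y-x))\diamond P_\eta T_\nu\varphi_n(t,x)$. By Example~\ref{ex:brownianmotion-to-brownianbridge} — whose hypotheses (that $T_\nu$ and $P_\eta$ act pointwise on these Brownian-functionals) are checked by the Morera/Fubini argument of Example~\ref{ex:translation-pointwise}, since each $\varphi_n(t,x)$ is a finite composition of polynomials and exponentials of the $\la\cdot,\indi_{[0,t_i)}\ra$ — this equals $\delta(\sqrt{\rmi}\la\cdot,\indi_{[0,t)}\ra-(y-x))\diamond\psi_n$, where $\psi_n=\exp\bigl(-\rmi\sum_{[r,s]\in Q_n}(s-r)V(x+\tfrac rt(y-x)+\sqrt{\rmi}B_r-\tfrac rt\sqrt{\rmi}B_t)\bigr)$ is the corresponding Riemann sum for the scaled Brownian bridge.

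Next I would prove convergence. The key point is that the scaled bridge $(x+\tfrac rt(y-x)+\sqrt{\rmi}B_r-\tfrac rt\sqrt{\rmi}B_t)_{r\in[0,t]}$ has a.s.\ continuous paths, so $\psi_n\to\psi:=\exp(-\rmi\int_0^tV(\dots)\,\rmd r)$ pointwise a.s.; moreover Lemma~\ref{lem:boundedness-of-exponential-term} (with $R$ a suitable bound on the sup of the bridge, which by Lemma~\ref{lem:uniform-polynomial-bound} and Lemma~\ref{lem:doss-expectation-estimate} has all moments finite) together with a uniform-in-$n$ domination gives $\psi_n\to\psi$ in $L^2(\mu)$. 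Since $L^2(\mu)\hookrightarrow\cG'$ continuously and the Wick product is continuous on $\cG'\times\cG'$, it follows that $\delta_y(x+\sqrt{\rmi}B_t)\cdot\varphi_n(t,x)=\delta(\dots)\diamond\psi_n\to\delta(\dots)\diamond\psi$ in $\cG'$, which both proves existence of the limit~\eqref{eq:riemann-approximation-rewritten} independently of $(Q_n)$ and establishes the Wick-product formula~\eqref{eq:wick-product-representation-of-KV-integrand}.

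For the expectation, I would apply $\E[\cdot]=\lla\cdot,\indi_{S'(\R)}\rra$ to~\eqref{eq:wick-product-representation-of-KV-integrand}, using the multiplicativity $\E[\Phi\diamond\Psi]=\E[\Phi]\cdot\E[\Psi]$. The first factor is $\E[\delta(\sqrt{\rmi}\la\cdot,\indi_{[0,t)}\ra-(y-x))]=K_0(t,x;y)$, computed from the $S$-transform formula for Donsker's delta at $\xi=0$ as noted after that example (recalling $\la\indi_{[0,t)},\indi_{[0,t)}\ra=t$ and $\sqrt{\rmi}^2=\rmi$). The second factor is $\E[\psi]$, which after the substitution $r\mapsto t-r$ in the integral, using the time-reversal symmetry in law of the Brownian bridge (equivalently, that $(B_r-\tfrac rt B_t)_{r\in[0,t]}\stackrel{d}{=}(B_{t-r}-\tfrac{t-r}{t}B_t)_{r\in[0,t]}$), can be rewritten with $x$ and $y$ interchanged, giving~\eqref{eq:KV-representation} and simultaneously the symmetry $K_V(t,x;y)=K_V(t,y;x)$ — here one uses that $K_0(t,x;y)=K_0(t,y;x)$ is manifest from~\eqref{eq:free-propagator}.

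I expect the main obstacle to be the $L^2(\mu)$-convergence $\psi_n\to\psi$: one must produce a single $\mu$-integrable dominating function for the whole sequence. This is where Lemma~\ref{lem:boundedness-of-exponential-term} is essential — it bounds $\lv\psi_n\rv$ by $\rme^{c_n t}$ where $c_n$ depends on $\sup_r\lv x+\tfrac rt(y-x)+\sqrt{\rmi}B_r-\tfrac rt\sqrt{\rmi}B_t\rv$, and one needs this bound to be uniform in $n$ and to have the resulting random variable $\rme^{2c t}$ in $L^1(\mu)$; the latter does \emph{not} follow from Lemma~\ref{lem:doss-expectation-estimate} directly (which only gives polynomial moments), so care is needed — in fact the structure of Assumption~\ref{ass:potential} (the sign of the leading coefficient making $\Imag V(z+\sqrt{\rmi}w)\to-\infty$) is exactly what keeps $c$ bounded regardless of how large $\lV B\rV_t$ is, so that $\rme^{ct}$ is a deterministic constant and domination is trivial. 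Verifying this carefully, and checking that the hypotheses of Example~\ref{ex:brownianmotion-to-brownianbridge} (pointwise action of $T_\nu$, membership $T_\nu\varphi_n\in L^2(\mu)$) genuinely hold for each $\varphi_n$, are the only non-routine points.
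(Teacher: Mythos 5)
Your proposal is correct and follows essentially the same route as the paper: apply Theorem~\ref{thm:wickformula-extended} to each Riemann-sum approximant, identify $P_\eta T_\nu\varphi_n$ as the Brownian-bridge Riemann sum via Examples~\ref{ex:translation-pointwise} and~\ref{ex:brownianmotion-to-brownianbridge}, pass to the limit in $L^2(\mu)\subset\cG'$ by dominated convergence and continuity of the Wick product, and then factorize the expectation and use time reversal of the bridge for~\eqref{eq:KV-representation} and the symmetry. The only wobble --- mid-proof you let $R$ in Lemma~\ref{lem:boundedness-of-exponential-term} bound the sup of the whole bridge rather than splitting off the deterministic part $x+\frac{r}{t}(y-x)$ (bounded) from the real bridge $B_r-\frac{r}{t}B_t$ --- is exactly the point you correctly resolve in your final paragraph, where you note that the bound $\rme^{ct}$ is deterministic, which is also how the paper argues.
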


Here~\eqref{eq:riemann-approximation-rewritten} and~\eqref{eq:KV-representation} are just reformulations of~\eqref{eq:riemann-approximation-limit} and~\eqref{eq:KV-representation-intro}. Note that in~\eqref{eq:wick-product-representation-of-KV-integrand} and~\eqref{eq:KV-representation} we have switched the roles of $x$ and $y$ in the exponential term.

The definition in~\eqref{eq:riemann-approximation-rewritten} is reasonable: Obviously $\lim_{n\to\infty}\varphi_n(t,x)=\varphi(t,x)$ almost surely and independently of the choice of $(Q_n)_{n\in\N}$, since Brownian motion has continuous paths. Furthermore, similar as in the proof of Lemma~\ref{lem:boundedness-of-exponential-term}, we see $\sup_{n\in\N}\lV\varphi_n(t,x)\rV_{L^\infty(\mu)}<\infty$, hence by Lebesgue's dominated convergence theorem it also holds $\lim_{n\to\infty}\varphi_n(t,x)=\varphi(t,x)$ in $L^2(\mu)$ and in particular in $\cG'$.

We show existence of~\eqref{eq:riemann-approximation-rewritten} in $\cG'$: Let $\eta:=\indi_{[0,t)}/\sqrt{t}$, $\nu:=\frac{y-x}{t\sqrt{\rmi}}\indi_{[0,t)}$, fix $n\in\N$ and set $\xi:=(\indi_{[0,r)})_{[r,s]\in Q_n}\in L^2(\R)^{\lv Q_n\rv}$. Using  Lemma~\ref{lem:boundedness-of-exponential-term} and Examples~\ref{ex:translation-pointwise} and~\ref{ex:brownianmotion-to-brownianbridge} we see
\[ \tilde{\varphi}_n(t,x;y):=P_\eta T_\nu\varphi_n(t,x)=\exp\biggl(-\rmi\sum_{[r,s]\in Q_n}(s-r)V\Bigl(x+\frac{r}{t}(y-x)+\sqrt{\rmi}B_r-\frac{r}{t}\sqrt{\rmi}B_t\Bigr)\biggr). \]
and $\tilde{\varphi}_n(t,x;y)\in L^2(\mu)$. Note that $\varphi_n(t,x)\in\overline{\cP_\xi}^{L^2(\mu)}$. Applying Theorem~\ref{thm:wickformula-extended} yields
\[ \delta_y\bigl(x+\sqrt{\rmi}B_t\bigr)\cdot\varphi_n(t,x)=\delta_y\bigl(x+\sqrt{\rmi}B_t\bigr)\diamond\tilde{\varphi}_n(t,x;y). \]
As before, the proof of Lemma~\ref{lem:boundedness-of-exponential-term} and Lebesgue's dominated convergence theorem show that $\tilde{\varphi}_n(t,x;y)$ converges to
\[ \tilde{\varphi}(t,x;y):=\exp\biggl(-\rmi\int_0^tV\Bigl(x+\frac{r}{t}(y-x)+\sqrt{\rmi}B_r-\frac{r}{t}\sqrt{\rmi}B_t\Bigr)\,\rmd r\biggr) \]
pointwisely and in $L^2(\mu)$ as $n\to\infty$, so in particular we have convergence in $\cG'$. This shows that~\eqref{eq:riemann-approximation-rewritten} exists independently of $(Q_n)_{n\in\N}$ and is given by~\eqref{eq:wick-product-representation-of-KV-integrand}. In particular we have rigorously defined $K_V(t,x;y)$ as in~\eqref{eq:def-KV} for $x,y\in\C$ and $t>0$ and it holds
\[ K_V(t,x;y)=\E\bigl[\delta_y\bigl(x+\sqrt{\rmi}B_t\bigr)\bigr]\cdot\E\biggl[\exp\biggl(-\rmi\int_0^tV\Bigl(x+\frac{r}{t}(y-x)+\sqrt{\rmi}B_r-\frac{r}{t}\sqrt{\rmi}B_t\Bigr)\,\rmd r\biggr)\biggr]. \]
This object was also derived in~\cite{Westerkamp1995,Vogel2010,GSV12}. Here we have used the results of Section~\ref{sec:whitenoiseanalysis} to establish uniqueness of the definition. It is left to show that~\eqref{eq:KV-representation} is valid. Since the Riemann integral is invariant with respect to reversion of the integration parameter, and the reverse of a Brownian bridge, i.e.\ the process $(B_{t-r}-\frac{t-r}{t}B_t)_{r\in[0,t]}$, is again a Brownian bridge, it holds
\begin{multline*}
\int_0^tV\Bigl(x+\frac{r}{t}(y-x)+\sqrt{\rmi}B_r-\frac{r}{t}\sqrt{\rmi}B_t\Bigr)\,\rmd r\\
= \int_0^tV\Bigl(x+\frac{t-r}{t}(y-x)+\sqrt{\rmi}B_{t-r}-\frac{t-r}{t}\sqrt{\rmi}B_t\Bigr)\,\rmd r\\
\stackrel{\mathcal{L}}{=} \int_0^tV\Bigl(y+\frac{r}{t}(x-y)+\sqrt{\rmi}B_r-\frac{r}{t}\sqrt{\rmi}B_t\Bigr)\,\rmd r.
\end{multline*}
Together with $K_0(t,x;y)=\E[\delta_y(x+\sqrt{\rmi}B_t)]$ this shows~\eqref{eq:KV-representation} and $K_V(t,x;y)=K_V(t,y;x)$. To shorten some notation we define
\begin{equation}\label{eq:def-psi}
\psi(t,x;y):=\E\biggl[\exp\biggl(-\rmi\int_0^tV\Bigl(y+\frac{r}{t}(x-y)+\sqrt{\rmi}B_r-\frac{r}{t}\sqrt{\rmi}B_t\Bigr)\,\rmd r\biggr)\biggr].
\end{equation}
so that $K_V(t,x;y)=K_0(t,x;y)\cdot\psi(t,x;y)$.

It should be emphasized that even though we used white noise distribution theory to define the integrand~\eqref{eq:riemann-approximation-rewritten} of $K_V(t,x;y)$, its representation~\eqref{eq:KV-representation} is a purely probabilistic expression which is well-defined outside of the white noise framework. Moreover~\eqref{eq:KV-representation} may be well-defined for a larger class of potentials.

\end{subsection}

\begin{subsection}{Differentiability of the Fundamental Solution}

In this section we compute the partial derivatives of $K_V$ and show that $K_V$ is continuously differentiable on $(0,\infty)\times\C^2$. Since $K_0$ is easy to differentiate it suffices to have a look at $\psi$ as in~\eqref{eq:def-psi}.

We start with computing the right-sided derivative of $\psi$ with respect to $t>0$.

\begin{proposition}\label{prop:psi-derivative}
For $x,y\in\C$ the right-sided derivative of $\psi$ with respect to $t>0$ can be computed as
\[ \partial_t^+\psi(t,x;y)=\Bigl(-\rmi V(x)-\frac{x-y}{t}\partial_x+\frac{\rmi}{2}\Delta\Bigr)\psi(t,x;y). \]
\end{proposition}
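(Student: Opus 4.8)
The plan is to express $\psi(t,x;y)$ via a time-substitution that turns the Brownian bridge appearing in~\eqref{eq:def-psi} into a fixed-endpoint-free object on the unit interval, so that the $t$-dependence is isolated in finitely many places. Concretely, I would write
\[
\int_0^t V\Bigl(y+\tfrac{r}{t}(x-y)+\sqrt{\rmi}B_r-\tfrac{r}{t}\sqrt{\rmi}B_t\Bigr)\,\rmd r
= t\int_0^1 V\bigl(y+s(x-y)+\sqrt{\rmi}\,\beta_s^{(t)}\bigr)\,\rmd s,
\]
where $\beta_s^{(t)}:=B_{st}-sB_t$ is a Brownian bridge on $[0,1]$ whose law does not depend on $t$ (after the usual scaling $B_{st}\stackrel{\mathcal L}{=}\sqrt{t}\,W_s$ one gets $\beta^{(t)}\stackrel{\mathcal L}{=}\sqrt{t}\,\beta$ for a standard bridge $\beta$ on $[0,1]$). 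Thus
\[
\psi(t,x;y)=\E\Bigl[\exp\Bigl(-\rmi t\int_0^1 V\bigl(y+s(x-y)+\sqrt{\rmi t}\,\beta_s\bigr)\,\rmd s\Bigr)\Bigr],
\]
and now every occurrence of $t$ is explicit and under the expectation. Lemma~\ref{lem:boundedness-of-exponential-term} (together with Lemma~\ref{lem:doss-expectation-estimate} to control moments of $\lV\beta\rV_1$, since $V$ is polynomial) gives the domination needed to differentiate under the expectation.

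Next I would differentiate the integrand in $t$ by the product and chain rules: the factor $t$ in front of the exponent contributes the term $-\rmi V(\cdot)$ evaluated suitably, and the $\sqrt{\rmi t}$ inside $V$ contributes a term proportional to $t^{-1}$ times $\beta_s$ times $V'$. After carrying this out one obtains $\partial_t^+\psi$ as an expectation involving $V$ and $V'$ and $\beta_s$. The main obstacle — and the step requiring genuine care rather than bookkeeping — is to re-express the resulting expectation so that the $\beta_s\,V'$-term becomes the claimed combination $-\tfrac{x-y}{t}\partial_x\psi+\tfrac{\rmi}{2}\Delta\psi$. I would handle this by a Gaussian integration-by-parts (Stein/Malliavin) identity for the bridge $\beta$, or equivalently by first differentiating $\psi$ in the spatial variable $x$ under the expectation: $\partial_x\psi$ produces $\int_0^1 s\,V'(\cdots)\,\rmd s$ inside the expectation, and $\Delta\psi=\partial_x^2\psi$ produces a term with $\bigl(\int_0^1 s V'\bigr)^2$ plus a term with $\int_0^1 s^2 V''$. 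Comparing these with the $t$-derivative expression and using the covariance structure $\E[\beta_s\beta_{s'}]=s\wedge s'-ss'$ of the standard bridge, the cross-terms match up and the identity falls out.

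Finally I would assemble the pieces: after the substitution $r=st$, the term coming from differentiating the prefactor $t$ is $-\rmi\,\E[V(y+0\cdot(x-y)+\sqrt{\rmi t}\,\beta_0)\cdots]=-\rmi V(x)\,\psi$ at the evaluation point $s$ contributing the boundary value $x$ (note $\psi(t,x;y)=K_V(t,x;y)/K_0$ and in~\eqref{eq:def-psi} the linear interpolation runs from $y$ at $r=0$ to $x$ at $r=t$, so the relevant endpoint is $x$), and the remaining term is reorganized via the above Gaussian calculus into $\bigl(-\tfrac{x-y}{t}\partial_x+\tfrac{\rmi}{2}\Delta\bigr)\psi$. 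Justification of all interchanges of $\partial_t,\partial_x,\partial_x^2$ with the expectation rests uniformly on the exponential bound of Lemma~\ref{lem:boundedness-of-exponential-term} applied to the (complex) path $s\mapsto y+s(x-y)$ with $R=\lv x\rv+\lv y\rv$ and the real path $s\mapsto\beta_s$, plus polynomial moment bounds from Lemma~\ref{lem:doss-expectation-estimate}, so dominated convergence and differentiation-under-the-integral-sign apply on compact subsets of $(0,\infty)\times\C^2$. I expect the reconciliation of the $\beta V'$-term with $\partial_x$ and $\Delta$ — i.e.\ getting the constants $-\tfrac{x-y}{t}$ and $\tfrac{\rmi}{2}$ exactly right — to be the only delicate point; everything else is routine once the unit-interval reformulation is in place.
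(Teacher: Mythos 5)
Your reduction to the unit interval is fine as far as it goes: with $\beta^{(t)}_s=B_{st}-sB_t$ and the scaling $\beta^{(t)}\stackrel{\mathcal L}{=}\sqrt t\,\beta$ one indeed gets $\psi(t,x;y)=\E\bigl[\exp\bigl(-\rmi t\int_0^1V(y+s(x-y)+\sqrt{\rmi t}\,\beta_s)\,\rmd s\bigr)\bigr]$, and Lemmas~\ref{lem:boundedness-of-exponential-term} and~\ref{lem:doss-expectation-estimate} do give the domination needed to differentiate under the expectation. But the proof stops exactly where the proposition begins. Differentiating the prefactor $t$ does \emph{not} produce $-\rmi V(x)\psi$: it produces $-\rmi\,\E\bigl[\exp(\cdots)\int_0^1V(y+s(x-y)+\sqrt{\rmi t}\,\beta_s)\,\rmd s\bigr]$, i.e.\ the average of $V$ along the whole bridge path, not its value at the pinned endpoint (your own formula even evaluates at $s=0$, which would give $V(y)$, before asserting ``the relevant endpoint is $x$''). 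In the paper the $-\rmi V(x)$ term comes from differentiating the \emph{moving upper limit} $\int_0^{t+h}$, where the integrand at $r=t$ equals $V(x)$ exactly because the bridge vanishes at its endpoint; after your substitution that mechanism is gone and the endpoint value has to be recovered from the very reorganization you leave undone. Likewise, the conversion of the $\beta_sV'$-term into $-\frac{x-y}{t}\partial_x\psi+\frac{\rmi}{2}\Delta\psi$ is not a matter of matching constants: writing out $\partial_x\psi$ and $\Delta\psi$ under the expectation and applying the Gaussian integration by parts you invoke, the bridge covariance gives kernels $s\wedge u-su$ and $s-s^2$ where $\Delta\psi$ requires $su$ and $s^2$, so the identity only closes after a further nontrivial rearrangement (essentially an It\^o-formula or an integration by parts in $s$ using $V$ evaluated at the endpoint), which must simultaneously generate the missing $-\rmi V(x)\psi$ and $\frac{x-y}{t}\partial_x\psi$ pieces. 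None of this is carried out, so the central identity of the proposition is asserted rather than proven.

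For comparison, the paper avoids any Malliavin-type calculus by splitting the difference quotient $\frac1h(\psi(t+h)-\psi(t))$ into three pieces in which only one occurrence of $t$ is varied at a time: the upper integration limit (Lemma~\ref{lem:psi-derivative-1}, giving $-\rmi V(x)\psi$ via endpoint pinning), the factor $\frac{r}{t}$ (Lemma~\ref{lem:psi-derivative-2}, giving $-\frac{x-y}{t}\partial_x\psi$ using the chain rule together with independence of $B_t$ from the bridge and $\E[B_t]=0$), and the terminal value $B_{t+h}$ (third lemma, giving $\frac{\rmi}{2}\Delta\psi$ by independence of increments and the free Doss result, Theorem~\ref{thm:freedoss}). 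If you want to pursue your unit-interval route, you would need to supply the full Gaussian/It\^o computation sketched above; as submitted, the argument has a genuine gap at its only non-routine step, and the one concrete claim made about how $-\rmi V(x)$ arises is incorrect.
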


This proposition is proven by summing up the following three Lemmas. We follow the proof of~\cite[Thm.~5.20]{Vogel2010} closely and add some more details.

\begin{lemma}\label{lem:psi-derivative-1}
For $t>0$ and $x,y\in\C$ it holds
\begin{multline*}
\lim_{h\to 0}\frac{1}{h}\E\biggl[\exp\biggl(-\rmi\int_0^{t+h}V\Bigl(y+\frac{r}{t+h}(x-y)+\sqrt{\rmi}B_r-\frac{r}{t+h}\sqrt{\rmi}B_{t+h}\Bigr)\,\rmd r\biggr)\\
-\exp\biggl(-\rmi\int_0^tV\Bigl(y+\frac{r}{t+h}(x-y)+\sqrt{\rmi}B_r-\frac{r}{t+h}\sqrt{\rmi}B_{t+h}\Bigr)\,\rmd r\biggr)\biggr]\\
=  -\rmi V(x)\E\biggl[\exp\biggl(-\rmi\int_0^tV\Bigl(y+\frac{r}{t}(x-y)+\sqrt{\rmi}B_r-\frac{r}{t}\sqrt{\rmi}B_t\Bigr)\,\rmd r\biggr)\biggr].
\end{multline*}
\end{lemma}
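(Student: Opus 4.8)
The plan is to reduce the difference quotient to something manifestly converging to $-\rmi V(x)\,\psi(t,x;y)$ by exploiting that, as $h\to 0$, the integrand process with horizon $t+h$ converges (pathwise and in the relevant norm) to the one with horizon $t$, and that the extra piece of integration over $[t,t+h]$ contributes exactly the factor $-\rmi V(x)$ in the limit. First I would fix the random function $r\mapsto Z_r^{(h)}:=y+\frac{r}{t+h}(x-y)+\sqrt{\rmi}B_r-\frac{r}{t+h}\sqrt{\rmi}B_{t+h}$ and write $I_h:=\int_0^{t+h}V(Z_r^{(h)})\,\rmd r$ and $J_h:=\int_0^t V(Z_r^{(h)})\,\rmd r$, so that the bracketed expression is $\E[\rme^{-\rmi I_h}-\rme^{-\rmi J_h}]$. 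Since $I_h-J_h=\int_t^{t+h}V(Z_r^{(h)})\,\rmd r$, I factor $\rme^{-\rmi I_h}-\rme^{-\rmi J_h}=\rme^{-\rmi J_h}\bigl(\rme^{-\rmi(I_h-J_h)}-1\bigr)$ and use $\rme^{-\rmi w}-1=-\rmi w+O(w^2)$ to get, after dividing by $h$,
\[
\frac1h\bigl(\rme^{-\rmi I_h}-\rme^{-\rmi J_h}\bigr)=-\rmi\,\rme^{-\rmi J_h}\cdot\frac1h\int_t^{t+h}V(Z_r^{(h)})\,\rmd r+\rme^{-\rmi J_h}\cdot O\!\left(\frac{(I_h-J_h)^2}{h}\right).
\]
The main work is then to (i) show $\frac1h\int_t^{t+h}V(Z_r^{(h)})\,\rmd r\to V(x)$ a.s.\ (the endpoint value of the bridge is $Z_t^{(t)}=x$, and path continuity of Brownian motion plus continuity of $V$ gives the convergence of the average of $V(Z_r^{(h)})$ over the shrinking interval $[t,t+h]$ as $h\to0$, noting $Z_r^{(h)}\to x$ uniformly for $r\in[t,t+h]$); (ii) show $\rme^{-\rmi J_h}\to\rme^{-\rmi\int_0^t V(Z_r^{(t)})\,\rmd r}$ a.s., which follows from $Z_r^{(h)}\to Z_r^{(t)}=y+\frac rt(x-y)+\sqrt{\rmi}B_r-\frac rt\sqrt{\rmi}B_t$ uniformly on $[0,t]$ together with continuity of $V$; and (iii) control the error term, using that $|I_h-J_h|\le h\sup_{r\in[t,t+h]}|V(Z_r^{(h)})|$ is of order $h$ times a finite random variable, so $(I_h-J_h)^2/h=O(h)\to0$.

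The remaining issue is to upgrade almost-sure convergence of the integrand to convergence of the expectation, i.e.\ to justify interchange of limit and $\E$. For this I would produce an $h$-uniform integrable dominating bound. Lemma~\ref{lem:boundedness-of-exponential-term} (applied with a suitable deterministic bound once one notes $|Z_r^{(h)}|$ is controlled by $|x|+|y|+(1+\tfrac{t+h}{t})\lV B\rV_{t+1}$ for all small $h$, and $Z^{(h)}$ is continuous) bounds $|\rme^{-\rmi J_h}|$ and $|\rme^{-\rmi I_h}|$ by $\rme^{c(\lV B\rV_{t+1})(t+1)}$; combined with Lemma~\ref{lem:uniform-polynomial-bound} to bound $|V(Z_r^{(h)})|$ by a polynomial in $\lV B\rV_{t+1}$, and Lemma~\ref{lem:doss-expectation-estimate} to see that polynomials of $\lV B\rV_{t+1}$ times $\rme^{c(\lV B\rV_{t+1})(t+1)}$ — note $c(R)$ grows only polynomially in $R$ in the construction of Lemma~\ref{lem:boundedness-of-exponential-term}, so this product is still $L^1(\mu)$ after a Gaussian moment estimate on $\lV B\rV_{t+1}$ — one obtains a single integrable random variable dominating $\frac1h|\rme^{-\rmi I_h}-\rme^{-\rmi J_h}|$ uniformly in $h\in(-t/2,t/2)\setminus\{0\}$, via the mean value estimate $|\rme^{-\rmi I_h}-\rme^{-\rmi J_h}|\le |I_h-J_h|\cdot\max(|\rme^{-\rmi I_h}|,|\rme^{-\rmi J_h}|)$ already used above. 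Then dominated convergence finishes it.

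The step I expect to be the main obstacle is precisely this domination argument: one must check carefully that the growth of the constant $c(R)$ from Lemma~\ref{lem:boundedness-of-exponential-term} as $R=\lV B\rV_{t+1}$ varies is mild enough (at most polynomial in $R$, hence the exponential $\rme^{c(R)(t+1)}$ has at most $\rme^{\text{poly}(R)}$ growth, which is dwarfed by the Gaussian-type tail of $\lV B\rV_{t+1}$ — recall $\lV B\rV_T$ has Gaussian tails, so $\E[\rme^{\lambda \lV B\rV_T^2}]<\infty$ for small $\lambda$, but here we only need $\E[\text{poly}(\lV B\rV_T)\rme^{\text{poly}(\lV B\rV_T)}]$, which is finite because the polynomial in the exponent has degree strictly less than $2$ in $\lV B\rV_T$ after accounting for the sign of the leading term — exactly the mechanism of Lemma~\ref{lem:doss-expectation-estimate}). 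Everything else is routine: pathwise continuity of the Brownian bridge in the horizon parameter, continuity of the polynomial $V$, and the first-order Taylor expansion of the complex exponential.
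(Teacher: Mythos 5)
Your overall decomposition (factor out $\rme^{-\rmi J_h}$, Taylor-expand the exponential of the short-time piece, pass to the limit pathwise, then dominate) is sound and close in spirit to the paper's proof, which instead writes the difference quotient as $\int_0^1 F_h'(t+sh)\,\rmd s$ via the fundamental theorem of calculus and then combines dominated convergence with pathwise uniform convergence. However, the step you yourself single out as the main obstacle --- the $h$-uniform integrable domination --- is carried out incorrectly, and as written it fails. You apply Lemma~\ref{lem:boundedness-of-exponential-term} with the whole bridge $Z_r^{(h)}$ playing the role of the bounded function $f$, i.e.\ with the random radius $R=\lv x\rv+\lv y\rv+C\lV B\rV_{t+1}$, which produces a path-dependent constant $c(\lV B\rV_{t+1})$. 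Tracking the construction in that lemma, $c'$ grows like $R^{2N}$ (from the polynomial $p_0$ of degree $2N$) and the supremum defining $c$ is of order $(c')^{2N}$, so $c(R)$ is a polynomial of degree of order $N^2$ in $R$ --- already degree $4$ when $N=1$. Hence your dominating function has the form $\rme^{\mathrm{poly}(\lV B\rV_{t+1})}$ with a polynomial of degree strictly larger than $2$, and its expectation is infinite, since $\lV B\rV_T$ has Gaussian tails. Your parenthetical claim that ``the polynomial in the exponent has degree strictly less than $2$ after accounting for the sign of the leading term'' is not justified by anything in the construction, and Lemma~\ref{lem:doss-expectation-estimate} only yields integrability of polynomials of $\lV B\rV_T$, not of exponentials of polynomials of $\lV B\rV_T$.

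The fix is to use Lemma~\ref{lem:boundedness-of-exponential-term} as it is designed: split $Z_r^{(h)}=f_h(r)+\sqrt{\rmi}\,g_h(r)$ with the deterministic part $f_h(r)=y+\frac{r}{t+h}(x-y)$, bounded by $R=\lv x\rv+\lv x-y\rv$ uniformly in $r$ and $\lv h\rv\le\delta$, and the real-valued random part $g_h(r)=B_r-\frac{r}{t+h}B_{t+h}$. The whole point of that lemma is that the resulting constant $c$ depends only on this deterministic $R$, not on the path, so $\lv\rme^{-\rmi I_h}\rv,\lv\rme^{-\rmi J_h}\rv\le\rme^{c(t+\delta)}$ deterministically and uniformly in $\lv h\rv\le\delta$. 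Combined with $\frac{1}{\lv h\rv}\lv I_h-J_h\rv\le\sup_{r}\lv V(Z_r^{(h)})\rv\le P(\lV B\rV_{t+\delta})$ (Lemma~\ref{lem:uniform-polynomial-bound}) and Lemma~\ref{lem:doss-expectation-estimate}, the dominating function $\rme^{c(t+\delta)}\,P(\lV B\rV_{t+\delta})$ is integrable, and your pathwise limits (i)--(iii) then give the assertion by dominated convergence; this is exactly the bound $\rme^{cT}P(\lV B\rV_T)$ on which the paper's proof rests.
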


\begin{proof}
Choose $\delta\in(0,t)$. For $\lv h\rv\le\delta$ and $u\ge 0$ define the complex-valued random variable
\[ F_h(u):=\exp\biggl(-\rmi\int_0^uV\Bigl(y+\frac{r}{t+h}(x-y)+\sqrt{\rmi}B_r-\frac{r}{t+h}\sqrt{\rmi}B_{t+h}\Bigr)\,\rmd r\biggr). \]
Then the assertion follows if we can show
\begin{equation}\label{eq:psi-derivative-1-reformulation}
\lim_{h\to 0}\frac{1}{h}\bigl(F_h(t+h)-F_h(t)\bigr)=-\rmi V(x)F_0(t)
\end{equation}
in $L^1(\mu)$. The derivative of $F_h$ is given by
\begin{equation}\label{eq:psi-derivative-1-derivative}
F_h'(u)= -\rmi V\Bigl(y+\frac{u}{t+h}(x-y)+\sqrt{\rmi}B_u-\frac{u}{t+h}\sqrt{\rmi}B_{t+h}\Bigr)F_h(u)
\end{equation}
and continuous in $u\ge 0$. Thus
\[ \frac{1}{h}\bigl(F_h(t+h)-F_h(t)\bigr)=\int_0^1F_h'(t+sh)\,\rmd s \]
by the fundamental theorem of calculus. By Lemmas~\ref{lem:boundedness-of-exponential-term} and~\ref{lem:uniform-polynomial-bound} we have that
\[ \sup_{\lv h\rv\le\delta}\sup_{s\in[0,1]}\lv F_h'(t+sh)\rv\le\rme^{cT}\cdot P(\lV B\rV_T)\in L^1(\mu) \]
for some $c\in[0,\infty)$, a polynomial $P$ and $T:=t+\delta$. Hence by Lebesgue's dominated convergence theorem we can show~\eqref{eq:psi-derivative-1-reformulation} by proving
\[ \lim_{h\to 0}\sup_{s\in[0,1]}\bigl\lv F_h'(t+sh)+\rmi V(x)F_0(t)\bigr\rv=0 \]
pointwisely, that is for all paths of Brownian motion separately. Fix a path of the Brownian motion. Due to~\eqref{eq:psi-derivative-1-derivative} and the fact that $\rmi V(x)F_0(t)$ is independent of $s$ and $h$ it suffices to show the two equalities
\begin{equation}\label{eq:psi-derivative-1-step1}
\lim_{h\to 0}\sup_{s\in[0,1]}\bigl\lv F_h(t+sh)-F_0(t)\bigr\rv=0
\end{equation}
and
\begin{equation}\label{eq:psi-derivative-1-step2}
\lim_{h\to 0}\sup_{s\in[0,1]}\Bigl\lv V\Bigl(y+\frac{t+sh}{t+h}(x-y)+\sqrt{\rmi}B_{t+sh}-\frac{t+sh}{t+h}\sqrt{\rmi}B_{t+h}\Bigr)-V(x)\Bigr\rv=0.
\end{equation}
For~\eqref{eq:psi-derivative-1-step1} note that by continuity of $\exp$ it suffices to show
\begin{multline*}
\lim_{h\to 0}\sup_{s\in[0,1]}\biggl\lv\int_0^{t+sh}V\Bigl(y+\frac{r}{t+h}(x-y)+\sqrt{\rmi}B_r-\frac{r}{t+h}\sqrt{\rmi}B_{t+h}\Bigr)\,\rmd r\\
-\int_0^tV\Bigl(y+\frac{r}{t}(x-y)+\sqrt{\rmi}B_r-\frac{r}{t}\sqrt{\rmi}B_t\Bigr)\,\rmd r\biggr\rv=0,
\end{multline*}
for which in turn we show the two equalities
\begin{equation}\label{eq:psi-derivative-1-step1.1}
\lim_{h\to 0}\sup_{s\in[0,1]}\biggl\lv\int_t^{t+sh}V\Bigl(y+\frac{r}{t+h}(x-y)+\sqrt{\rmi}B_r-\frac{r}{t+h}\sqrt{\rmi}B_{t+h}\Bigr)\,\rmd r\biggr\rv=0
\end{equation}
and
\begin{multline}\label{eq:psi-derivative-1-step1.2}
\lim_{h\to 0}\biggl\lv\int_0^tV\Bigl(y+\frac{r}{t+h}(x-y)+\sqrt{\rmi}B_r-\frac{r}{t+h}\sqrt{\rmi}B_{t+h}\Bigr)\,\rmd r\\
-\int_0^tV\Bigl(y+\frac{r}{t}(x-y)+\sqrt{\rmi}B_r-\frac{r}{t}\sqrt{\rmi}B_t\Bigr)\,\rmd r\biggr\rv=0.
\end{multline}
Clearly~\eqref{eq:psi-derivative-1-step1.1} follows from
\[ \sup_{\lv h\rv\le\delta}\sup_{r\in[t-\delta,t+\delta]}\Bigl\lv V\Bigl(y+\frac{r}{t+h}(x-y)+\sqrt{\rmi}B_r-\frac{r}{t+h}\sqrt{\rmi}B_{t+h}\Bigr)\Bigr\rv<\infty, \]
while~\eqref{eq:psi-derivative-1-step1.2} is shown by
\begin{align*}
0
& \le \lim_{h\to 0}\sup_{r\in[0,t]}\biggl\lv\frac{r}{t+h}(x-y)-\frac{r}{t+h}\sqrt{\rmi}B_{t+h}-\frac{r}{t}(x-y)+\frac{r}{t}\sqrt{\rmi}B_t\biggr\rv\\
& \le \lim_{h\to 0}\lv x-y\rv\sup_{r\in[0,t]}\biggl\lv\frac{r}{t+h}-\frac{r}{t}\biggr\rv + \sup_{r\in[0,t]}\biggl\lv\frac{r}{t+h}B_{t+h}-\frac{r}{t}B_t\biggr\rv = 0
\end{align*}
and the fact that $V$ is uniformly continuous on the compact set
\[ \biggl\{y+\frac{r}{t+h}(x-y)+\sqrt{\rmi}B_r-\frac{r}{t+h}\sqrt{\rmi}B_{t+h}:r\in[0,t],\,\lv h\rv\le\delta\biggr\}\subset\C. \]
Similarly~\eqref{eq:psi-derivative-1-step2} follows from continuity of $V$ and
\begin{align*}
0
& \le \sup_{s\in[0,1]}\biggl\lv\biggl(y+\frac{t+sh}{t+h}(x-y)+\sqrt{\rmi}B_{t+sh}-\frac{t+sh}{t+h}\sqrt{\rmi}B_{t+h}\biggr)-x\biggr\rv\\
& \le \lv x-y\rv\sup_{s\in[0,1]}\biggl\lv 1-\frac{t+sh}{t+h}\biggr\rv+\sup_{s\in[0,1]}\biggl\lv B_{t+sh}-\frac{t+sh}{t+h}B_{t+h}\biggr\rv\xrightarrow{h\to 0}0.
\end{align*}
Hence~\eqref{eq:psi-derivative-1-reformulation} is shown.
\end{proof}

\begin{lemma}\label{lem:psi-derivative-2}
For $t>0$ and $x,y\in\C$ it holds
\begin{multline*}
\lim_{h\to 0}\frac{1}{h}\E\biggl[\exp\biggl(-\rmi\int_0^tV\Bigl(y+\frac{r}{t+h}(x-y)+\sqrt{\rmi}B_r-\frac{r}{t+h}\sqrt{\rmi}B_{t+h}\Bigr)\,\rmd r\biggr)\\
-\exp\biggl(-\rmi\int_0^tV\Bigl(y+\frac{r}{t}(x-y)+\sqrt{\rmi}B_r-\frac{r}{t}\sqrt{\rmi}B_{t+h}\Bigr)\,\rmd r\biggr)\biggr]\\
=-\frac{x-y}{t}\frac{\rmd}{\rmd x}\E\biggl[\exp\biggl(-\rmi\int_0^tV\Bigl(y+\frac{r}{t}(x-y)+\sqrt{\rmi}B_r-\frac{r}{t}\sqrt{\rmi}B_t\Bigr)\,\rmd r\biggr)\biggr].
\end{multline*}
\end{lemma}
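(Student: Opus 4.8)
The plan is to imitate the proof of Lemma~\ref{lem:psi-derivative-1}: reduce the bracketed difference to a difference quotient in a single real parameter, apply the fundamental theorem of calculus, and pass to the limit by dominated convergence. Fix $\delta\in(0,t)$. For a fixed Brownian path, $|h|\le\delta$, $\lambda>0$ and $w\in\R$ set
\[
\Phi_\lambda^w:=\exp\biggl(-\rmi\int_0^tV\Bigl(y+\lambda r(x-y)+\sqrt{\rmi}B_r-\lambda r\sqrt{\rmi}w\Bigr)\,\rmd r\biggr),
\]
so that the two exponentials inside the expectation are exactly $\Phi_{1/(t+h)}^{B_{t+h}}$ and $\Phi_{1/t}^{B_{t+h}}$ (note that the \emph{same} sample $B_{t+h}$ occurs in both), and the bracket equals $\Phi_{1/(t+h)}^{B_{t+h}}-\Phi_{1/t}^{B_{t+h}}$. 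Since $V$ is a polynomial we may differentiate under the integral over $[0,t]$ to get
\[
\partial_\lambda\Phi_\lambda^w=\Phi_\lambda^w\cdot\biggl(-\rmi\int_0^tV'\bigl(y+\lambda r(x-y)+\sqrt{\rmi}B_r-\lambda r\sqrt{\rmi}w\bigr)\,r\bigl(x-y-\sqrt{\rmi}w\bigr)\,\rmd r\biggr),
\]
and with $\lambda_h(s):=\tfrac1t+s(\tfrac1{t+h}-\tfrac1t)$ the fundamental theorem of calculus gives
\[
\frac1h\bigl(\Phi_{1/(t+h)}^{B_{t+h}}-\Phi_{1/t}^{B_{t+h}}\bigr)=\frac1h\Bigl(\frac1{t+h}-\frac1t\Bigr)\int_0^1\partial_\lambda\Phi_{\lambda_h(s)}^{B_{t+h}}\,\rmd s.
\]

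Now $\tfrac1h(\tfrac1{t+h}-\tfrac1t)=-\tfrac1{t(t+h)}\to-\tfrac1{t^2}$, $\lambda_h(s)\to\tfrac1t$ uniformly in $s\in[0,1]$, and $B_{t+h}\to B_t$ by continuity of the path, while $(\lambda,w)\mapsto\partial_\lambda\Phi_\lambda^w$ is jointly continuous; hence the right-hand side converges, for every path, to $-\tfrac1{t^2}\partial_\lambda\Phi_{1/t}^{B_t}$. To interchange this limit with the expectation I dominate uniformly in $|h|\le\delta$, $s\in[0,1]$: because $\lambda_h(s)r$ stays in the bounded set $[0,\tfrac{t}{t-\delta}]$ \emph{independently of the path}, the complex part $y+\lambda_h(s)r(x-y)$ of the argument of $V$ and $V'$ lies in a fixed ball, so Lemma~\ref{lem:boundedness-of-exponential-term} bounds $|\Phi_{\lambda_h(s)}^{B_{t+h}}|$ by a constant and Lemma~\ref{lem:uniform-polynomial-bound} bounds the remaining factors by a polynomial in $\lV B\rV_T$ with $T:=t+\delta$. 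Thus the difference quotient is dominated by $\rme^{ct}P(\lV B\rV_T)\in L^1(\mu)$ (Lemma~\ref{lem:doss-expectation-estimate}), and dominated convergence yields $\lim_{h\to0}\tfrac1h\E[\text{bracket}]=-\tfrac1{t^2}\E[\partial_\lambda\Phi_{1/t}^{B_t}]$.

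It remains to identify this expression. Put $F(x):=\Phi_{1/t}^{B_t}=\exp(-\rmi\int_0^tV(y+\tfrac rt(x-y)+\sqrt{\rmi}B_r-\tfrac rt\sqrt{\rmi}B_t)\,\rmd r)$; then $\partial_\lambda\Phi_{1/t}^{B_t}=F(x)\cdot(-\rmi\int_0^tV'(\cdots)\,r(x-y-\sqrt{\rmi}B_t)\,\rmd r)$ and $F'(x)=F(x)\cdot(-\rmi\int_0^tV'(\cdots)\tfrac rt\,\rmd r)$, where $\cdots$ abbreviates $y+\tfrac rt(x-y)+\sqrt{\rmi}B_r-\tfrac rt\sqrt{\rmi}B_t$. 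Splitting $x-y-\sqrt{\rmi}B_t$ into its two summands,
\[
-\frac1{t^2}\E\bigl[\partial_\lambda\Phi_{1/t}^{B_t}\bigr]=-\frac{x-y}{t}\E[F'(x)]+\text{const}\cdot\E\Bigl[B_t\cdot F(x)\int_0^tV'(\cdots)\,r\,\rmd r\Bigr],
\]
and the last term vanishes because $F(x)$ and $V'(\cdots)$ are measurable functions of the Brownian bridge $(B_r-\tfrac rt B_t)_{r\in[0,t]}$, which is independent of $B_t$ (a classical fact, since these are uncorrelated jointly Gaussian), and $\E[B_t]=0$. Finally $\E[F'(x)]=\tfrac{\rmd}{\rmd x}\E[F(x)]$ by differentiation under the expectation — justified, as in Example~\ref{ex:translation-pointwise}, by holomorphy of the integrand and the same $L^1$-domination for $x$ in a small neighborhood — which is precisely the claimed right-hand side. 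The main obstacle is exactly this pair of points: securing a path-uniform $L^1$-bound (which forces the parametrization above, keeping the complex part of the argument of $V$ fixed) and noticing that the extra $B_t$-term drops out by independence of the Brownian bridge from its endpoint; the rest is bookkeeping.
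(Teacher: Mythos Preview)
Your proof is correct and follows essentially the same route as the paper's. The only cosmetic differences are that the paper parametrizes by $u$ (writing $\tfrac{r}{u}$ in the argument of $V$) rather than by $\lambda=\tfrac{1}{u}$, and invokes the bridge--endpoint independence at the outset (rewriting the target as $\E\bigl[-\tfrac{x-y-\sqrt{\rmi}B_t}{t}\partial_xF_0(t,x)\bigr]$) instead of at the end to kill the extra $B_t$-term; the key ingredients---fundamental theorem of calculus on a one-parameter family, uniform $L^1$-domination via Lemmas~\ref{lem:boundedness-of-exponential-term}--\ref{lem:doss-expectation-estimate}, and independence of the Brownian bridge from $B_t$---are identical.
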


\begin{proof}
Again choose $\delta\in(0,t)$ and set $T:=t+\delta$. For $\lv h\rv\le\delta$, $u\in[t-\delta,t+\delta]$ and $x\in\C$ define the complex-valued random variable
\[ F_h(u,x):=\exp\biggl(-\rmi\int_0^tV\Bigl(y+\frac{r}{u}(x-y)+\sqrt{\rmi}B_r-\frac{r}{u}\sqrt{\rmi}B_{t+h}\Bigr)\,\rmd r\biggr). \]
By the Leibniz integral rule, the partial derivative with respect to $x$ can be computed as
\begin{equation}\label{eq:psi-derivative-2-derivative}
\partial_xF_h(u,x)=F_h(u,x)\int_0^t-\rmi\frac{r}{u}V'\Bigl(y+\frac{r}{u}(x-y)+\sqrt{\rmi}B_r-\frac{r}{u}\sqrt{\rmi}B_{t+h}\Bigr)\,\rmd r.
\end{equation}
Note that for compact $K\subset\C$ by Lemma~\ref{lem:uniform-polynomial-bound} there exists a polynomial $P$ such that
\[ \sup_{\lv h\rv\le\delta}\sup_{x\in K}\sup_{u\in[t-\delta,t+\delta]}\sup_{r\in[0,t]}\biggl\lv\frac{r}{u}V'\Bigl(y+\frac{r}{u}(x-y)+\sqrt{\rmi}B_r-\frac{r}{u}\sqrt{\rmi}B_{t+h}\Bigr)\biggr\rv\le P(\lV B\rV_T), \]
and thus, by Lemma~\ref{lem:boundedness-of-exponential-term}, there exists $c\in[0,\infty)$ such that
\begin{equation}\label{eq:psi-derivative-2-derivative-bound}
\sup_{\lv h\rv\le\delta}\sup_{x\in K}\sup_{u\in[t-\delta,t+\delta]}\lv\partial_xF_h(u,x)\rv\le\rme^{ct}\cdot t\cdot P(\lV B\rV_T)\in L^1(\mu).
\end{equation}
In particular by continuity of $\partial_xF_0(t,x)$ with respect to $x\in\R$ it holds
\[ -\frac{x-y}{t}\frac{\rmd}{\rmd x}\E\bigl[F_0(t,x)\bigr] = \E\Bigl[-\frac{x-y}{t}\partial_xF_0(t,x)\Bigr]
 = \E\Bigl[-\frac{x-y-\sqrt{\rmi}B_t}{t}\partial_xF_0(t,x)\Bigr], \]
where the second equality is due to independence of $B_t$ and $\bigl(B_r-\frac{r}{t}B_t\bigr)_{r\in[0,t]}$ and the fact that $\E[B_t]=0$. As in the proof of Lemma~\ref{lem:psi-derivative-1} the assertion follows if we can show
\begin{equation}\label{eq:psi-derivative-2-reformulation}
\lim_{h\to 0}\frac{1}{h}\bigl(F_h(t+h,x)-F_h(t,x)\bigr) = -\frac{x-y-\sqrt{\rmi}B_t}{t}\partial_xF_0(t,x)
\end{equation}
in $L^1(\mu)$. One easily computes
\begin{multline*}
\frac{\rmd}{\rmd u}V\Bigl(y+\frac{r}{u}(x-y)+\sqrt{\rmi}B_r-\frac{r}{u}\sqrt{\rmi}B_{t+h}\Bigr)\\
=-\frac{x-y-\sqrt{\rmi}B_{t+h}}{u}\frac{\rmd}{\rmd x}V\Bigl(y+\frac{r}{u}(x-y)+\sqrt{\rmi}B_r-\frac{r}{u}\sqrt{\rmi}B_{t+h}\Bigr),
\end{multline*}
for all $r\in[0,t]$, so with the Leibniz integral rule it follows that
\[ \partial_uF_h(u,x)=-\frac{x-y-\sqrt{\rmi}B_{t+h}}{u}\partial_xF_h(u,x) \]
is continuous in $u\in[t-\delta,t+\delta]$. The fundamental theorem of calculus yields
\begin{align*}
\frac{1}{h}\bigl(F_h(t+h,x)-F_h(t,x)\bigr)
& = \int_0^1\partial_uF_h(t+sh,x)\,\rmd s\\
& = \int_0^1-\frac{x-y-\sqrt{\rmi}B_{t+h}}{t+sh}\partial_xF_h(t+sh,x)\,\rmd s.
\end{align*}
Due to the fact that
\[ \sup_{\lv h\rv\le\delta}\sup_{s\in[0,1]}\biggl\lv\frac{x-y-\sqrt{\rmi}B_{t+h}}{t+sh}\partial_xF_h(t+sh,x)\biggr\rv\le \frac{\lv x-y\rv+\lV B\rV_T}{t-\delta}\cdot\rme^{ct}\cdot t\cdot P(\lV B\rV_T)\in L^1(\mu), \]
where $c$ and $P$ are as in~\eqref{eq:psi-derivative-2-derivative-bound} for the singleton $K=\{x\}$, by Lebesgue's dominated convergence theorem we can show~\eqref{eq:psi-derivative-2-reformulation} by proving that
\[ \lim_{h\to 0}\sup_{s\in[0,1]}\bigl\lv\partial_xF_h(t+sh,x)-\partial_xF_0(t,x)\bigr\rv=0 \]
holds pointwisely, that is for each fixed path of Brownian motion. Similarly as in the proof of Lemma~\ref{lem:psi-derivative-1}, due to~\eqref{eq:psi-derivative-2-derivative} and the fact that $\partial_xF_0(t,x)$ is independent of $s$ and $h$, we can prove this by separately showing
\begin{equation}\label{eq:psi-derivative-2-step1}
\lim_{h\to 0}\sup_{s\in[0,1]}\bigl\lv F_h(t+sh,x)-F_0(t,x)\bigr\rv=0
\end{equation}
and
\begin{multline}\label{eq:psi-derivative-2-step2}
\lim_{h\to 0}\sup_{s\in[0,1]}\biggl\lv \int_0^t\frac{r}{t+sh}V'\Bigl(y+\frac{r}{t+sh}(x-y)+\sqrt{\rmi}B_r-\frac{r}{t+sh}\sqrt{\rmi}B_{t+h}\Bigr)\,\rmd r\\
 -\int_0^t\frac{r}{t}V'\Bigl(y+\frac{r}{t}(x-y)+\sqrt{\rmi}B_r-\frac{r}{t}\sqrt{\rmi}B_t\Bigr)\,\rmd r\biggr\rv=0.
\end{multline}
The techniques to prove this have already been used in the proof of Lemma~\ref{lem:psi-derivative-1}. For~\eqref{eq:psi-derivative-2-step1} by continuity of $\exp$ it suffices to show
\begin{multline*}
\lim_{h\to 0}\sup_{s\in[0,1]}\biggl\lv\int_0^tV\Bigl(y+\frac{r}{t+sh}(x-y)+\sqrt{\rmi}B_r-\frac{r}{t+sh}\sqrt{\rmi}B_{t+h}\Bigr)\,\rmd r\\
-\int_0^tV\Bigl(y+\frac{r}{t}(x-y)+\sqrt{\rmi}B_r-\frac{r}{t}\sqrt{\rmi}B_t\Bigr)\,\rmd r\biggr\rv=0,
\end{multline*}
which follows from
\begin{align*}
0
& \le \sup_{s\in[0,1]}\sup_{r\in[0,t]}\biggl\lv\frac{r}{t+sh}(x-y)-\frac{r}{t+sh}\sqrt{\rmi}B_{t+h}-\frac{r}{t}(x-y)+\frac{r}{t}\sqrt{\rmi}B_t\biggr\rv\\
& \le \lv x-y\rv\sup_{s\in[0,1]}\sup_{r\in[0,t]}\biggl\lv\frac{r}{t+sh}-\frac{r}{t}\biggr\rv+\sup_{s\in[0,1]}\sup_{r\in[0,t]}\biggl\lv\frac{r}{t+sh}B_{t+h}-\frac{r}{t}B_t\biggr\rv\xrightarrow{h\to 0}0
\end{align*}
and the fact that $V$ is uniformly continuous on the compact set
\[ \biggl\{y+\frac{r}{t+sh}(x-y)+\sqrt{\rmi}B_r-\frac{r}{t+sh}\sqrt{\rmi}B_{t+h}:r\in[0,t],\,\lv h\rv\le\delta,\,s\in[0,1]\biggr\}\subset\C. \]
Equality in~\eqref{eq:psi-derivative-2-step2} is shown analogously.
\end{proof}

\begin{lemma}
For $t>0$ and $x,y\in\C$ it holds
\begin{multline*}
\lim_{h\downarrow 0}\frac{1}{h}\E\biggl[\exp\biggl(-\rmi\int_0^tV\Bigl(y+\frac{r}{t}(x-y)+\sqrt{\rmi}B_r-\frac{r}{t}\sqrt{\rmi}B_{t+h}\Bigr)\,\rmd r\biggr)\\
-\exp\biggl(-\rmi\int_0^tV\Bigl(y+\frac{r}{t}(x-y)+\sqrt{\rmi}B_r-\frac{r}{t}\sqrt{\rmi}B_t\Bigr)\,\rmd r\biggr)\biggr]\\
= \frac{\rmi}{2}\frac{\rmd^2}{\rmd x^2}\E\biggl[\exp\biggl(-\rmi\int_0^tV\Bigl(y+\frac{r}{t}(x-y)+\sqrt{\rmi}B_r-\frac{r}{t}\sqrt{\rmi}B_t\Bigr)\,\rmd r\biggr)\biggr].
\end{multline*}
Note that in contrast to Lemmas~\ref{lem:psi-derivative-1} and~\ref{lem:psi-derivative-2}, here we only compute a one-sided limit.
\end{lemma}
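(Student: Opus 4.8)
The plan is to peel off the increment $W:=B_{t+h}-B_t$, which for $h>0$ is $\mathcal{N}(0,h)$-distributed and independent of $\mathcal{F}_t:=\sigma(B_r:r\le t)$, and to absorb it into the drift of the Brownian bridge. Since $-\tfrac{r}{t}\sqrt{\rmi}B_{t+h}=-\tfrac{r}{t}\sqrt{\rmi}B_t-\tfrac{r}{t}\sqrt{\rmi}W$ and
\[ y+\tfrac{r}{t}(x-y)-\tfrac{r}{t}\sqrt{\rmi}W=y+\tfrac{r}{t}\bigl((x-\sqrt{\rmi}W)-y\bigr), \]
the first exponential in the statement equals $F_0(t,x-\sqrt{\rmi}W)$, where
\[ F_0(t,x):=\exp\biggl(-\rmi\int_0^tV\Bigl(y+\tfrac{r}{t}(x-y)+\sqrt{\rmi}B_r-\tfrac{r}{t}\sqrt{\rmi}B_t\Bigr)\,\rmd r\biggr) \]
is the random variable already used in the proofs of Lemmas~\ref{lem:psi-derivative-1} and~\ref{lem:psi-derivative-2}. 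By independence of $W$ and $\mathcal{F}_t$, the difference quotient in the statement thus equals $\tfrac1h\E\bigl[F_0(t,x-\sqrt{\rmi}W)-F_0(t,x)\bigr]$. This decomposition is available only for $h>0$, which is why only a one-sided limit is taken.

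For each fixed Brownian path the map $\zeta\mapsto F_0(t,x+\zeta)$ is entire (as $V$ is a polynomial), so I would apply Taylor's formula with integral remainder at $\zeta=0$, evaluated at $\zeta=-\sqrt{\rmi}W$, to get
\[ F_0(t,x-\sqrt{\rmi}W)=F_0(t,x)-\sqrt{\rmi}W\,\partial_xF_0(t,x)+\tfrac{\rmi}{2}W^2\,\partial_x^2F_0(t,x)+R_W, \]
where $(-\sqrt{\rmi}W)^2=\rmi W^2$ and $\lv R_W\rv\le\tfrac16\lv W\rv^3\sup_{s\in[0,1]}\lv\partial_x^3F_0(t,x-s\sqrt{\rmi}W)\rv$. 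Once all three summands are seen to be integrable (see below) the expectation splits; by independence of $W$ and $\mathcal{F}_t$ together with $\E[W]=0$ and $\E[W^2]=h$, the linear term contributes $0$ and the quadratic term contributes $\tfrac{\rmi}{2}h\,\E[\partial_x^2F_0(t,x)]$. After division by $h$ this equals $\tfrac{\rmi}{2}\E[\partial_x^2F_0(t,x)]=\tfrac{\rmi}{2}\tfrac{\rmd^2}{\rmd x^2}\E[F_0(t,x)]$, where the interchange of $\partial_x^2$ and $\E$ is justified exactly as in the proof of Lemma~\ref{lem:psi-derivative-2} using Lemmas~\ref{lem:boundedness-of-exponential-term},~\ref{lem:uniform-polynomial-bound} and~\ref{lem:doss-expectation-estimate}. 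This already reproduces the claimed right-hand side, so it remains to prove $\tfrac1h\E[R_W]\to0$ as $h\downarrow0$.

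The step I expect to be the main obstacle is exactly this uniform control of the remainder. The key observation is that in $F_0(t,x-s\sqrt{\rmi}W)$ the argument of $V$ equals $z(r)+\sqrt{\rmi}\,w_s(r)$ with $z(r)=y+\tfrac{r}{t}(x-y)$ \emph{not involving $W$ or $s$} and $w_s(r)=B_r-\tfrac{r}{t}B_t-s\tfrac{r}{t}W$ \emph{real}; since $\lv z(r)\rv\le\max(\lv x\rv,\lv y\rv)=:R$, Lemma~\ref{lem:boundedness-of-exponential-term} bounds $\lv F_0(t,x-s\sqrt{\rmi}W)\rv$ by $\rme^{ct}$ with a constant $c=c(R)$ independent of $W$, $s$ and the path, while Lemma~\ref{lem:uniform-polynomial-bound} (applied to $V',V'',V'''$ and $b=\sqrt{\rmi}w_s(r)$, uniformly over $\lv z(r)\rv\le R$) bounds the polynomial factors produced by differentiating $F_0$ three times by a polynomial in $\lV B\rV_t+\lv W\rv$. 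Hence $\sup_{s\in[0,1]}\lv\partial_x^3F_0(t,x-s\sqrt{\rmi}W)\rv\le\rme^{ct}\,Q(\lV B\rV_t+\lv W\rv)$ for some polynomial $Q$ with nonnegative coefficients, so $\E[\lv R_W\rv]\le\tfrac16\rme^{ct}\,\E\bigl[\lv W\rv^3Q(\lV B\rV_t+\lv W\rv)\bigr]$. Expanding $Q$, using independence of $W$ and $\lV B\rV_t$, the Gaussian moment bound $\E[\lv W\rv^k]=O(h^{k/2})$ and $\E[\mathrm{poly}(\lV B\rV_t)]<\infty$ from Lemma~\ref{lem:doss-expectation-estimate}, one obtains $\E[\lv R_W\rv]\le C(t,x,y)\,h^{3/2}$ for $0<h\le1$, whence $\tfrac1h\E[R_W]\to0$. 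The same estimates with $j\in\{1,2\}$ in place of $3$ (now evaluated at the fixed argument $x$, so without the $\lv W\rv$-dependence) show that $\lv W\rv^j\lv\partial_x^jF_0(t,x)\rv$ is integrable, justifying the earlier splitting of the expectation. Summing the three contributions yields the assertion.
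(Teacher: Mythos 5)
Your proof is correct, but it takes a genuinely different route from the paper. The paper rewrites the difference quotient using $\mathcal{L}(-(B_{t+h}-B_t))=\mathcal{L}(B_h)$ and the independence of increments as a double expectation $\E_{\omega_1}\E_{\omega_2}[f(x+\sqrt{\rmi}B_h(\omega_2),\omega_1)]$, and then applies the free-case Doss result (Theorem~\ref{thm:freedoss}) to the inner expectation for each fixed outer path, verifying its hypotheses (analyticity in $x$ via Fubini and Morera, integrable dominating bounds) before interchanging the outer expectation with $\partial_u$ and $\Delta$; you instead absorb the increment $W=B_{t+h}-B_t$ directly into the spatial argument, $F_0(t,x-\sqrt{\rmi}W)$, and perform a second-order Taylor expansion in the complex shift $-\sqrt{\rmi}W$, killing the linear term by $\E[W]=0$, extracting $\frac{\rmi}{2}h\,\E[\partial_x^2F_0(t,x)]$ from $\E[W^2]=h$, and controlling the remainder by $\E[\lv W\rv^3\,Q(\lV B\rV_t+\lv W\rv)]=O(h^{3/2})$ via Lemmas~\ref{lem:boundedness-of-exponential-term},~\ref{lem:uniform-polynomial-bound} and~\ref{lem:doss-expectation-estimate}. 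Your key observation that the argument of $V$ splits as $z(r)+\sqrt{\rmi}w_s(r)$ with $z(r)$ on the segment from $y$ to $x$ and $w_s(r)$ real is exactly what makes the uniform bound $\rme^{ct}$ of Lemma~\ref{lem:boundedness-of-exponential-term} applicable along the whole Taylor segment, which is the crucial point; and your explanation of the one-sidedness (the increment decomposition needs $h>0$) is a valid counterpart to the paper's reason (Theorem~\ref{thm:freedoss} only yields the derivative from the right at $u=0$). What your approach buys is a self-contained, elementary Gaussian-moment argument that avoids invoking Doss's theorem and the Fubini--Morera analyticity check; what the paper's approach buys is that it delegates the entire second-order structure (the emergence of $\frac{\rmi}{2}\Delta$) to an existing theorem, at the cost of verifying its hypotheses, whereas you must carry out the third-derivative remainder estimates by hand -- which you do correctly.
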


\begin{proof}
For notational reasons we write the expectations as integrals in this proof. For $\omega\in S'(\R)$ and $z\in\C$ define
\[ f(z,\omega):=\exp\biggl(-\rmi\int_0^tV\Bigl(y-\frac{r}{t}y+\sqrt{\rmi}B_r(\omega)-\frac{r}{t}\sqrt{\rmi}B_t(\omega)+\frac{r}{t}z\Bigr)\,\rmd r\biggr)\in\C. \]
We have to show
\[ \frac{\rmd^+}{\rmd^+ u}\int_{S'(\R)}f\Bigl(x-\sqrt{\rmi}(B_{t+u}(\omega)-B_t(\omega)),\omega\Bigr)\,\rmd\mu(\omega)\biggr\rv_{u=0}=\frac{\rmi}{2}\frac{\rmd^2}{\rmd x^2}\int_{S'(\R)} f(x,\omega)\,\rmd\mu(\omega). \]
For $u\ge 0$, since Brownian motion fulfills $\mathcal{L}(-(B_{t+u}-B_t))=\mathcal{L}(B_u)$ and has independent increments, it holds
\begin{multline*}
\int_{S'(\R)} f\Bigl(x-\sqrt{\rmi}(B_{t+u}(\omega)-B_t(\omega)),\omega\Bigr)\,\rmd\mu(\omega)\\
= \int_{S'(\R)}\int_{S'(\R)} f\Bigl(x+\sqrt{\rmi}\bigl(B_{t+u}(\omega_2)-B_t(\omega_2)\bigr),\omega_1\Bigr)\,\rmd\mu(\omega_2)\,\rmd\mu(\omega_1)\\
= \int_{S'(\R)}\int_{S'(\R)} f\Bigl(x+\sqrt{\rmi}B_u(\omega_2),\omega_1\Bigr)\,\rmd\mu(\omega_2)\,\rmd\mu(\omega_1).
\end{multline*}
The general strategy of the proof is to show the equalities
\begin{align}
& \frac{\rmd}{\rmd u}\int_{S'(\R)}\int_{S'(\R)} f\bigl(x+\sqrt{\rmi}B_u(\omega_2),\omega_1\bigr)\,\rmd\mu(\omega_2)\,\rmd\mu(\omega_1)\notag\\
& = \int_{S'(\R)}\frac{\rmd}{\rmd u}\int_{S'(\R)} f\bigl(x+\sqrt{\rmi}B_u(\omega_2),\omega_1\bigr)\,\rmd\mu(\omega_2)\,\rmd\mu(\omega_1)\label{eq:psi-derivative-3-step1}\\
& = \int_{S'(\R)}\frac{\rmi}{2}\frac{\rmd^2}{\rmd x^2}\int_{S'(\R)} f\bigl(x+\sqrt{\rmi}B_u(\omega_2),\omega_1\bigr)\,\rmd\mu(\omega_2)\,\rmd\mu(\omega_1)\label{eq:psi-derivative-3-step2}\\
& = \frac{\rmi}{2}\frac{\rmd^2}{\rmd x^2}\int_{S'(\R)}\int_{S'(\R)} f\bigl(x+\sqrt{\rmi}B_u(\omega_2),\omega_1\bigr)\,\rmd\mu(\omega_2)\,\rmd\mu(\omega_1)\notag
\end{align}
at $u=0$. First we show equality between~\eqref{eq:psi-derivative-3-step1} and~\eqref{eq:psi-derivative-3-step2}, for which we apply Theorem~\ref{thm:freedoss} to each of the functions $f(\cdot,\omega_1)$ for fixed $\omega_1\in S'(\R)$. Fix an arbitrary $T>t$. Clearly $f(\cdot,\omega_1)$ is analytic, since $V$ is a polynomial. Similar as in~\eqref{eq:psi-derivative-2-derivative-bound} for compact $K\subset\C$ there exists $c\in[0,\infty)$ and a polynomial $P$ such that for all $\omega_1,\omega_2\in S'(\R)$ we have
\begin{align*}
\sup_{x\in K}\sup_{u\in[0,T]}\bigl\lv f\bigl(x+\sqrt\rmi B_u(\omega_2),\omega_1\bigr)\bigr\rv & \le\rme^{ct}\\
\sup_{x\in K}\sup_{u\in[0,T]}\Bigl\lv\frac{\rmd}{\rmd x}f\bigl(x+\sqrt\rmi B_u(\omega_2),\omega_1\bigr)\Bigr\rv & \le\rme^{ct}\cdot t\cdot P\bigl(\lV B(\omega_1)\rV_T+\lV B(\omega_2)\rV_T\bigr),\\
\sup_{x\in K}\sup_{u\in[0,T]}\Bigl\lv\frac{\rmd^2}{\rmd x^2}f\bigl(x+\sqrt\rmi B_u(\omega_2),\omega_1\bigr)\Bigr\rv & \le\rme^{ct}\cdot t\cdot P\bigl(\lV B(\omega_1)\rV_T+\lV B(\omega_2)\rV_T\bigr),
\end{align*}
where $\lV B(\omega_i)\rV_T:=\sup_{u\in[0,T]}\lv B_u(\omega_i)\rv$ for $i=1,2$. In particular
\[ \sup_{u\in[0,T]}\bigl\lv f\bigl(x+\sqrt\rmi B_u(\omega_2),\omega_1\bigr)\bigr\rv\quad\text{and}\quad\sup_{u\in[0,T]}\Bigl\lv\frac{\rmd^2}{\rmd x^2} f\bigl(x+\sqrt\rmi B_u(\omega_2),\omega_1\bigr)\Bigr\rv \]
are integrable with respect to $\omega_2$ for all $x\in\C$. Finally the function
\[ \C\ni x\longmapsto\int_{S'(\R)} f\bigl(x+\sqrt{\rmi}B_u(\omega_2),\omega_1\bigr)\,\rmd\mu(\omega_2)\in\C \]
is analytic for all $u\ge 0$, since for closed curves $\gamma$ it holds
\[ \int_\gamma\int_{S'(\R)} f\bigl(x+\sqrt{\rmi}B_u(\omega_2),\omega_1\bigr)\,\rmd\mu(\omega_2)\,\rmd x=0 \]
by the theorems of Fubini and Morera. Here Fubini's theorem is applicable due to boundedness of the integrand. We have verified the assumptions of Theorem~\ref{thm:freedoss} on $f(\cdot,\omega_1)$, so equality between~\eqref{eq:psi-derivative-3-step1} and~\eqref{eq:psi-derivative-3-step2} is shown. Moreover, boundedness of the integrand also justifies each interchange of differentiation and integration.
\end{proof}

We have proven Proposition~\ref{prop:psi-derivative}. To see that the left-sided derivative exists as well and coincides with the right-sided one, we use the basic result that a continuous function $f\colon(0,T)\to\C$ which has a continuous one-sided derivative is continuously differentiable.

To use this fact, observe that for all $x,y\in\C$ the functions $\psi(t,x;y)$, $\partial_x\psi(t,x;y)$ and $\Delta\psi(t,x;y)$ are continuous in $t\in(0,T)$. As before, this is proven by continuity of the integrand and Lebesgue's dominated convergence theorem.

\begin{corollary}\label{cor:psi-derivative}
For all $x,y\in\C$ we have that
\[ \partial_t\psi(t,x;y)=\Bigl(-\rmi V(x)-\frac{x-y}{t}\partial_x+\frac{\rmi}{2}\Delta\Bigr)\psi(t,x;y) \]
exists and is continuous in $t>0$.
\end{corollary}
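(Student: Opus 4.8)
The plan is to deduce this from Proposition~\ref{prop:psi-derivative} together with the elementary fact, already recalled above, that a continuous function $f\colon(0,\infty)\to\C$ which admits a continuous right-sided derivative is continuously differentiable, with $f'$ equal to that right-sided derivative. Fix $x,y\in\C$ and $T>0$ and apply this to $f(t):=\psi(t,x;y)$ on $(0,T)$. By Proposition~\ref{prop:psi-derivative} the right-sided derivative exists and equals $g(t):=\bigl(-\rmi V(x)-\tfrac{x-y}{t}\partial_x+\tfrac{\rmi}{2}\Delta\bigr)\psi(t,x;y)$, so the only remaining points are the continuity in $t$ of $f=\psi(\cdot,x;y)$ and of $g$.

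For continuity of $\psi(\cdot,x;y)$ I would argue exactly as in~\eqref{eq:psi-derivative-1-step1.1} and~\eqref{eq:psi-derivative-1-step1.2}: for each fixed Brownian path the map $t\mapsto\int_0^tV\bigl(y+\tfrac{r}{t}(x-y)+\sqrt{\rmi}B_r-\tfrac{r}{t}\sqrt{\rmi}B_t\bigr)\,\rmd r$ is continuous, using continuity of the paths of $B$, uniform continuity of $V$ on the relevant compact subsets of $\C$, and the fact that the integration bound varies continuously; hence the integrand of $\psi$ is continuous in $t$ for almost every path, while Lemma~\ref{lem:boundedness-of-exponential-term} bounds it by $\rme^{cT}$ uniformly for $t\le T$, so Lebesgue's dominated convergence theorem gives continuity of $\psi(\cdot,x;y)$. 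For continuity of $g$, the terms $-\rmi V(x)$ and $\tfrac{x-y}{t}$ are plainly continuous in $t$, so it suffices to show continuity of $\partial_x\psi(t,x;y)$ and $\Delta\psi(t,x;y)$ in $t$. Here I would first justify the interchange $\partial_x\psi(t,x;y)=\E\bigl[\partial_x\exp(\cdots)\bigr]$, and similarly for $\Delta$, exactly as in~\eqref{eq:psi-derivative-2-derivative}--\eqref{eq:psi-derivative-2-derivative-bound}: differentiation in $x$ produces polynomial factors built from $V'$, $V''$ which Lemma~\ref{lem:uniform-polynomial-bound} bounds by $P(\lV B\rV_T)$ for a suitable polynomial $P$, Lemma~\ref{lem:boundedness-of-exponential-term} controls the exponential factor, and Lemma~\ref{lem:doss-expectation-estimate} makes the resulting bound $\rme^{cT}\,P(\lV B\rV_T)$ integrable, uniformly for $t$ in a compact subinterval of $(0,\infty)$. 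The very same dominating functions then permit one further application of dominated convergence to conclude that $\partial_x\psi(\cdot,x;y)$ and $\Delta\psi(\cdot,x;y)$ are continuous in $t$.

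With both continuity hypotheses verified, the elementary fact above yields that $\partial_t\psi(t,x;y)$ exists for every $t>0$ and equals $g(t)$, which is continuous; this is precisely the claim. I do not expect a genuine obstacle here: the one technical step is the interchange of $\partial_x$ and $\Delta$ with the expectation together with the domination uniform in $t$, but this is entirely parallel to the estimates already performed in the proof of Lemma~\ref{lem:psi-derivative-2}, and everything else is routine dominated-convergence bookkeeping.
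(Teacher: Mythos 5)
Your proposal is correct and follows essentially the same route as the paper: the elementary fact that a continuous function on an interval with a continuous one-sided derivative is continuously differentiable, combined with Proposition~\ref{prop:psi-derivative} and dominated-convergence arguments (via Lemmas~\ref{lem:boundedness-of-exponential-term}, \ref{lem:uniform-polynomial-bound} and~\ref{lem:doss-expectation-estimate}) establishing continuity in $t$ of $\psi$, $\partial_x\psi$ and $\Delta\psi$. You merely spell out the details that the paper compresses into ``as before, this is proven by continuity of the integrand and Lebesgue's dominated convergence theorem.''
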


We close this section by proving that $K_V(t,x;y)$ is continuously differentiable with respect to $(t,x,y)\in(0,\infty)\times\R^2$, and we do so by showing that each partial derivative exists and is continuous itself. Since $K_V(t,x;y)=K_0(t,x;y)\cdot\psi(t,x;y)$ and $K_0(t,x;y)$ is infinitely differentiable, we only need to do this for $\psi(t,x;y)$. Due to $\psi(t,x;y)=\psi(t,y;x)$ it is sufficient to consider the partial derivatives with respect to $t$ and $x$.

In what follows we use the random variable
\begin{equation}\label{eq:def-Z}
Z_t(x,y):=-\rmi\int_0^tV\Bigl(y+\frac{r}{t}(x-y)+\sqrt{\rmi}B_r-\frac{r}{t}\sqrt{\rmi}B_t\Bigr)\,\rmd r
\end{equation}
for $t>0$ and $x,y\in\C$ so that $\psi(t,x;y)=\E[\exp(Z_t(x,y))]$.

\begin{lemma}\label{lem:Z-derivatives}
For all paths of Brownian motion and all $n\in\N$ we have that $\partial_x^nZ_t(x,y)$ exists for $(t,x,y)\in(0,\infty)\times\C^2$ and $\partial_x^nZ$ extends continuously to $[0,\infty)\times\C^2$ via $\partial_x^nZ_0(x,y):=0$. Furthermore, for compact $K\subset\C^2$ there exists a polynomial $P=P_K$ such that
\[ \sup_{n\in\N}\sup_{t\in[0,T]}\sup_{(x,y)\in K}\lv\partial_x^nZ_t(x,y)\rv\le T\cdot P(\lV B\rV_T) \]
for all $T>0$.
\end{lemma}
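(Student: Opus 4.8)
The plan is to exploit that the argument of $V$ inside the integral defining $Z_t(x,y)$ is \emph{affine} in $x$, so that every $x$-derivative of the integrand is immediately expressed through a derivative of $V$, and to remove the $t$-dependence of the domain of integration by the substitution $r=st$, which simultaneously takes care of the extension to $t=0$.

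First I would rewrite, using $r=st$,
\[ Z_t(x,y)=-\rmi t\int_0^1 V\Bigl(y+s(x-y)+\sqrt{\rmi}B_{st}-s\sqrt{\rmi}B_t\Bigr)\,\rmd s, \]
which is valid for all $t\ge 0$ (at $t=0$ both sides vanish, as $B_0=0$). For a fixed Brownian path the integrand, together with its $x$-derivatives, is continuous in $(s,x,y)$ and, on $[0,1]\times K$ for compact $K\subset\C^2$, bounded (see the estimate below), since $V^{(n)}$ is a polynomial and $B$ is bounded on $[0,t]$. Because the argument of $V$ is affine in $x$ with $x$-derivative equal to the constant $s$, the chain rule gives $\partial_x^n[V(y+s(x-y)+\sqrt{\rmi}B_{st}-s\sqrt{\rmi}B_t)]=V^{(n)}(y+s(x-y)+\sqrt{\rmi}B_{st}-s\sqrt{\rmi}B_t)\,s^n$, and the Leibniz integral rule (applied as in the proof of Lemma~\ref{lem:psi-derivative-2}) yields, for all $t\ge 0$, $x,y\in\C$ and $n\in\N$,
\[ \partial_x^nZ_t(x,y)=-\rmi t\int_0^1 V^{(n)}\Bigl(y+s(x-y)+\sqrt{\rmi}B_{st}-s\sqrt{\rmi}B_t\Bigr)s^n\,\rmd s. \]
In particular $\partial_x^nZ$ exists on $(0,\infty)\times\C^2$ and $\partial_x^nZ_0(x,y)=0$, consistent with the claimed extension.

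Next, the bound. For $(x,y)\in K$ and $s\in[0,1]$ the point $y+s(x-y)=(1-s)y+sx$ lies in the bounded set $A$ given by the convex hull of the two coordinate projections of $K$, and $\lv\sqrt{\rmi}B_{st}-s\sqrt{\rmi}B_t\rv\le\lv B_{st}\rv+\lv B_t\rv\le 2\lV B\rV_T$ whenever $t\le T$. By Lemma~\ref{lem:uniform-polynomial-bound} applied to the polynomial $V^{(n)}$ and the set $A$ there is $C_n\in\R$ with $\sup_{a\in A}\lv V^{(n)}(a+b)\rv\le C_n\sum_{k=0}^{2N}\lv b\rv^k$ for all $b\in\C$; since $V$ has degree $2N$, $V^{(n)}\equiv 0$ for $n>2N$, so $C:=\max_{0\le n\le 2N}C_n$ gives $\sup_{n\in\N}\sup_{a\in A}\lv V^{(n)}(a+b)\rv\le C\sum_{k=0}^{2N}\lv b\rv^k$. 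Inserting $b=\sqrt{\rmi}B_{st}-s\sqrt{\rmi}B_t$ into the displayed formula for $\partial_x^nZ_t$ and using $s^n\le 1$ gives
\[ \sup_{n\in\N}\sup_{t\in[0,T]}\sup_{(x,y)\in K}\lv\partial_x^nZ_t(x,y)\rv\le T\cdot C\sum_{k=0}^{2N}\bigl(2\lV B\rV_T\bigr)^k=:T\cdot P\bigl(\lV B\rV_T\bigr), \]
with $P$ a polynomial, which is the asserted estimate.

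Finally, for continuity of $\partial_x^nZ$ on $[0,\infty)\times\C^2$ I would fix a Brownian path and note that the integrand $(s,t,x,y)\mapsto V^{(n)}(y+s(x-y)+\sqrt{\rmi}B_{st}-s\sqrt{\rmi}B_t)s^n$ is jointly continuous on $[0,1]\times[0,\infty)\times\C^2$ (composition of continuous maps, using continuity of $r\mapsto B_r$) and, by the previous step, dominated locally uniformly in $(t,x,y)$ by a constant; hence $(t,x,y)\mapsto\int_0^1(\cdots)\,\rmd s$ is continuous by dominated convergence, and multiplying by the continuous factor $-\rmi t$ shows $\partial_x^nZ_t(x,y)$ is continuous on $[0,\infty)\times\C^2$ with value $0$ at $t=0$. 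The only genuinely delicate point is the interaction of the $t$-dependent integration domain, the differentiation in $x$, and the passage to $t=0$; the substitution $r=st$ dissolves all three at once, and the uniformity in $n$ comes for free since only $V^{(0)},\dots,V^{(2N)}$ are non-zero.
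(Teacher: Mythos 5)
Your proof is correct and follows essentially the same route as the paper: the Leibniz rule gives $\partial_x^nZ_t(x,y)$ as an integral of $V^{(n)}$ against the factor $(r/t)^n$, Lemma~\ref{lem:uniform-polynomial-bound} together with $V^{(n)}\equiv 0$ for $n>\deg V$ yields the uniform bound $T\cdot P(\lV B\rV_T)$, and this bound forces the continuous extension by $0$ at $t=0$. Your substitution $r=st$ is merely a reparametrization of the paper's argument that makes the joint continuity up to $t=0$ slightly more transparent, but it introduces no new idea.
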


\begin{proof}
By the Leibniz integral rule we have
\[ \partial_x^nZ_t(x,y)=-\rmi\int_0^t\frac{r^n}{t^n}V^{(n)}\Bigl(y+\frac{r}{t}(x-y)+\sqrt{\rmi}B_r-\frac{r}{t}\sqrt{\rmi}B_t\Bigr)\,\rmd r \]
for all $t>0$ and $x,y\in\C$. Since $V$ is a polynomial it is obvious that this is a continuous function in $(t,x,y)\in(0,\infty)\times\C^2$. Lemma~\ref{lem:uniform-polynomial-bound} shows that for compact $K\subset\C^2$ there is a polynomial $P_n$ such that
\[ \sup_{(x,y)\in K}\lv\partial_x^nZ_t(x,y)\rv\le\int_0^tP_n(\lV B\rV_T)\,\rmd r=t\cdot P_n(\lV B\rV_T) \]
for $0<t\le T$. Thus defining $\partial_x^nZ_0(x,y):=0$ extends $\partial_x^nZ_t(x,y)$ continuously to $(t,x,y)\in[0,\infty)\times\C^2$. Since $V^{(n)}=0$ for $n>\deg V$ one single polynomial $P$ can be found which fulfills the desired estimate.
\end{proof}

\begin{proposition}\label{prop:psi-derivative-x}
For all $n\in\N$ we have that $\partial_x^n\psi(t,x;y)$ exists for $(t,x,y)\in(0,\infty)\times\C^2$ and $\partial_x^n\psi$ extends continuously to $[0,\infty)\times\C^2$.
\end{proposition}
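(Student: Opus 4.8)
The plan is to differentiate the representation $\psi(t,x;y)=\E[\exp(Z_t(x,y))]$ from~\eqref{eq:def-psi} and~\eqref{eq:def-Z} under the expectation sign. For each fixed Brownian path the map $x\mapsto Z_t(x,y)$ is entire, since $V$ is a polynomial, hence so is $x\mapsto\exp(Z_t(x,y))$. By Fa\`a di Bruno's formula, $\partial_x^n\exp(Z_t(x,y))$ equals $\exp(Z_t(x,y))$ times a universal polynomial with nonnegative integer coefficients evaluated at $\partial_xZ_t(x,y),\dotsc,\partial_x^nZ_t(x,y)$. Combined with Lemma~\ref{lem:Z-derivatives}, which provides the continuous extensions of the $\partial_x^kZ_t(x,y)$ to $[0,\infty)\times\C^2$, this already shows that $\partial_x^n\exp(Z_t(x,y))$ extends to a continuous function on $[0,\infty)\times\C^2$ for every path.

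The crucial step is to produce a locally uniform $L^1(\mu)$-domination of these derivatives. I would fix $T>0$ and a compact set $K\subset\C^2$ and argue as in the discussion preceding Theorem~\ref{thm:KV-construction}: Lemma~\ref{lem:boundedness-of-exponential-term} applied path-by-path gives $c\in[0,\infty)$ with $\lv\exp(Z_t(x,y))\rv\le\rme^{cT}$ for $t\in[0,T]$ and $(x,y)\in K$, while Lemma~\ref{lem:Z-derivatives} gives a polynomial $P$ with $\lv\partial_x^kZ_t(x,y)\rv\le T\,P(\lV B\rV_T)$ uniformly in $k\in\N$, $t\in[0,T]$ and $(x,y)\in K$. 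Substituting these estimates into the Fa\`a di Bruno expression, whose coefficients are nonnegative, yields a polynomial $Q_n$ with
\[ \sup_{t\in[0,T]}\,\sup_{(x,y)\in K}\bigl\lv\partial_x^n\exp(Z_t(x,y))\bigr\rv\le\rme^{cT}\,Q_n(\lV B\rV_T)\in L^1(\mu), \]
integrability being Lemma~\ref{lem:doss-expectation-estimate}.

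With this domination in hand the remainder is routine. For fixed $t>0$ the integrands $x\mapsto\exp(Z_t(x,y))$ are holomorphic and, by Lemma~\ref{lem:boundedness-of-exponential-term}, uniformly locally bounded in $\omega$, so $x\mapsto\psi(t,x;y)$ is holomorphic by the Fubini--Morera argument already used repeatedly (e.g.\ in Example~\ref{ex:translation-pointwise}), and Cauchy's integral formula together with Fubini's theorem gives $\partial_x^n\psi(t,x;y)=\E[\partial_x^n\exp(Z_t(x,y))]$ for all $n\in\N$ and $t>0$; alternatively one induces on $n$ and differentiates under the expectation via Lebesgue's dominated convergence theorem applied to the difference quotients, which are controlled by the same bound. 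Finally, since $\partial_x^n\exp(Z_t(x,y))$ is continuous on $[0,\infty)\times\C^2$ and dominated there locally uniformly by an $L^1(\mu)$-function, Lebesgue's dominated convergence theorem shows that $(t,x,y)\mapsto\E[\partial_x^n\exp(Z_t(x,y))]$ is continuous on all of $[0,\infty)\times\C^2$; as it agrees with $\partial_x^n\psi$ for $t>0$, it is the asserted continuous extension. In particular $Z_0\equiv 0$ and $\partial_x^kZ_0\equiv 0$ for $k\ge 1$, so one even gets $\partial_x^n\psi(0,x;y)=0$ for $n\ge 1$ and $\psi(0,x;y)=1$.

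The main obstacle is precisely the locally uniform $L^1(\mu)$-domination and the ensuing interchange of $\partial_x^n$ with $\E$; it is exactly for this that Lemmas~\ref{lem:boundedness-of-exponential-term}, \ref{lem:Z-derivatives} and~\ref{lem:doss-expectation-estimate} were prepared, and once they are combined the proof reduces to a standard application of dominated convergence together with the basic theory of parameter-dependent holomorphic integrals.
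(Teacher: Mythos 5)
Your proposal is correct and follows essentially the same route as the paper: express $\partial_x^n\exp(Z_t(x,y))$ as a universal polynomial in $\partial_xZ_t,\dotsc$ (the paper invokes the chain and product rule where you invoke Fa\`a di Bruno) times $\exp(Z_t(x,y))$, dominate this locally uniformly via Lemmas~\ref{lem:boundedness-of-exponential-term}, \ref{lem:Z-derivatives} and~\ref{lem:doss-expectation-estimate}, and conclude by dominated convergence. You merely spell out in more detail the interchange of $\partial_x^n$ with $\E$ (via Fubini--Morera/Cauchy or difference quotients), which the paper leaves implicit.
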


\begin{proof}
Since $\partial_x^nZ_t(x,y)=0$ for $n>\deg V$, by the chain rule and the product rule for differentation it is obvious that for each $n\in\N$ there exists a polynomial $P_n\colon\C^{\deg V}\to\C$ such that
\[ \frac{\rmd^n}{\rmd x^n}\exp(Z_t(x,y))=P_n\bigl(\partial_xZ_t(x,y),\dotsc,\partial_x^{\deg V}Z_t(x,y)\bigr)\cdot\exp(Z_t(x,y)). \]
By Lemma~\ref{lem:Z-derivatives} this extends to a continuous function on $[0,\infty)\times\C^2$. Moreover, it is dominated by an integrable random variable locally uniformly by Lemmas~\ref{lem:boundedness-of-exponential-term} and~\ref{lem:doss-expectation-estimate}. Hence Lebesgue's dominated convergence theorem yields that
\[ \partial_x^n\psi(t,x;y)=\frac{\rmd^n}{\rmd x^n}\E\bigl[\exp\bigl(Z_t(x,y)\bigr)\bigr]=\E\Bigl[\frac{\rmd^n}{\rmd x^n}\exp\bigl(Z_t(x,y)\bigr)\Bigr] \]
extends continuously to $[0,\infty)\times\C^2$ as well.
\end{proof}

\begin{proposition}
For all $n\in\N$ it holds that $\partial_x^n\partial_t\psi$ exists and is continuous on $(0,\infty)\times\C^2$.
\end{proposition}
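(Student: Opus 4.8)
The plan is to reduce the claim to the two facts already established for $\psi$: the explicit expression for $\partial_t\psi$ from Corollary~\ref{cor:psi-derivative} and the arbitrary-order smoothness of $\psi$ in the space variable from Proposition~\ref{prop:psi-derivative-x}. By Corollary~\ref{cor:psi-derivative}, for every $(t,x,y)\in(0,\infty)\times\C^2$ the identity
\[ \partial_t\psi(t,x;y)=-\rmi V(x)\psi(t,x;y)-\frac{x-y}{t}\partial_x\psi(t,x;y)+\frac{\rmi}{2}\partial_x^2\psi(t,x;y) \]
holds as an equality of functions on the open set $(0,\infty)\times\C^2$. Thus $\partial_t\psi$ is a finite linear combination of products of the polynomials $V^{(k)}(x)$, the affine function $(x-y)/t$, the factor $1/t$, and the $x$-derivatives $\psi,\partial_x\psi,\partial_x^2\psi$. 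I would simply differentiate this identity $n$ times with respect to $x$.

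Carrying this out with the Leibniz rule: since $V$ is a polynomial, $\partial_x^n\bigl(V(x)\psi\bigr)=\sum_{k=0}^{n}\binom{n}{k}V^{(k)}(x)\,\partial_x^{n-k}\psi$ is a finite sum (the terms with $k>\deg V$ vanishing); since $(x-y)/t$ is affine in $x$, $\partial_x^n\bigl(\frac{x-y}{t}\partial_x\psi\bigr)=\frac{x-y}{t}\partial_x^{n+1}\psi+\frac{n}{t}\partial_x^{n}\psi$; and $\partial_x^n\bigl(\frac{\rmi}{2}\partial_x^2\psi\bigr)=\frac{\rmi}{2}\partial_x^{n+2}\psi$. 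By Proposition~\ref{prop:psi-derivative-x} each function $\partial_x^m\psi$, $m\in\N$, exists and is jointly continuous on $(0,\infty)\times\C^2$, the coefficients $V^{(k)}(x)$, $(x-y)/t$ and $1/t$ are continuous on $(0,\infty)\times\C^2$, and finite sums and products of such functions are again continuous there. Hence $\partial_x^n\partial_t\psi$ exists on $(0,\infty)\times\C^2$ and equals
\[ -\rmi\sum_{k=0}^{n}\binom{n}{k}V^{(k)}(x)\,\partial_x^{n-k}\psi(t,x;y)-\frac{x-y}{t}\partial_x^{n+1}\psi(t,x;y)-\frac{n}{t}\partial_x^{n}\psi(t,x;y)+\frac{\rmi}{2}\partial_x^{n+2}\psi(t,x;y), \]
which is continuous on $(0,\infty)\times\C^2$.

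There is essentially no analytic obstacle here; the only point that should be spelled out is that applying $\partial_x^n$ to the pointwise identity of Corollary~\ref{cor:psi-derivative} genuinely produces $\partial_x^n\partial_t\psi$. This is immediate, since that identity is an equality of functions on the open set $(0,\infty)\times\C^2$ and, for each fixed $t>0$ and $y\in\C$, its right-hand side is a $C^\infty$ function of $x$ whose $x$-derivatives of every order depend continuously on $(t,x,y)$ by Proposition~\ref{prop:psi-derivative-x}; the factor $1/t$ is harmless because we only claim continuity on $(0,\infty)\times\C^2$, consistently with Corollary~\ref{cor:psi-derivative}. In short, the statement is a formal consequence of Corollary~\ref{cor:psi-derivative} and Proposition~\ref{prop:psi-derivative-x}, with all the dominated-convergence and path-by-path estimates already carried out there.
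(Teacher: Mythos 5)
Your proposal is correct and follows essentially the same route as the paper: the paper's proof likewise just combines the formula for $\partial_t\psi$ from Corollary~\ref{cor:psi-derivative} with the smoothness and continuity of $\partial_x^m\psi$ from Proposition~\ref{prop:psi-derivative-x}. You merely make explicit the Leibniz-rule expansion that the paper leaves as ``it is clear'', which is a harmless elaboration.
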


\begin{proof}
We already know
\[ \partial_t\psi(t,x;y)=\Bigl(-\rmi V(x)-\frac{x-y}{t}\partial_x+\frac{\rmi}{2}\Delta\Bigr)\psi(t,x;y), \]
hence by Proposition~\ref{prop:psi-derivative-x} it is clear that $\partial_t\psi$ is infinitely differentiable with respect to $x$ and the derivatives are continuous on $(0,\infty)\times\C^2$.
\end{proof}

\begin{corollary}\label{cor:KV-is-differentiable}
$K_V$ is continuously differentiable on $(0,\infty)\times\C^2$.
\end{corollary}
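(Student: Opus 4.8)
The plan is to exploit the factorisation $K_V(t,x;y)=K_0(t,x;y)\cdot\psi(t,x;y)$ furnished by Theorem~\ref{thm:KV-construction} and to show that each factor is continuously differentiable on $(0,\infty)\times\C^2$; the product rule then yields the claim. The free propagator $K_0$ from~\eqref{eq:free-propagator} is entire in $(x,y)$ and smooth in $t>0$, so it is $C^1$ there without further work, and no effort needs to be spent on it.

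For $\psi$ the idea is simply to assemble the derivative information already obtained. By Proposition~\ref{prop:psi-derivative-x} (applied with $n=1$ and $n=2$) the derivatives $\partial_x\psi$ and $\partial_x^2\psi$ exist on $(0,\infty)\times\C^2$ and extend continuously there; via the symmetry $\psi(t,x;y)=\psi(t,y;x)$ the same holds for $\partial_y\psi$. For the time derivative one invokes Corollary~\ref{cor:psi-derivative}, which gives
\[
\partial_t\psi(t,x;y)=\Bigl(-\rmi V(x)-\frac{x-y}{t}\partial_x+\frac{\rmi}{2}\Delta\Bigr)\psi(t,x;y).
\]
Since $\Delta$ is here $\partial_x^2$ and the coefficients $-\rmi V(x)$ and $-(x-y)/t$ are continuous on $(0,\infty)\times\C^2$, joint continuity of $\partial_t\psi$ follows from that of $\psi$, $\partial_x\psi$ and $\partial_x^2\psi$. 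Hence $\psi$ has continuous first-order partial derivatives with respect to all (real) coordinates of $(t,x,y)$, the holomorphic derivatives $\partial_x$, $\partial_y$ delivering continuity of the associated real partials, and the standard criterion that continuous first partials imply continuous differentiability gives $\psi\in C^1\bigl((0,\infty)\times\C^2\bigr)$.

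Finally $K_V=K_0\cdot\psi$ is a product of $C^1$ functions and is therefore $C^1$ on $(0,\infty)\times\C^2$, which is the assertion. The only point that is not pure bookkeeping is the joint continuity of $\partial_t\psi$, but as indicated this will be immediate once the continuous kernels $\psi$, $\partial_x\psi$, $\partial_x^2\psi$ from Proposition~\ref{prop:psi-derivative-x} are substituted into the formula of Corollary~\ref{cor:psi-derivative}; everything else reduces to the product rule and the elementary relation between continuous partials and total differentiability.
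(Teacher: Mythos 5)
Your argument is correct and follows essentially the same route as the paper: factor $K_V=K_0\cdot\psi$, use the symmetry $\psi(t,x;y)=\psi(t,y;x)$ together with Proposition~\ref{prop:psi-derivative-x} for the spatial derivatives, and obtain joint continuity of $\partial_t\psi$ by substituting these into the formula of Corollary~\ref{cor:psi-derivative}, exactly as the paper does in the proposition preceding the corollary. No gaps.
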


\end{subsection}

\begin{subsection}{Verifying the Schrödinger Equation}

In this section we show in which sense $K_V$ is a fundamental solution to the Schrödinger equation. Moreover we show that solutions to~\eqref{eq:schroedinger-equation} for initial states $u_0$ with compact support can be obtained via~\eqref{eq:def-solution-via-integral}.

The free particle propagator $K_0$ as in~\eqref{eq:free-propagator} is well-known to be a solution to the free Schrödinger equation with initial condition $\delta_x$. In fact
\begin{equation}\label{eq:K0-diracsequence}
\lim_{t\downarrow 0}\int_\R K_0(t,x;y)u_0(y)\,\rmd y=u_0(x)
\end{equation}
for all $x\in\R$ and $u_0\colon\R\to\C$ which are compactly supported and twice differentiable. We want to transfer this fact to $K_V$ and show that
\begin{equation}\label{eq:KV-diracsequence}
\lim_{t\downarrow 0}\int_\R K_V(t,x;y)u_0(y)\,\rmd y = u_0(x)
\end{equation}
is valid as well.

\begin{remark}
If $u_0$ does not have compact support, we cannot ensure existence of the integral $\int_\R K_V(t,x;y)u_0(y)\,\rmd y$ even when $t>0$ is fixed. For example, consider the rapidly decreasing function $u_0(y)=\rme^{-y^2}$ and the potential $V(z)=z^6$. Then integrability of
\[ K_V(t,x;y)u_0(y)=K_0(t,x;y)\cdot\E\biggl[\exp\biggl(-\rmi\int_0^tV\Bigl(y+\frac{r}{t}(x-y)+\sqrt{\rmi}B_r-\frac{r}{t}\sqrt{\rmi}B_t\Bigr)\,\rmd r-y^2\biggr)\biggr] \]
with respect to $y\in\R$ is rather questionable, also see~\cite[Rem.~7.26]{Vogel2010}. For this reason, we only consider initial states $u_0$ with compact support.
\end{remark}

To link~\eqref{eq:K0-diracsequence} and~\eqref{eq:KV-diracsequence} the following is useful.

\begin{lemma}\label{lem:uniform-propagator-convergence}
Let $K\subset\R$ be compact. Then
\[ \lim_{t\downarrow 0}\sup_{x,y\in K}\bigl\lv K_V(t,x;y)-K_0(t,x;y)\bigr\rv=0. \]
\end{lemma}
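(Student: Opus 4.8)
The plan is to use the factorization $K_V(t,x;y)=K_0(t,x;y)\,\psi(t,x;y)$ from Theorem~\ref{thm:KV-construction}, which gives $K_V-K_0=K_0\,(\psi-1)$. For real $x,y$ and $t>0$ the exponential in~\eqref{eq:free-propagator} has modulus one, so $\lv K_0(t,x;y)\rv=(2\pi t)^{-1/2}$ uniformly in $x,y\in\R$. Hence it suffices to show that $\sup_{x,y\in K}\lv\psi(t,x;y)-1\rv$ tends to zero faster than $\sqrt t$ as $t\downarrow 0$; I will in fact show it is $O(t)$, which then yields $\sup_{x,y\in K}\lv K_V-K_0\rv=O(\sqrt t)\to 0$.

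To control $\psi(t,x;y)-1=\E\bigl[\exp(Z_t(x,y))-1\bigr]$ with $Z_t(x,y)$ as in~\eqref{eq:def-Z}, first I would estimate the integrand pathwise. Writing $f(r):=y+\tfrac rt(x-y)$ and $g(r):=B_r-\tfrac rt B_t$, the argument of $V$ in~\eqref{eq:def-Z} equals $f(r)+\sqrt\rmi\,g(r)$ with $g$ real-valued and, since $x,y\in K$ and $r/t\in[0,1]$, with $f(r)$ a convex combination of $x$ and $y$; in particular $\lV f\rV_\infty\le R_K:=\sup_{k\in K}\lv k\rv$. The computation in the proof of Lemma~\ref{lem:boundedness-of-exponential-term} then produces a constant $c=c_K\ge 0$ with $\Real Z_t(x,y)=\int_0^t\Imag V\bigl(f(r)+\sqrt\rmi\,g(r)\bigr)\,\rmd r\le ct$, while Lemma~\ref{lem:uniform-polynomial-bound}, applied with $A$ the convex hull of $K$ and $P=V$, produces a polynomial $P=P_K$ with $\lv Z_t(x,y)\rv\le t\,P(\lV B\rV_T)$ for all $0<t\le T$, both bounds being uniform in $x,y\in K$. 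Feeding these into the elementary inequality $\lv\rme^w-1\rv\le\lv w\rv\,\rme^{(\Real w)^+}$ gives
\[ \bigl\lv\exp(Z_t(x,y))-1\bigr\rv\le t\,\rme^{cT}\,P(\lV B\rV_T),\qquad x,y\in K,\ 0<t\le T. \]

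Finally I would take expectations: the right-hand side is integrable by Lemma~\ref{lem:doss-expectation-estimate}, so $\sup_{x,y\in K}\lv\psi(t,x;y)-1\rv\le t\,\rme^{cT}\,\E[P(\lV B\rV_T)]$ for $t\le T$, and multiplying by $\lv K_0(t,x;y)\rv=(2\pi t)^{-1/2}$ completes the proof. There is no serious obstacle; the only point needing care is that the constant $c$ and the polynomial $P$ can be chosen once and for all, independently of $x,y\in K$, which holds precisely because the drift term $f(r)=y+\tfrac rt(x-y)$ never leaves the convex hull of $K$ for $r\in[0,t]$, so that Lemmas~\ref{lem:boundedness-of-exponential-term} and~\ref{lem:uniform-polynomial-bound} apply with constants depending only on $K$. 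The conceptual content is simply that $\psi-1$ decays like $t$ while the singularity of $K_0$ is only of order $t^{-1/2}$.
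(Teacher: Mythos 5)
Your proposal is correct and follows essentially the same route as the paper: factor $K_V-K_0=K_0(\psi-1)$, use $\lv K_0(t,x;y)\rv=(2\pi t)^{-1/2}$, bound $\lv Z_t\rv\le t\,P(\lV B\rV_T)$ and $\lv\exp(Z_t)\rv\le\rme^{ct}$ uniformly on $K$ via Lemmas~\ref{lem:boundedness-of-exponential-term} and~\ref{lem:uniform-polynomial-bound}, and conclude with the elementary bound $\lv\rme^w-1\rv\le\lv w\rv\max\{1,\lv\rme^w\rv\}$ and Lemma~\ref{lem:doss-expectation-estimate}. The only cosmetic difference is that the paper cites the proof of Lemma~\ref{lem:Z-derivatives} for the bound on $Z_t$, whereas you apply Lemma~\ref{lem:uniform-polynomial-bound} directly.
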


\begin{proof}
Recall that $K_V(t,x;y)=K_0(t,x;y)\cdot\psi(t,x;y)$. Since for $x,y\in\R$ it holds $\lv K_0(t,x;y)\rv=(2\pi t)^{-1/2}$ we have to show
\[ \lim_{t\downarrow 0}\frac{1}{\sqrt{t}}\sup_{x,y\in K}\lv\psi(t,x;y)-1\rv=0. \]
Let $Z_t(x,y)$ be as in~\eqref{eq:def-Z} so that $\psi(t,x;y)=\E[\exp(Z_t(x,y))]$. Then
\[ \sup_{x,y\in K}\lv\exp(Z_t(x,y))\rv\le\rme^{ct} \]
for some $c\in[0,\infty)$ and all $t>0$ by Lemma~\ref{lem:boundedness-of-exponential-term}. Moreover there exists a polynomial $P$ such that
\[ \sup_{x,y\in K}\lv Z_t(x,y)\rv\le t\cdot P(\lV B\rV_T) \]
whenever $0<t\le T$ as in the proof of Lemma~\ref{lem:Z-derivatives}. It follows
\begin{align*}
\sup_{x,y\in K}\lv\psi(t,x;y)-1\rv
 & \le \E\biggl[\sup_{x,y\in K}\lv\exp(Z_t(x,y))-1\rv\biggr]\\
 & \le \E\biggl[\sup_{x,y\in K}\lv Z_t(x,y)\rv\cdot\max\bigl\{1,\lv\exp(Z_t(x,y))\rv\bigr\}\biggr]\\
 & \le \rme^{ct}\cdot t\cdot\E[P(\lV B\rV_T)],
\end{align*}
where we used that
\[ \lv\exp(z)-1\rv=\biggl\lv z\int_0^1\exp(sz)\,\rmd s\biggr\rv\le\lv z\rv\sup_{s\in[0,1]}\lv\exp(sz)\rv=\lv z\rv\max\{1,\lv\exp(z)\rv\} \]
holds for all $z\in\C$.
\end{proof}

\begin{theorem}\label{thm:KV-is-fundamental-solution}
$K_V$ solves the Schrödinger equation and fulfills $\lim_{t\downarrow 0}K_V(t,x;y)=\delta_x(y)$ in the sense that~\eqref{eq:KV-diracsequence} holds for all $x\in\R$ and $u_0\colon\R\to\C$ which are compactly supported and twice differentiable.
\end{theorem}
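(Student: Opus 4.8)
The plan is to prove the two assertions separately, both reducing to facts already in hand. For the Schrödinger equation, the idea is to exploit the factorization $K_V(t,x;y)=K_0(t,x;y)\cdot\psi(t,x;y)$ together with the differentiability of $\psi$ established in Corollary~\ref{cor:psi-derivative} and Proposition~\ref{prop:psi-derivative-x}, so that the Leibniz and chain rules apply without difficulty. First I would record the two elementary identities for the free propagator, read off directly from~\eqref{eq:free-propagator}: the free Schrödinger equation $\rmi\partial_tK_0+\frac12\Delta K_0=0$ (recalled above) and $\partial_xK_0=\frac{\rmi(x-y)}{t}K_0$. Expanding via the product rule gives
\[ \rmi\partial_tK_V+\tfrac12\Delta K_V-VK_V=\bigl(\rmi\partial_tK_0+\tfrac12\Delta K_0\bigr)\psi+\rmi K_0\,\partial_t\psi+(\partial_xK_0)(\partial_x\psi)+\tfrac12K_0\,\Delta\psi-VK_0\psi, \]
whose first summand vanishes. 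Into the remaining terms I substitute $\partial_t\psi=\bigl(-\rmi V(x)-\frac{x-y}{t}\partial_x+\frac{\rmi}{2}\Delta\bigr)\psi$ from Corollary~\ref{cor:psi-derivative}; after multiplying by $\rmi$ this becomes $\bigl(V(x)-\frac{\rmi(x-y)}{t}\partial_x-\frac12\Delta\bigr)\psi$, so the $V(x)\psi$-term cancels $VK_0\psi$, the $\frac12\Delta\psi$-term cancels $\frac12K_0\Delta\psi$, and what remains is $\bigl(\partial_xK_0-\frac{\rmi(x-y)}{t}K_0\bigr)\partial_x\psi=0$ by the formula for $\partial_xK_0$. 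In other words, the transport term $-\frac{x-y}{t}\partial_x\psi$ in Corollary~\ref{cor:psi-derivative} is exactly what is needed to absorb the cross term produced by differentiating $K_0$, and one concludes $\rmi\partial_tK_V=-\frac12\Delta K_V+VK_V$ on $(0,\infty)\times\R$ for each fixed $y\in\R$.

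For the initial condition I would deduce~\eqref{eq:KV-diracsequence} from the known free statement~\eqref{eq:K0-diracsequence} and Lemma~\ref{lem:uniform-propagator-convergence}. Fix $x\in\R$, let $u_0$ be compactly supported and twice differentiable, and choose a compact $K\subset\R$ containing both $x$ and $\operatorname{supp}u_0$. Then
\begin{multline*}
\biggl\lv\int_\R K_V(t,x;y)u_0(y)\,\rmd y-\int_\R K_0(t,x;y)u_0(y)\,\rmd y\biggr\rv\\
\le\Bigl(\sup_{x',y\in K}\bigl\lv K_V(t,x';y)-K_0(t,x';y)\bigr\rv\Bigr)\int_\R\lv u_0(y)\rv\,\rmd y\xrightarrow{t\downarrow 0}0
\end{multline*}
by Lemma~\ref{lem:uniform-propagator-convergence}, and combining this with~\eqref{eq:K0-diracsequence} gives $\lim_{t\downarrow 0}\int_\R K_V(t,x;y)u_0(y)\,\rmd y=u_0(x)$, which is precisely the asserted sense of $\lim_{t\downarrow 0}K_V(t,x;y)=\delta_x(y)$.

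I do not expect a serious obstacle here: the regularity needed for the differential equation is already supplied by Corollary~\ref{cor:psi-derivative} and Proposition~\ref{prop:psi-derivative-x}, and the second part is a soft domination argument once Lemma~\ref{lem:uniform-propagator-convergence} is available. The only point demanding genuine care is the bookkeeping in the Leibniz expansion — tracking the factors of $\rmi$ when inserting Corollary~\ref{cor:psi-derivative}, and verifying that $\partial_xK_0=\frac{\rmi(x-y)}{t}K_0$ makes the surviving $\partial_x\psi$-term vanish exactly.
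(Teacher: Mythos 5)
Your proposal is correct and follows essentially the same route as the paper: the equation is verified via the product rule applied to $K_V=K_0\cdot\psi$ together with the identities $\partial_tK_0=\frac{\rmi}{2}\Delta K_0$, $\partial_xK_0=\rmi\frac{x-y}{t}K_0$ and Corollary~\ref{cor:psi-derivative}, while the initial condition follows from~\eqref{eq:K0-diracsequence} and the uniform convergence of Lemma~\ref{lem:uniform-propagator-convergence}. You merely spell out the cancellation bookkeeping and the domination estimate that the paper leaves implicit, and your algebra checks out.
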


\begin{proof}
It is clear that $\partial_tK_0(t,x;y)=\frac{\rmi}{2}\Delta K_0(t,x;y)$ and $\partial_xK_0(t,x;y)=\rmi\frac{x-y}{t}K_0(t,x;y)$. Moreover $\partial_t\psi(t,x;y)$ is given in Corollary~\ref{cor:psi-derivative}. Applying the product rule for differentiation to $K_V(t,x;y)=K_0(t,x;y)\cdot\psi(t,x;y)$ shows that $K_V(t,x;y)$ indeed fulfills $\rmi\partial_tK_V(t,x;y)=-\frac{1}{2}\Delta K_V(t,x;y)+V(x)K_V(t,x;y)$ for $t>0$ and $x,y\in\C$. If $u_0\colon\R\to\C$ is compactly supported and twice differentiable, it is known that~\eqref{eq:K0-diracsequence} holds. Due to Lemma~\ref{lem:uniform-propagator-convergence}, also~\eqref{eq:KV-diracsequence} holds.
\end{proof}

To take an initial state $u_0\colon\R\to\C$ into account we use the ansatz~\eqref{eq:def-solution-via-integral}.

\begin{theorem}\label{thm:u-solves-schroedinger-equation}
Let $u_0\colon\R\to\C$ be compactly supported and twice differentiable. Then $u$ as in~\eqref{eq:def-solution-via-integral} solves the Schrödinger equation with initial condition $\lim_{t\downarrow 0}u(t,x)=u_0(x)$.
\end{theorem}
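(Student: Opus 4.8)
The plan is to obtain the equation for $u$ by differentiating under the integral sign in~\eqref{eq:def-solution-via-integral} and invoking the pointwise Schrödinger equation for $K_V$ from Theorem~\ref{thm:KV-is-fundamental-solution}, while the initial condition will be exactly~\eqref{eq:KV-diracsequence}. Fix a compact set $K\subset\R$ containing the support of $u_0$, so that for every $t>0$ and $x\in\R$ the quantity
\[ u(t,x)=\int_K K_V(t,x;y)\,u_0(y)\,\rmd y \]
is a well-defined integral over a compact set, since $y\mapsto K_V(t,x;y)$ is continuous by Corollary~\ref{cor:KV-is-differentiable}.

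First I would record that $K_V$, $\partial_tK_V$, $\partial_xK_V$ and $\Delta K_V$ exist and are jointly continuous on $(0,\infty)\times\R^2$. Continuity of $K_V$ and of its first-order derivatives is Corollary~\ref{cor:KV-is-differentiable}; writing $K_V=K_0\cdot\psi$ and differentiating twice in $x$ with the product rule, continuity of $\Delta K_V$ follows from the smoothness of $K_0$ and Proposition~\ref{prop:psi-derivative-x} (applied with $n=1,2$), which provides continuity of $\partial_x\psi$ and $\partial_x^2\psi$. Consequently, for any compact $L\subset(0,\infty)\times\R$, each of $\lv K_V\rv$, $\lv\partial_tK_V\rv$, $\lv\partial_xK_V\rv$, $\lv\Delta K_V\rv$ is bounded by a finite constant over $(t,x)\in L$ and $y\in K$; as a constant on the compact set $K$ this is an $L^1$ function of $y$ dominating the integrand and its derivatives locally uniformly in $(t,x)$. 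By the classical theorem on differentiation under the integral sign, $u$ is therefore continuously differentiable once in $t$ and twice in $x$ on $(0,\infty)\times\R$, with
\[ \partial_tu(t,x)=\int_K\partial_tK_V(t,x;y)\,u_0(y)\,\rmd y,\qquad\Delta u(t,x)=\int_K\Delta K_V(t,x;y)\,u_0(y)\,\rmd y. \]

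Next I would combine this with the identity $\rmi\partial_tK_V(t,x;y)=-\tfrac12\Delta K_V(t,x;y)+V(x)K_V(t,x;y)$, valid for each fixed $y\in\R$, $t>0$, $x\in\R$ by Theorem~\ref{thm:KV-is-fundamental-solution}. Multiplying by $u_0(y)$, integrating over $K$, and moving the derivatives back outside the integral yields $\rmi\partial_tu(t,x)=-\tfrac12\Delta u(t,x)+V(x)u(t,x)$, i.e.\ $u$ solves~\eqref{eq:schroedinger-equation}. Finally, $\lim_{t\downarrow 0}u(t,x)=u_0(x)$ is precisely~\eqref{eq:KV-diracsequence}, which was established in the proof of Theorem~\ref{thm:KV-is-fundamental-solution} from~\eqref{eq:K0-diracsequence} and Lemma~\ref{lem:uniform-propagator-convergence}; this is where the twice-differentiability of $u_0$ is used.

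I do not expect a real obstacle here: because $u_0$ has compact support the integration runs over a fixed compact set, so the dominating functions needed for differentiation under the integral are simply suprema of continuous functions over compacta, and no decay estimate of $K_V$ in the variable $y$ is required. As the preceding Remark indicates, such a decay estimate would be exactly the missing ingredient if one wanted to drop the compact-support hypothesis, and it is what fails for super-quadratic potentials.
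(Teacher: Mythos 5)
Your proposal is correct and follows essentially the same route as the paper: differentiate under the integral (justified by compact support of $u_0$ together with continuity of the partial derivatives of $K_V$ and dominated convergence), insert the pointwise equation $\rmi\partial_tK_V=\bigl(-\tfrac{1}{2}\Delta+V(x)\bigr)K_V$ from Theorem~\ref{thm:KV-is-fundamental-solution}, and obtain the initial condition from~\eqref{eq:KV-diracsequence}. You merely spell out the domination argument and the continuity of $\Delta K_V$ (via $K_V=K_0\cdot\psi$ and Proposition~\ref{prop:psi-derivative-x}) in more detail than the paper does.
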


\begin{proof}
Let $a<b$ such that $u_0$ vanishes outside of $[a,b]$. Then
\begin{multline*}
\rmi\partial_tu(t,x)=\int_a^b\rmi\partial_tK_V(t,x;y)u_0(y)\,\rmd y\\
=\int_a^b\Bigl(-\frac{1}{2}\Delta+V(x)\Bigr)K_V(t,x;y)u_0(y)\,\rmd y=\Bigl(-\frac{1}{2}\Delta+V(x)\Bigr)u(t,x)
\end{multline*}
for all $t>0$ and $x\in\R$. Here interchange of differentiation and integration is possible due to continuity of the partial derivatives of $K_V$ and Lebesgue's dominated convergence theorem. Hence $u$ solves~\eqref{eq:schroedinger-equation}. Moreover $\lim_{t\downarrow 0}u(t,x)=u_0(x)$ holds for all $x\in\R$ by Theorem~\ref{thm:KV-is-fundamental-solution}.
\end{proof}

\begin{remark}
If $u_0$ is a finite linear combination of indicator functions of bounded intervals, then~\eqref{eq:K0-diracsequence} and~\eqref{eq:KV-diracsequence} hold for those points $x\in\R$ at which $u_0$ is continuous. This is due to the fact that for $-\infty<a<b<\infty$ it holds
\[ \lim_{t\downarrow 0}\int_\R K_0(t,x;y)\indi_{[a,b]}(y)\,\rmd y=\begin{cases}1 & \text{if }x\in(a,b),\\\frac{1}{2} & \text{if } x\in\{a,b\},\\0 & \text{if }x\not\in[a,b].\end{cases} \]
Consequently, a solution to~\eqref{eq:schroedinger-equation} is given as in~\eqref{eq:def-solution-via-integral} with an initial condition that coincides with $u_0$ in all but its finitely many points of discontinuity.
\end{remark}

Aside from the pointwise convergence in~\eqref{eq:K0-diracsequence} it is also known that
\[ \lim_{t\downarrow 0}\int_\R K_0(t,\cdot;y)u_0(y)\,\rmd y=u_0 \]
in $L^2(\R,\rmd x;\C)$ whenever $u_0\in L^1(\R,\rmd x;\C)\cap L^2(\R,\rmd x;\C)$. In particular we have weak convergence, namely
\begin{equation}\label{eq:weak-convergence-of-free-propagator}
\lim_{t\downarrow 0}\biggl(\int_\R K_0(t,\cdot;y)u_0(y)\,\rmd y,v\biggr)_{L^2(\R,\rmd x;\C)}=(u_0,v)_{L^2(\R,\rmd x;\C)}
\end{equation}
for all $v\in L^2(\R,\rmd x;\C)$. A weaker version of the latter may also be proven for $K_V(t,x;y)$. However, we do not know whether $\int_\R K_0(t,\cdot;y)u_0(y)\,\rmd y\in L^2(\R,\rmd x;\C)$, so we may not write the integral as an inner product.

\begin{proposition}
For all $u_0,v\colon\R\to\C$ which are bounded, measurable and compactly supported we have that
\[ \lim_{t\downarrow 0}\int_\R\int_\R K_V(t,x;y)u_0(y)\,\rmd y\,v(x)\,\rmd x=\int_\R u_0(x)v(x)\,\rmd x. \]
\end{proposition}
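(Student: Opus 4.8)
The plan is to reduce the statement to the free case by means of the uniform estimate already established in Lemma~\ref{lem:uniform-propagator-convergence}. Fix $t>0$ and choose a compact set $K\subset\R$ containing the supports of both $u_0$ and $v$. Since $\lv K_V(t,x;y)\rv=\lv K_0(t,x;y)\rv\,\lv\psi(t,x;y)\rv\le(2\pi t)^{-1/2}\rme^{ct}$ for all $x,y\in\R$ by Lemma~\ref{lem:boundedness-of-exponential-term}, and the same (even sharper) bound holds for $K_0$, the functions $(x,y)\mapsto K_V(t,x;y)u_0(y)v(x)$ and $(x,y)\mapsto K_0(t,x;y)u_0(y)v(x)$ are absolutely integrable on $\R^2$ for fixed $t>0$; hence the iterated integrals exist and Fubini's theorem may be used freely below.

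I would then split
\[ \int_\R\int_\R K_V(t,x;y)u_0(y)\,\rmd y\,v(x)\,\rmd x=\int_\R\int_\R K_0(t,x;y)u_0(y)\,\rmd y\,v(x)\,\rmd x+R(t), \]
where $R(t):=\int_\R\int_\R\bigl(K_V(t,x;y)-K_0(t,x;y)\bigr)u_0(y)v(x)\,\rmd y\,\rmd x$. Because $u_0$ and $v$ vanish outside $K$, Lemma~\ref{lem:uniform-propagator-convergence} gives
\[ \lv R(t)\rv\le\Bigl(\sup_{x,y\in K}\bigl\lv K_V(t,x;y)-K_0(t,x;y)\bigr\rv\Bigr)\int_\R\lv u_0(y)\rv\,\rmd y\int_\R\lv v(x)\rv\,\rmd x\xrightarrow{t\downarrow0}0. \]
For the remaining free term, $u_0$ being bounded, measurable and compactly supported lies in $L^1(\R,\rmd x;\C)\cap L^2(\R,\rmd x;\C)$ and $v\in L^2(\R,\rmd x;\C)$, so the weak convergence~\eqref{eq:weak-convergence-of-free-propagator} of the free propagator (applied to $\overline v$, with a complex conjugation if the sesquilinearity convention requires it) yields $\lim_{t\downarrow0}\int_\R\int_\R K_0(t,x;y)u_0(y)\,\rmd y\,v(x)\,\rmd x=\int_\R u_0(x)v(x)\,\rmd x$. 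Combining the two contributions proves the assertion.

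The argument has no serious obstacle; it is essentially a corollary of Lemma~\ref{lem:uniform-propagator-convergence} together with~\eqref{eq:weak-convergence-of-free-propagator}. The only subtlety worth emphasizing — and the reason the statement is formulated as a double integral rather than an $L^2$ limit as in the free case — is that $\int_\R K_V(t,\cdot;y)u_0(y)\,\rmd y$ need not be an element of $L^2(\R,\rmd x;\C)$, so one cannot express the $K_V$-pairing as an $L^2$ inner product; the decomposition above circumvents this by controlling the difference $K_V-K_0$ pointwise on the compact support and invoking the $L^2$ structure only for $K_0$, where $\int_\R K_0(t,\cdot;y)u_0(y)\,\rmd y$ does belong to $L^2$, being the image of $u_0$ under the unitary free Schrödinger evolution.
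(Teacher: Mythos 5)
Your proposal is correct and follows essentially the same route as the paper: split off the difference $K_V-K_0$, control it uniformly on the compact supports via Lemma~\ref{lem:uniform-propagator-convergence}, and conclude for the free term with~\eqref{eq:weak-convergence-of-free-propagator}. Your added remarks on Fubini, the conjugation convention, and why the statement is phrased as a double integral are fine elaborations of details the paper leaves implicit.
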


\begin{proof}
Due to Lemma~\ref{lem:uniform-propagator-convergence} it holds
\[ \lim_{t\downarrow 0}\int_\R\int_\R\bigl(K_V(t,x;y)-K_0(t,x;y)\bigr)u_0(y)\,\rmd y\,v(x)\,\rmd x=0. \]
Then the statement follows from~\eqref{eq:weak-convergence-of-free-propagator}.
\end{proof}

\begin{remark}
The class of admissible initial conditions considered in~\cite{Doss1980} consisted purely of analytic functions, which does not cover common initial states with compact support. We have shown in Theorem~\ref{thm:u-solves-schroedinger-equation} that solutions to~\eqref{eq:schroedinger-equation} are obtained by~\eqref{eq:def-solution-via-integral} for a large class of relevant non-analytic initial conditions.
\end{remark}

\begin{remark}
In~\cite{Yajima1996} it was shown that if the potential $V$ is super-quadratic, the fundamental solution to the Schrödinger equation~\eqref{eq:schroedinger-equation} is nowhere continuously differentiable. Since our class contains the super-quadratic $V(z)=z^6$, this result and differentiability of $K_V$ seem to explicitly contradict each other. However, in~\cite{Yajima1996} the term fundamental solution is understood as the distribution kernel of the unitary group generated by the unique self-adjoint extension of the operator $H$ on $L^2(\R,\rmd x;\C)$ defined by $Hf:=-\frac{1}{2}\Delta f+V\cdot f$ for infinitely differentiable functions $f\colon\R\to\C$ with compact support. Here rather than the operator semigroup approach we used path integrals to construct $K_V$, which generates a different solution.
\end{remark}

\end{subsection}

\end{section}

\bigskip

\noindent \textbf{Acknowledgment:}
We thank Anna Vogel who provided many ideas for computing the time derivative of $K_V$ in her Ph.D.\ thesis, see~\cite[Sec.~5.3]{Vogel2010}. The second author thanks both the German state Rhineland-Palatinate and the department of Mathematics at the University of Kaiserslautern for financial support in the form of a fellowship.


\begin{thebibliography}{DMMN77}

\bibitem[AHKM08]{AHKM08} S.~Albeverio, R.~J.~Høegh-Krohn, S.~Mazzucchi, \textit{Mathematical theory of Feynman path integrals: An introduction}, Lecture Notes in Mathematics, vol.~523, Springer, 2008.

\bibitem[BG13]{BG13} W.~Bock, M.~Grothaus, \textit{The Hamiltonian path integrand for the charged particle in a constant magnetic field as white noise distribution}, preprint, {\ttfamily arXiv:1307.3478 [math-ph]}.

\bibitem[DMMN77]{DMMN77} C.~DeWitt-Morette, A.~Maheshwari, B.~Nelson, \textit{Path integration in phase space}, Gen.\ Relativity Gravitation \textbf{8} (1977), 581--593.

\bibitem[DMMN79]{DMMN79} C.~DeWitt-Morette, A.~Maheshwari, B.~Nelson, \textit{Path integration in non-relativistic quantum mechanics}, Phys.\ Rep.\ \textbf{50} (1979), 255--372.

\bibitem[dSS04]{SS04} J.L.~da~Silva, L.~Streit, \textit{Feynman integrals and white noise analysis}, In Stochastic Analysis and Mathematical Physics, World Scientific, 2004, pp.\ 285--303.

\bibitem[Dos80]{Doss1980} H.~Doss, \textit{Sur une resolution stochastique de l'equation de Schrödinger à coefficients analytiques}, Commun.\ Math.\ Phys.\ \textbf{73} (1980), 247--264.

\bibitem[GRS14]{GRS14} M.~Grothaus, F.~Riemann and H.~P.~Suryawan, \textit{A white noise approach to the Feynman integrand for electrons in random media}, J.\ Math.\ Phys.\ \textbf{55} (2014), no.~1.

\bibitem[GSV12]{GSV12} M.~Grothaus, L.~Streit, A.~Vogel, \textit{The complex scaled Feynman--Kac formula for singular initial distributions}, Stochastics \textbf{84} (2012), 347--366.

\bibitem[HS83a]{HS83a} T.~Hida, L.~Streit, \textit{Generalized Brownian functionals and the Feynman integral}, Stoch.\ Proc.\ Appl.\ \textbf{16} (1983), 55--69.

\bibitem[HS83b]{HS83b} T.~Hida, L.~Streit, \textit{White noise analysis and its application to Feynman integral}, Measure theory and its applications, Lecture notes in mathematics, vol.~1033, Springer, 1983, 219--226.

\bibitem[HKPS93]{HKPS93} T.~Hida, H.-H.~Kuo, J.~Potthoff and L.~Streit, \textit{White noise: An infinite dimensional calculus}, Kluwer Academic Publishers, 1993.

\bibitem[KD82]{KD82} J.~R.~Klauder, I.~Daubechies, \textit{Measures for path integrals}, Phys.\ Rev.\ Lett.\ \textbf{48} (1982), 117--120.

\bibitem[Kle09]{Kleinert2009} H.~Kleinert, \textit{Path integrals in quantum mechanics, statistics, polymer physics, and financial markets}, World Scientific, 2009.

\bibitem[Kuo83]{Kuo83} H.-H.~Kuo, \textit{Brownian functionals and applications}, Acta Appl.\ Math.\ \textbf{1} (1983), 175--188.

\bibitem[Kuo96]{Kuo96} H.-H.~Kuo, \textit{White noise distribution theory}, CRC Press, 1996.

\bibitem[LLSW94]{LLSW94} A.~Lascheck, P.~Leukert, L.~Streit, W.~Westerkamp, \textit{More about Donsker's delta function}, Soochow J.\ Math.\ \textbf{20} (1994), 401--418.

\bibitem[Muj85]{Mujica85} J.~Mujica, \textit{Complex analysis in Banach spaces}, North-Holland, 1985.

\bibitem[PT95]{PT95} J.~Potthoff and M.~Timpel, \textit{On a dual air of spaces of smooth and generalized random variables}, Potential Anal.\ \textbf{4} (1995), 637--654.

\bibitem[Oba94]{Obata94} N.~Obata, \textit{White noise calculus and fock Space}, Springer-Verlag, 1994.

\bibitem[RS72]{RS72} M.~Reed and B.~Simon, \emph{Functional analysis}, Methods of Modern Mathematical Physics, vol.~1, Academic Press, 1972.

\bibitem[Sch71]{Schaefer71} H.~H.~Schaefer, \textit{Topological vector spaces}, Springer, 1971.

\bibitem[Vog10]{Vogel2010} A.~Vogel, \textit{A new Wick formula for products of white noise distributions and application to Feynman path integrands}, Ph.D.\ Thesis (2010), University of Kaiserslautern, Germany.

\bibitem[Wes95]{Westerkamp1995} W.~Westerkamp, \textit{Recent results in infinite dimensional analysis and applications to Feynman integrals}, Ph.D.\ Thesis (1995), University of Bielefeld, Germany.

\bibitem[Yaj96]{Yajima1996} K. Yajima, \textit{Smoothness and non-smoothness of the fundamental solution of time dependent Schrödinger equations}, Commun.\ Math.\ Phys.\ \textbf{181} (1996), 605--629.

\end{thebibliography}
\end{document}